\title{Spanner for the $0/1/\infty$ weighted region problem}
\author{Joachim Gudmundsson}{The University of Sydney}{joachim.gudmundsson@sydney.edu.au}{https://orcid.org/0000-0002-6778-7990
}{Funded by the Australian Government through the Australian Research Council DP180102870.}
\author{Zijin Huang}{The University of Sydney}{zijin.huang@uni.sydney.edu.au}{https://orcid.org/0000-0003-3417-5303}{}
\author{André van Renssen}{The University of Sydney}{andre.vanrenssen@sydney.edu.au}{https://orcid.org/0000-0002-9294-9947}{}
\date{October 2022}
\author{Sampson Wong}{The University of Copenhagen}{sawo@di.ku.dk}{https://orcid.org/0000-0003-3803-3804}{}
\authorrunning{J.\, Gudmundsson, Z.\, Huang, A.\, van Renssen and S.\, Wong}
\keywords{}
\newtheorem{problem}{Problem}
\newtheorem{fact}[theorem]{Fact}
\theoremstyle{remark}
\newcommand{\abse}[1]{\lVert #1 \rVert}
\newcommand{\abs}[1]{|#1|}
\newcommand{\e}{\varepsilon}
\def\NAT@spacechar{~}
\newcommand{\Frechet}{Fréchet }
\newcommand{\boundary}[1]{\partial #1}
\newcommand{\distance}[0]{\normalfont \textbf{d}}
\newcommand{\segment}[1]{#1}
\newcommand{\weight}[0]{\normalfont \textbf{w}}
\newcommand{\graph}[0]{\mathcal{G}}
\newcommand{\edges}[0]{\mathcal{E}}
\newcommand{\vertices}[0]{\mathcal{V}}
\newcommand{\map}[0]{\mathcal{M}}
\newcommand{\samplepoints}[0]{\mathcal{SP}}
\newcommand{\samplep}[0]{sp}
\newcommand{\anchor}[0]{ak}
\newcommand{\simple}[1]{\mathtt{simpl}(#1)}
\newcommand{\integers}[0]{\mathbb{Z}}
\newcommand{\bhopper}[0]{\mathcal{B}}
\newcommand{\obstacles}[0]{\mathcal{O}}
\newcommand{\zeroregion}[0]{\mathcal{Z}}
\newcommand{\bhopperConstructionTime}[0]{N + (n/\e^2) (\log(n/\e) + \log N)}
\newcommand{\bhopperDSSpace}[0]{N + n/\e^2}
\newcommand{\bhopperQueryTime}[0]{N + n/\e^2 + (n/\e) \log(n/\e)+ (\log N) /\e}
\newcommand{\bhopperConstructionTimeObstacles}[0]{N + (n/\e^3)(\log(n/\e) + \log N)}
\newcommand{\bhopperDSSpaceObstacles}[0]{N + n/\e^3}
\newcommand{\bhopperQueryTimeObstacles}[0]{N + n/\e^3 + (n/\e^2) \log(n/\e)  + (\log N)/\e}
\newcommand{\doublequote}[1]{``#1''}
\begin{document}
\maketitle

\keywords{weighted region problem, approximate shortest path, spanner}
\begin{abstract}
    We consider the problem of computing an approximate weighted shortest path in a weighted subdivision, with weights assigned from the set $\{0, 1, \infty\}$. We present a data structure $\bhopper$, which stores a set of convex, non-overlapping regions. These include zero-cost regions (0-regions) with a weight of $0$ and obstacles with a weight of $\infty$, all embedded in a plane with a weight of $1$. The data structure $\bhopper$ can be constructed in expected time $O(\bhopperConstructionTimeObstacles)$, where $n$ is the total number of regions, $N$ represents the total complexity of the regions, and $1 + \e$ is the approximation factor, for any $0 < \e < 1$. Using $\bhopper$, one can compute an approximate weighted shortest path from any point $s$ to any point $t$ in $O(\bhopperQueryTimeObstacles)$ time. In the special case where the 0-regions and obstacles are polygons (not necessarily convex), $\bhopper$ contains a $(1 + \e)$-spanner of the input vertices.

\end{abstract}

\section{Introduction}
The Weighted Region Problem (WRP) is a generalization of the shortest path problem, considering a planar subdivision $E$ where each face has a non-negative weight associated with it. A path $\sigma$ in $E$ can be partitioned into a set of subpaths $\{\sigma_1, ..., \sigma_k\}$ based on its intersection with faces in $E$, where a subpath $\sigma_i$ starts at the point $s_i$ and ends at the point $t_i$. Both $s_i$ and $t_i$ must lie on the boundary of the same face $F_i$. The weight of the subpath $\sigma_i$ is the Euclidean length of $\sigma_i$ times the weight assigned to $F_i$. The total weight of a path is the sum of the weights of its subpaths. The goal of the WRP is to find the weighted shortest path from a source point $s$ to a target point $t$. When the weights are in the set $\{0, 1, \infty\}$, this problem is referred to as the $0/1/\infty$ Weighted Region Problem~\cite{gewaliPathPlanningWeighted1988}.

Researchers have conjectured that the WRP is difficult to solve~\cite{gewaliPathPlanningWeighted1988}, and recent studies confirm this conjecture — the Weighted Region Problem is unsolvable in the algebraic computation model over the rational numbers. De Carufel et al.~\cite{decarufelNoteUnsolvabilityWeighted2014} demonstrated that the WRP cannot be solved exactly even with only three different weights. De Berg et al.~\cite{debergExactSolutionsWeighted2024} confirmed its unsolvability with just two different weights. Mitchell and Papadimitriou~\cite{mitchellWeightedRegionProblem1991} illustrated that in two dimensions, a weighted shortest path can intersect at least $\Omega(n^2)$ boundaries even when the regions are convex.

Due to the difficulty of solving the WRP exactly, approximation algorithms have been considered. A common approach is to discretize the problem space, either by assuming the space is a tessellation of convex polygons with exactly one associated weight~\cite{boseApproximatingShortestPaths2023a}, or by placing Steiner (sample) points on the boundaries of the regions~\cite{aleksandrovApproximationAlgorithmsGeometric2000, aleksandrovDeterminingApproximateShortest2005, chengTriangulationRefinementApproximate2014, lanthierApproximatingWeightedShortest1997a, sunFindingApproximateOptimal2006}. In these approaches, the number of sample points depends not only on the complexity of the regions but also on geometric parameters such as the maximum integer coordinate of any vertex and the ratio $r_w$ of the maximum weight over the minimum weight. As $r_w$ increases, so does the number of required sample points. As a result, the weights are required to be strictly positive.

\subsection{Related work}
Our work is closely related to the data structure and algorithm by Gewali et al.~\cite{gewaliPathPlanningWeighted1988} to solve the $0/1/\infty$ weighted region problem. Their algorithm takes a polygonal domain with $N$ vertices as input and constructs a \emph{critical graph} $\graph^* = (\vertices^*, \edges^*)$ (a type of visibility graph) with $O(N^2)$ edges. Dijkstra's shortest path algorithm can be used on $\graph^*$ to compute a weighted shortest path between any pair of vertices in $O(N^2)$ time.

In $0/1/\infty$ weighted regions, a weighted shortest path $P^*$ avoids obstacles and traverses the 0-regions freely, while minimizing its length in the plane ($1$-region). Consider two (closed) regions $A$ and $B$, each either a 0-region or an obstacle. The key observation in~\cite{gewaliPathPlanningWeighted1988} is that an edge in $P^*$ connecting $A$ and $B$ must be \emph{locally optimal} (see Fact~\ref{fac:types_of_edges}). For example, an edge $(a, b)$ connecting two convex 0-regions $A$ and $B$ must be perpendicular to the tangent touching $a \in A$ and the tangent touching $b \in B$. Gewali et al.~\cite{gewaliPathPlanningWeighted1988} showed that $G^*$ contains all such locally optimal edges in $G^*$, which implies that $G^*$ must contain the optimal path between any pair of vertices in $G^*$.




\subsection{Our Contribution}
In this paper, we build on the work by Gewali et al.~\cite{gewaliPathPlanningWeighted1988} with a focus on the $0$-regions as they are not handled well by existing approximation schemes (using sample points or tessellation). In Section~\ref{sec:01}, we consider the $0/1$ weighted region problem where the 0-regions are convex but not necessarily polygonal. 

\begin{problem} \label{prob:blob_hopper}
In the planar subdivision induced by the plane with weight $1$ and a set $\zeroregion$ of non-overlapping convex zero-cost regions (0-regions) with weight $0$, given an approximation error $0 < \e < 1$, find a $(1 + \e)$-approximate weighted shortest path from an arbitrary point $s$ to an arbitrary point $t$. 
\end{problem}

The high-level idea is that, in order to obtain $(1 + \e)$-approximate shortest paths, we place $O(1/\e)$ sample points on the boundary of each $0$-region; the number of sample points is independent of other parameters. Using these sample points, we construct a $\Theta$-graph and $O(1/\e)$ trapezoidal maps, which are part of our data structure $\bhopper$. The trapezoidal maps ensure the existence of good paths between $0$-regions that are close\footnote{A precise definition is provided in Lemma~\ref{lem:tmap} and \ref{lem:theta_graph}, Section~\ref{sec:01}.} to each other, while the $\Theta$-graph ensures the same for $0$-regions that are far from each other. To the best of our knowledge, our algorithm is the first near-linear time $(1 + \e)$-approximation algorithm that finds an approximated weighted shortest path in a $0/1$ weighted region. Our algorithm is near-optimal, as a weighted shortest path can have $\Omega(n + N)$ complexity.

\begin{restatable}{theorem}{zeroOneSummarised} \label{thm:01}
    Consider a planar subdivision induced by a plane with weight 1, containing a set $\zeroregion$ of non-overlapping convex 0-regions with weight 0. Let $\abs{\zeroregion} = n$ and $N$ denote the total number of vertices in $\zeroregion$. For any approximation factor $0 < \e < 1$, a data structure $\bhopper$ can be constructed over $\zeroregion$ in $O(\bhopperConstructionTime)$ expected time, with a total size of $O(\bhopperDSSpace)$. When queried with points $s$ and $t$, $\bhopper$ can return a weighted path $P$ from $s$ to $t$ in $O(\bhopperQueryTime)$ time, satisfying $\weight(P) \leq (1 + \e) \cdot \weight(P^*)$, where $P^*$ is the optimal weighted shortest path from $s$ to $t$.
\end{restatable}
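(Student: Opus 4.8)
The plan is to build $\bhopper$ out of three pieces and show each is cheap to build and query, and that together they preserve $(1+\e)$-approximate weighted shortest paths. The three pieces are: (i) a set $\samplepoints$ of $O(1/\e)$ sample points on the boundary of each 0-region (so $|\samplepoints| = O(n/\e)$); (ii) a $\Theta$-graph on $\samplepoints$ with $O(1/\e)$ cones, which has $O((n/\e)\log(n/\e))$ edges and handles pairs of 0-regions that are far apart; and (iii) a family of $O(1/\e)$ trapezoidal maps, each built from a shooting direction, that handle pairs of 0-regions that are close to each other. The key structural fact I would lean on is the locally-optimal-edge characterization (Fact~\ref{fac:types_of_edges}) inherited from Gewali et al.: an optimal path $P^*$ alternates between travelling ``for free'' through 0-regions and travelling in straight segments in the 1-region, and each such straight segment connecting two 0-regions $A$ and $B$ is perpendicular to a common tangent configuration. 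The goal of the discretization is to show that for every such locally optimal segment there is a nearby witness path in $\bhopper$ whose length exceeds it by at most a $(1+\e)$ factor.

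\textbf{Step 1 (path decomposition and reduction to segments).} I would first decompose $P^*$ at its contacts with the 0-regions into maximal 1-region segments $e_1, \dots, e_k$ and the (free) 0-region portions between them. Since 0-region traversal costs nothing, $\weight(P^*) = \sum_i \abse{e_i}$, and it suffices to charge each $e_i$ (an edge between two 0-regions, or between $s/t$ and a 0-region) to a path in $\bhopper$ of length at most $(1+\e)\abse{e_i}$ that starts and ends on the appropriate 0-regions, so that these witness paths can be concatenated through the free 0-region interiors. The endpoint of $e_i$ and the start of $e_{i+1}$ both lie on the same region and are joined for free, so concatenation does not accumulate error.

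\textbf{Step 2 (near vs.\ far dichotomy and the two substructures).} For a segment $e_i = (a,b)$ with $a$ on 0-region $A$ and $b$ on 0-region $B$, I split on whether $A$ and $B$ are ``close'' (the threshold is set so that $\e\abse{e_i}$ is at least some constant times the spacing of the $O(1/\e)$ sample points on $A$ and $B$, measured appropriately). If they are close, I invoke Lemma~\ref{lem:tmap}: one of the $O(1/\e)$ trapezoidal maps — the one whose shooting direction is within angle $O(\e)$ of the direction of $e_i$ — gives a trapezoid whose boundary, together with short detours to sample points, yields a path of length $(1+\e)\abse{e_i}$. If they are far, I invoke Lemma~\ref{lem:theta_graph}: rounding $a$ and $b$ to nearby sample points $a', b'$ costs $O(\e \abse{e_i})$ since the sample spacing is small relative to $\abse{e_i}$, and the standard $\Theta$-graph argument on $\samplepoints$ (with $O(1/\e)$ cones) then produces an $s$--$t$ path among sample points of length $(1+\e)\abse{a'b'}$. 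In both cases I must check that the witness path stays in the 1-region or 0-regions and never enters a 0-region ``the wrong way'' — but since entering a 0-region is free and leaving it only helps, this is not an obstacle.

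\textbf{Step 3 (handling $s$, $t$, and the query).} The points $s$ and $t$ are not sample points, so at query time I locate $s$ and $t$ in each of the $O(1/\e)$ trapezoidal maps and in the $\Theta$-graph structure, connect them to $O(1/\e)$ relevant sample points (and to each other directly if the straight segment $st$ avoids obstacles — here trivially, there are no obstacles in the $0/1$ case), and then run Dijkstra on the resulting augmented graph, which has $O(n/\e)$ vertices and $O((n/\e)\log(n/\e))$ edges, giving the claimed $O(n/\e^2 + (n/\e)\log(n/\e) + \dots)$ query time; the $N$ and $(\log N)/\e$ terms come from reading the input and from point location within the total-complexity-$N$ region boundaries. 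The construction-time and space bounds follow by summing the costs of building $\samplepoints$ (linear scan over the $N$ vertices plus $O(n/\e)$ sampling), the $\Theta$-graph (standard randomized/sweep construction in $O((n/\e)\log(n/\e))$), and the $O(1/\e)$ trapezoidal maps (each $O((N + n/\e)\log(\cdot))$ expected via randomized incremental construction), which is where the expected-time qualifier enters.

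\textbf{Main obstacle.} The crux is Step 2: proving that the $O(1/\e)$ discretization — sample points, cone count, and shooting directions — is simultaneously fine enough that \emph{every} locally optimal edge of \emph{any} optimal path has a $(1+\e)$ witness, yet coarse enough to keep the size near-linear. Concretely, the delicate estimate is the interaction between the angular error (from the $\Theta$-cones / shooting directions, of order $\e$) and the positional error (from rounding to sample points on a curved convex boundary), and showing these combine to a single $(1+\e)$ factor rather than $(1+\e)^2$ or worse; this requires carefully choosing constants in the near/far threshold and in the number of sample points. This is exactly what Lemmas~\ref{lem:tmap} and~\ref{lem:theta_graph} are set up to provide, so the proof of Theorem~\ref{thm:01} itself should reduce to assembling those two lemmas plus the decomposition argument and the bookkeeping for $s$, $t$, and the complexity bounds.
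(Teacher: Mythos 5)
Your proposal matches the paper's approach in all essential respects: the same three components (extreme-point sampling in $O(1/\e)$ directions, a $\Theta$-graph on the sample points, and $O(1/\e)$ trapezoidal maps), the same decomposition of $P^*$ into 1-region segments that are individually charged, the same near/far dichotomy governed by Lemmas~\ref{lem:tmap} and~\ref{lem:theta_graph}, and the same Dijkstra-based query after inserting $s$ and $t$ as degenerate 0-regions. One small slip: with $O(n/\e)$ sample points and $O(1/\e)$ cones the $\Theta$-graph has $O(n/\e^2)$ edges rather than $O((n/\e)\log(n/\e))$, but your stated query bound still includes the $n/\e^2$ term so the final complexity is unaffected.
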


To use our algorithm on an application, we proved the above theorem in a more general setting, where the 0-regions are non-polygonal. In Section~\ref{sec:weak_frechet}, we use our algorithm to approximate the partial weak \Frechet similarity of two polygonal curves. This problem was first studied by Buchin et al.~\cite{buchinExactAlgorithmsPartial2009}, and they presented a cubic time algorithm. De Carufel et al.~\cite{decarufelSimilarityPolygonalCurves2014} later transformed the problem into a weighted shortest path problem amidst $0/1$-regions. Using Theorem~\ref{thm:01}, our algorithm is the first near-quadratic time $(1 + \e)$-approximation algorithm for computing the partial weak Fr\'echet similarity between a pair of polygonal curves. 

Buchin et al.~\cite{buchinSETHSaysWeak2019} showed that there is no strongly subquadratic time algorithm for approximating the weak \Frechet distance within a factor less than 3 unless the strong exponential-time hypothesis fails. Approximating the partial weak \Frechet similarity is at least as hard as approximating the weak \Frechet distance. As a result, it is unlikely that a subquadratic time algorithm exists, and our algorithm is near-optimal. 
\begin{restatable}{theorem}{weakFrechet} \label{lem:weakF_to_01_regions}
    One can approximate the partial weak \Frechet similarity of two curves with respect to the $L_2$ metric within a factor of $(1 + \e)$ in $O((n^2/\e^2)\log(n/\e))$ expected time.
\end{restatable}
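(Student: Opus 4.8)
The plan is to reduce the partial weak \Frechet similarity computation to a weighted shortest path query in a $0/1$ weighted region, and then invoke Theorem~\ref{thm:01}. First I would recall the construction of De Carufel et al.~\cite{decarufelSimilarityPolygonalCurves2014}: given two polygonal curves with $n$ vertices each, one builds the free-space diagram, which is partitioned into $O(n^2)$ cells by the vertices of the two curves. For a fixed threshold $\delta$, the free-space region (points where the two curves are within distance $\delta$) within each cell is convex — indeed it is the intersection of the cell with an ellipse — and these free-space regions become the $0$-regions, while the plane outside carries weight $1$. A monotone (or here, not-necessarily-monotone, since we want the weak \Frechet variant) path through the diagram from the bottom-left to the top-right corner then corresponds to a reparametrization pair, and its weighted length measures the portion of the path spent outside the free space, i.e.\ the ``dissimilarity.'' So the partial weak \Frechet similarity is $1 + \e$-approximated once we have a $(1+\e)$-approximate weighted shortest path from $s = $ bottom-left corner to $t = $ top-right corner in this subdivision.

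The second step is a parametric-search / binary-search over the threshold $\delta$. The combinatorial structure of the free-space diagram (which pairs of edges see non-empty intersections, and the adjacency of the resulting $0$-regions) changes only at $O(n^2)$ critical values of $\delta$, so I would sort these critical values and binary-search among them, at each step building $\bhopper$ for the corresponding $0/1$ subdivision and querying it for $s,t$. Here $n$ (the number of regions in Theorem~\ref{thm:01}) and $N$ (the total complexity) are both $O(n^2)$ in the notation of the \Frechet problem, and we take the approximation parameter to be a suitable constant fraction of $\e$. Plugging $n, N = O(n^2)$ into the construction time $O(\bhopperConstructionTime)$ and query time $O(\bhopperQueryTime)$ from Theorem~\ref{thm:01} gives $O((n^2/\e^2)\log(n/\e))$ per threshold, and an $O(\log n)$-factor for the binary search; after absorbing logarithms this matches the claimed $O((n^2/\e^2)\log(n/\e))$ expected time. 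I should double-check whether the $0$-regions here are genuinely convex (they are, being ellipse-cell intersections) so that Theorem~\ref{thm:01} applies directly, and handle the degenerate cells (where an edge of a curve is a single vertex, or where the free space touches a cell boundary) as boundary cases.

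The main obstacle I expect is being careful about what ``partial weak \Frechet similarity'' means as an optimization quantity and verifying that a weighted shortest path in the $0/1$ diagram actually computes it — in particular, that the weak (non-monotone) variant corresponds exactly to allowing the path in the free-space diagram to be an arbitrary curve rather than a monotone one, and that the ``partial'' aspect (matching only a sub-curve) is captured correctly by the choice of endpoints $s$ and $t$, or by allowing $s,t$ to range over the appropriate boundary segments of the diagram. This is essentially where we must lean on the reduction of~\cite{decarufelSimilarityPolygonalCurves2014}, so the work is to state precisely which of their lemmas we are using and confirm that it is compatible with an approximate (rather than exact) weighted shortest path oracle; the approximation is benign here because the weighted length depends continuously on the path, so a $(1+\e)$-approximate path yields a $(1+\e)$-approximate similarity value. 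A secondary technical point is bounding the number and computability of the critical $\delta$ values and confirming that the binary search converges to the correct combinatorial interval, which is standard but should be stated.
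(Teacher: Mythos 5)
Your reduction to a $0/1$ weighted shortest path query in the free-space diagram and the observation that each $0$-region is the (convex) intersection of an ellipse with a cell are exactly the paper's argument, and plugging $n, N = O(n^2)$ with constant-complexity cells into Theorem~\ref{thm:01} is the right way to obtain the time bound. However, your second step — a parametric/binary search over the threshold $\delta$ — is a genuine misreading of the problem. The partial (weak) Fr\'echet similarity is defined for a \emph{fixed, given} leash length $d$: it asks how much of the two curves can be matched under that specific threshold, with the integral $\similarity_{f,g}$ taken over the set $\{t : \distance_p(\curvepi(f(t)), \curvesigma(g(t))) \le d\}$. There is no optimization over $d$, so there is no family of free-space diagrams to search, and the paper builds $\bhopper$ exactly once for the single diagram $\freespaceDiagram{\curvepi}{\curvesigma}{d}$.

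Beyond being unnecessary, the binary search would also break the claimed complexity. You would rebuild $\bhopper$ from scratch at each of $\Theta(\log n)$ critical values, giving $O((n^2/\e^2)\log(n/\e)\log n)$; the extra $\log n$ is a multiplicative factor on top of $\log(n/\e)$ and cannot be \emph{absorbed} into it, so the stated $O((n^2/\e^2)\log(n/\e))$ bound would not follow. Two smaller points: the endpoints $s$ and $t$ are the fixed bottom-left and top-right corners of the diagram — the "partial" aspect is captured by which portions of the path lie inside the free space, not by allowing $s,t$ to range over boundary segments — and it is worth being explicit (as the paper is, via De Carufel et al.'s MinEx reformulation) that the quantity actually computed by the shortest-path oracle is the length of the path inside the \emph{forbidden} space, with the monotonicity requirement dropped for the weak variant.
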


In Section~\ref{sec:01infty}, we generalise our data structure to also allow convex obstacles that cannot be traversed, i.e., obstacles of weight $\infty$. By introducing additional sample points, we show that if we need to take a detour from a sample point $a$ to a sample point $b$, there exists a set $D$ (a detour) of edges in $\bhopper$ such that the total length of $D$ approximates the distance $\distance(a, b)$ within a factor of $1 + \e$. In the special case that the 0-regions and obstacles are polygonal, $\bhopper$ is a $(1 + \e)$-spanner of the input vertices. To the best of our knowledge, our algorithm is the first near-linear time $(1 + \e)$-approximation algorithm for the weighted shortest path in a $0/1/\infty$ weighted region.
\begin{restatable}{theorem}{zeroOneInftySummarised} 
    Consider a planar subdivision induced by a plane with a weight of $1$, consisting of two sets of convex and non-overlapping regions: 0-regions $\zeroregion$ with a weight of $0$, and obstacles $\obstacles$ with a weight of $\infty$. Let $n = |\zeroregion| + |\obstacles|$ and let $N$ denote be the total number of vertices in $\zeroregion \cup \obstacles$. For any approximation factor $0 < \e < 1$, a data structure $\bhopper$ can be constructed over $\zeroregion \cup \obstacles$ in $O(\bhopperConstructionTimeObstacles)$ expected time, with a total size of $O(\bhopperDSSpaceObstacles)$. When queried with arbitrary points $s$ and $t$, $\bhopper$ returns a weighted path $P$ from $s$ to $t$ in $O(\bhopperQueryTimeObstacles)$ time, ensuring that $\weight(P) \leq (1 + \e) \cdot \weight(P^*)$, where $P^*$ is the optimal weighted shortest path from $s$ to $t$.
\end{restatable}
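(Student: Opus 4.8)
The plan is to bootstrap from the $0/1$ data structure of Theorem~\ref{thm:01} and add just enough structure to ``see around'' the obstacles. Concretely, I would place $\Theta(1/\e)$ sample points on the boundary of \emph{every} region, now including every obstacle in $\obstacles$, and then, on top of the $\Theta$-graph and the $O(1/\e)$ trapezoidal maps from Section~\ref{sec:01}, build an obstacle-aware augmentation: for every pair of sample points $a, b$ that a shortest path might want to connect by a straight segment, but whose segment is intercepted by one or more obstacles, I would add to $\bhopper$ a \emph{detour} $D$ --- a short chain of edges that hugs the blocking obstacles --- with total length at most $(1+\e)\,\distance(a,b)$. The detour edges can be obtained from a visibility/corridor spanner of the sample points amid the (convex) obstacles; keeping each corridor ``thin'' and charging its cost to the straight segment it replaces is what forces the extra $1/\e$ factor over Theorem~\ref{thm:01}, giving $O(n/\e^3)$ total objects and hence the stated $O(\bhopperDSSpaceObstacles)$ size.

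For correctness I would argue exactly as in the $0/1$ case, with one extra replacement step. Take the optimal path $P^*$; by the $0/1/\infty$ structure it avoids every obstacle, traverses every 0-region at no cost, and by Fact~\ref{fac:types_of_edges} its maximal segments in the $1$-region are locally optimal. Partition $P^*$ at the boundaries of the regions it meets. Each 0-region traversal and each free-space segment that does not interact with an obstacle is handled verbatim by Lemma~\ref{lem:tmap} and Lemma~\ref{lem:theta_graph}, producing a path $P$ in $\bhopper$ through sample points whose weight would be at most $(1+\e)\,\weight(P^*)$ if no obstacle were present. The only segments of $P$ that may fail to be edges of $\bhopper$ are those blocked by obstacles; for each such blocked segment from $a$ to $b$ I substitute the detour $D$ added during construction, of length at most $(1+\e)\,\distance(a,b)$. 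Since $P^*$ itself already routes around exactly these obstacles, a locality argument --- each obstacle (or cluster of obstacles) is responsible for a bounded, essentially disjoint portion of $P^*$, and detours around disjoint clusters do not overlap --- shows that the sum of the $\distance(a_i,b_i)$ over all blocked segments is at most $(1+O(\e))$ times the corresponding portion of $\weight(P^*)$. Rescaling $\e$ by a constant turns the resulting $(1+O(\e))$ stretch into $(1+\e)$.

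The complexity bounds then follow as in Theorem~\ref{thm:01} with $n/\e^2$ replaced by $n/\e^3$: the $\Theta$-graph, the $O(1/\e)$ trapezoidal maps, and the obstacle corridors are each built by randomized incremental construction in $O(\bhopperConstructionTimeObstacles)$ expected time (the $\log N$ terms coming from locating sample points inside regions of total complexity $N$), and a query runs Dijkstra on the resulting graph $\graph$ of $O(N + n/\e^3)$ edges after point-locating $s$ and $t$, in $O(\bhopperQueryTimeObstacles)$ time. When the 0-regions and obstacles are additionally polygonal, every detour can be routed along obstacle edges and every locally optimal segment of $P^*$ has its endpoints at input vertices or sample points, so the same argument applied to pairs of input vertices shows that $\bhopper$ contains a $(1+\e)$-spanner of the input vertices.

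I expect the detour analysis to be the crux: one must design the detour edges so that (i) every blocked straight segment has a detour of multiplicative length $1+\e$ inside $\bhopper$; (ii) the detours used along a single approximating path $P$ are charged to essentially disjoint parts of $P^*$, so the global stretch stays $(1+\e)$ rather than degrading with the number of obstacles crossed; and (iii) the number of detour edges stays near-linear in $n/\e^3$. Reconciling the ``thin corridor'' requirement with the fact that a single locally optimal segment of $P^*$ may graze many obstacles is the delicate point; everything else is a re-run of the $0/1$ proof with $\e$ rescaled by an absolute constant.
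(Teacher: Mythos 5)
Your proposal correctly anticipates the overall shape of the construction (more sample points, an obstacle-aware $\Theta$-graph, $O(n/\e^3)$ edges, Dijkstra over the union), and it also correctly identifies that the delicate point is showing that a single blocked segment of $P^*$ admits a $(1+\e)$-detour inside $\bhopper$. But the mechanism you propose for that detour — an explicit ``corridor'' or ``visibility spanner'' edge set with one detour $D$ per blocked pair $(a,b)$, together with a global ``disjoint-cluster'' locality argument — is not what makes the paper's argument go through, and as stated it has two concrete gaps.

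First, explicitly instantiating a detour for \emph{every} pair of sample points whose connecting segment is blocked is $\Theta(n^2/\e^2)$ pairs in the worst case, which breaks the claimed near-linear size unless you say why a sparse structure suffices. The paper achieves sparsity by introducing \emph{propagated sample points} (ray hits on other regions from each original sample point) and \emph{tangent sample points} (touch points of common tangents between pairs of obstacles that are near each other), and then relying on Clarkson's $\Theta$-graph-with-obstacles together with the trapezoidal maps; the ``detour'' is not a stored object but a path implied by these new sample points. Second, and more fundamentally, the structure of an optimal path amidst obstacles is different from the $0/1$ case — Fact~\ref{fac:types_of_edges} says the segments of $P^*$ are now one of four specific locally optimal types (perpendicular between 0-regions, perpendicular-meets-tangent between a 0-region and an obstacle, a common tangent of two obstacles, or an arc of an obstacle boundary). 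Your claim that Lemma~\ref{lem:tmap} and Lemma~\ref{lem:theta_graph} ``handle verbatim'' the unblocked segments ignores types 2--4 entirely; the paper proves three new per-case lemmas (Lemma~\ref{lem:0_to_0_with_obstacles}, Lemma~\ref{lem:obs_to_obs}, Lemma~\ref{lem:0_to_obs}) plus a boundary-walking bound (Lemma~\ref{lem:boundary_over_segment}) to cover them, and these arguments depend crucially on the propagated/tangent sample points to rule out obstacles blocking the rotated or slid segments. Finally, the paper does not need your global locality/disjointness accounting at all: because every segment of $P^*$ is handled independently with a multiplicative factor and 0-region traversals are free, the per-segment bounds simply sum, so the ``cluster'' argument — which you acknowledge is the part you haven't worked out — is a detour (pun intended) that the paper never takes. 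Without the case analysis of Fact~\ref{fac:types_of_edges} and the new sample-point types, the proposal does not give a route to the theorem.
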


\section{Shortest path amidst 0-regions} \label{sec:01}
The exact version of Problem~\ref{prob:blob_hopper} has a brute-force solution. Given two 0-regions $A$ and $B$, where $s$ and $t$ are considered 0-regions with no interior, let $\distance(A, B)$ be the Euclidean distance between $A$ and $B$. Let $\graph_c = (\vertices, \edges_c)$ be a complete graph, where $\vertices = \zeroregion \cup \{s, t\}$. For each edge $(A, B) \in \edges_c$ for all pairs of $A \in \vertices_c$ and $B \in \vertices_c$, set the weight $\weight(A, B) = \distance(A, B)$. Then, finding the optimal path $P^*$ from $s$ to $t$ is equivalent to finding a weighted shortest path $P^*$ in $\graph_c$, which can be solved using Dijkstra's shortest path algorithm. However, the total number of edges required is at least $\Omega(n^2)$.

To compute an approximate solution, the goal is to construct an undirected weighted graph $\mathcal{G} = (\mathcal{V}, \mathcal{E})$, with a near-linear number of edges, such that there exists a path $P$ in $\graph$ with $\weight(P) \leq (1 + \e) \cdot \weight(P^*)$. 

To this end, we will use two data structures: trapezoidal maps and $\Theta$-graphs. Both data structures are used to determine which pairs of 0-regions are connected. The trapezoidal maps will ensure that there exist good paths between 0-regions that are close to each other, while the $\Theta$-graph ensures the same for $0$-regions that are far from each other. 

\subsection{Construction of the data structure} \label{sec:01_summary_of_bhopper}
In order to define our data structure, we first define a set of directions. Let $\theta < \pi/6$ be a fixed positive real number. Let $r(k\theta)$ be the direction with a counter-clockwise angle of $k\theta$ with the positive $x$-axis. Let $r(p, k\theta)$ be the ray originating from the point $p$ with a counter-clockwise angle of $k\theta$ with the positive $x$-axis. For simplicity, we write $r(k) = r(k\theta)$ and $r(p, k) = r(p, k\theta)$. To simplify the discussion, we will assume that $(\pi/2)/\theta \in \integers$ to guarantee that if $r(k)$ exists, then so does $r(k\theta + \pi/2)$. 

For a 0-region $A$, we define a set $\samplepoints(A)$ of sample points on the boundary of $A$. Let $\samplep(A, k\theta) = \samplep(A, k)$ be a sample point on the boundary $\boundary A$ of $A$ such that $\samplep(A, k)$ is extreme in the direction $r(k)$ (see Figure~\ref{fig:trapezoidal_map}). When the geometric region $A$ is clear from context, we write $\samplep(k)$ instead of $\samplep(A, k)$.  

\begin{figure}[tbh]
    \centering
    \includegraphics[scale=0.75]{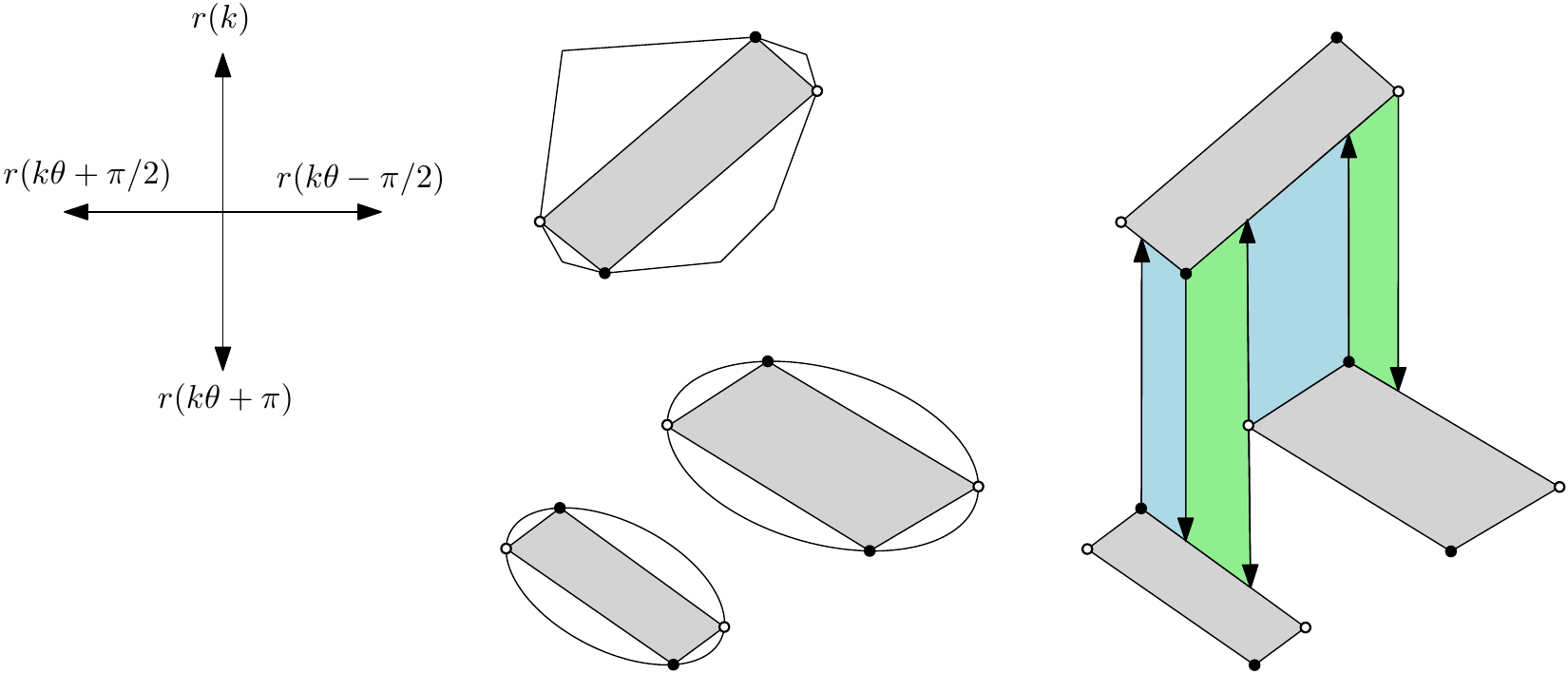}
    \caption{The sample points that are extreme in the directions of $r(k)$ and $r(k\theta + \pi)$ are marked with black dots. The sample points that are extreme in the directions of $r(k\theta \pm \pi/2)$ are marked with circles. Using these sample points, we generate a simplified polygon and construct $\map(k)$; the blue and green regions are examples of faces in $\map(k)$.}
    \label{fig:trapezoidal_map}
\end{figure}

Each 0-region $A$ is considered an open set, i.e., if $p \in \boundary{A}$, then $p \notin A$. Let $\boundary{A}(a, a')$ define the subset of $\boundary{A}$ traversed from point $a$ to point $a'$ in counter-clockwise order, where $a, a' \in \boundary{A}$. We say a line $l$ \emph{overlaps} $A$ if $l$ and $A$ intersect at more than one point. We say two regions, $A$ and $B$, are non-overlapping if their interiors do not intersect.

A point $p \in \boundary{A}$ can be an extreme point for more than one direction, in which case we call $p$ a vertex of $A$. There may be more than one extreme point on $A$ for a single direction. If $p$ is the extreme point on $A$ for consecutive directions $\{r(k), r(k + 1), ... , r(k + m)\}$, we say $p = \samplep(k)$, $p = \samplep(k + 1)$, ..., and $p = \samplep(k + m)$ simultaneously. If there is more than one extreme point for a single direction $r(k)$, these extreme points must lie on some segment $\segment{ab} \subseteq \boundary{A}$, and we say both $a$ and $b$ are $\samplep(k)$. The sample points on the boundary of a convex region can be computed by traversing the boundary. 

\begin{observation} \label{obs:sample_points_on_convex}
    Given $n$ convex regions with $N$ vertices in total, there are $O(n/\theta)$ sample points, and it takes $O(N + n/\theta)$ time to compute them.
\end{observation}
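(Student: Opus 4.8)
The plan is to treat each convex region separately and then sum over the $n$ regions. Fix a convex region $A$ and let $N_A$ denote its number of vertices, so that $\sum_A N_A = N$. There are exactly $(2\pi)/\theta = O(1/\theta)$ directions $r(k)$ with $k\theta \in [0, 2\pi)$, and each such direction contributes at most two sample points to $\samplepoints(A)$ (a single extreme vertex, or the two endpoints of a boundary edge that is perpendicular to the direction and hence shares the extreme value). Consequently $\abs{\samplepoints(A)} = O(1/\theta)$, and summing over the $n$ regions gives $O(n/\theta)$ sample points in total, which establishes the counting part of the statement.

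For the time bound, I would compute $\samplepoints(A)$ by a single rotating traversal of $\boundary{A}$, in the spirit of the rotating-calipers technique. List the vertices of $A$ in counter-clockwise order $v_0, v_1, \dots, v_{N_A - 1}$ (available from the input, or obtainable by a boundary walk in $O(N_A)$ time), locate $\samplep(A, 0)$ by a linear scan, and then increase $k$ by one step at a time. The key monotonicity fact is that, as $r(k)$ rotates through a full turn, the extreme point on $\boundary{A}$ advances monotonically in counter-clockwise order and returns to its starting position after exactly one loop. Hence, to pass from $\samplep(A, k)$ to $\samplep(A, k+1)$ it suffices to walk forward along $\boundary{A}$ from the current position; across all $O(1/\theta)$ increments of $k$ the walk traverses each boundary edge only $O(1)$ times, for a total of $O(N_A + 1/\theta)$ work for region $A$. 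Degenerate situations — a vertex that is extreme for several consecutive directions, or an edge extreme for a single direction — are detected during this walk, and the labels $\samplep(A, k), \samplep(A, k+1), \dots$ are assigned to the vertex (or to both edge endpoints) exactly as prescribed in the text, without affecting the asymptotic cost. Summing $O(N_A + 1/\theta)$ over the $n$ regions yields the claimed $O(N + n/\theta)$ total time.

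The main obstacle is a clean treatment of the degenerate cases together with a careful statement of the monotone-advance property, so that the traversal provably touches each boundary edge only a constant number of times even when many directions share an extreme vertex or an extreme edge. A secondary point is what ``vertex'' means for a possibly non-polygonal convex region: either one works directly with the polygonal boundary (as the phrasing ``$N$ vertices'' suggests), in which case the argument above applies verbatim, or one assumes an oracle returning the extreme point of $A$ in a queried direction, in which case the $O(N_A)$ term is replaced by $O(1/\theta)$ oracle calls and the bound is unchanged in form.
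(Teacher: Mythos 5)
Your proposal is correct and follows essentially the same approach as the paper: the paper dispatches this observation with a single remark (``The sample points on the boundary of a convex region can be computed by traversing the boundary''), and your rotating-calipers traversal with monotone advance of the extreme point is precisely the standard realization of that remark, with the counting, the degenerate cases, and the non-polygonal caveat all handled sensibly.
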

\newcommand{\extremep}[0]{E\mathcal{SP}}

Using the sample points on a 0-region $A$, we can generate a simplified $0$-region (a convex polygon) $\simple{A}$ by connecting adjacent sample points of every 0-region (see Figure~\ref{fig:trapezoidal_map}). Using the set $\simple{\zeroregion}$ of simplified 0-regions, we will generate a set of trapezoidal maps, and we say $\simple{A}$ and $\simple{B}$ are adjacent if they are both adjacent to the same face in a trapezoidal map. We construct the query data structure $\bhopper$ using Algorithm~\ref{alg:01_construct}.
\begin{algorithm}[tbh]
\caption{Construct $\bhopper$ with 0-regions} \label{alg:01_construct}
\ \\
This algorithm takes as input a set $\zeroregion$ of non-overlapping and convex 0-regions, and constructs a data structure $\bhopper$. The undirected graph $\graph = (\vertices, \edges)$ is initially empty.

    \begin{enumerate}
    \item Compute the sample points $\samplepoints(\zeroregion)$, and add $\samplepoints(\zeroregion)$ to $\vertices$. 
    \item Pick an arbitrary sample point as the anchor $\anchor(A)$ for every $A \in \zeroregion$. For each $p \in \samplepoints(A)$, add $e = (p, \anchor(A))$ to $\edges$, and set $\weight(e) = 0$.
    \item For each direction $r(k)$, generate a trapezoidal map $\map(k)$ using $\simple{\zeroregion}$, and do the following for each $\map(k)$ (see \cite[Theorem 6.3 and 6.8]{debergComputationalGeometryAlgorithms2008} for trapezoidal map construction). 
        \begin{itemize}
            \item For each face $F \in \map(k)$ adjacent to $A$ and $B$, $A \neq B$, add $e = (\anchor(A), \anchor(B))$ to $\edges$, and set $\weight(e) = \distance(A, B)$. 
        \end{itemize}
    \item With $\vertices$ as the input, generate a $\Theta$-graph $\graph_\Theta = (\vertices, \edges_\Theta)$. 
        \begin{itemize}
            \item For each edge $(p, q) \in \edges_\Theta$, if $p$ and $q$ do not belong to the same 0-region, add $e = (p, q)$ to $\edges$, and set $\weight(e) = \abse{pq}$.
        \end{itemize}
    \item Return $\bhopper = \{\map(k) \mid \forall k \in [0, 2\pi/\theta), k \in \mathbb{Z} \} \cup \{\graph, \graph_\Theta\}$ as the data structure.
\end{enumerate}
\end{algorithm}

\subsubsection{Analysis}
We bound the size of the data structure $\bhopper$ and its construction time. In Step~1, by Observation~\ref{obs:sample_points_on_convex}, it takes $O(N + n/\theta)$ time to compute all $O(n/\theta)$ sample points. Step~2 takes $O(n/\theta)$ time to connect every sample point to its respective anchor. In Step~3, it takes $O((n/\theta^2) \log (n/\theta))$ expected time~\cite{debergComputationalGeometryAlgorithms2008} to build the trapezoidal map $\map(k)$ over $\simple{\zeroregion}$ for each of the $O(1/\theta)$ directions. In every $\map(k)$, we construct at most three edges in $\edges$ per sample point $a = \samplep(A, k\theta + \pi/2)$, if $r(a, k)$ and $r(a, k\theta + \pi)$ hit different 0-regions. Therefore $\abs{\map(k)} \in O(n/\theta)$, and in total $\sum\limits_{k} \map(k) = O(n/\theta^2)$. Once $\map(k)$ is constructed for all $k$, using the algorithm by Edelsbrunner~\cite{edelsbrunnerComputingExtremeDistances1985}, it takes $O(\log N)$ time to compute the shortest distance between two convex 0-regions for each of the $O(n/\theta^2)$ faces. In Step~4, it takes $O((n/\theta^2) \log (n/\theta))$ time to construct a $\Theta$-graph using $O(n/\theta)$ sample points, and $O(1/\theta)$ cones~\cite{narasimhanGeometricSpannerNetwork2007}. Step~4a takes $O(n/\theta^2)$ time, since $\abs{\edges_\Theta} = O(n/\theta^2)$. 

Later, we will show that for an approximation factor $\e$, we have that $\theta \in O(\e)$. The resulting complexities are summarised below. 
\begin{restatable}{lemma}{BhopperConstruction} \label{lem:blob_hopper_time_and_space}
    Given an approximation factor $0 < \e < 1$, and $n$ non-overlapping convex 0-regions with total complexity $N$, one can build the data structure $\bhopper$ in $O(\bhopperConstructionTime)$ time, and the total size of $\bhopper$ is $O(\bhopperDSSpace)$. 
\end{restatable}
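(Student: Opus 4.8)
The plan is to aggregate, over the steps of Algorithm~\ref{alg:01_construct}, the per-step running-time and space estimates already derived in the analysis that precedes this lemma, and then to substitute the relation $\theta = \Theta(\e)$ that will be established later in this section. No new ideas are needed; the work is an accounting of terms and a check that nothing along the way is larger than claimed.

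For the construction time I would argue step by step. Step~1 costs $O(N + n/\theta)$ by Observation~\ref{obs:sample_points_on_convex}, and Step~2 costs $O(n/\theta)$ to connect each sample point to its anchor. For Step~3, $\simple{\zeroregion}$ is a family of $n$ interior-disjoint convex polygons with $O(n/\theta)$ edges in total, so for each of the $O(1/\theta)$ directions the trapezoidal map $\map(k)$ has complexity $O(n/\theta)$ and is built in $O((n/\theta)\log(n/\theta))$ expected time by randomized incremental construction; summing over directions gives $O((n/\theta^2)\log(n/\theta))$, after which one extreme-distance query of cost $O(\log N)$ for each of the $O(n/\theta^2)$ faces contributes a further $O((n/\theta^2)\log N)$. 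Step~4 builds a $\Theta$-graph on the $O(n/\theta)$ sample points with $O(1/\theta)$ cones in $O((n/\theta^2)\log(n/\theta))$ time, and Step~4a then filters its $O(n/\theta^2)$ edges in $O(n/\theta^2)$ time. Adding these contributions yields $O\!\left(N + (n/\theta^2)(\log(n/\theta) + \log N)\right)$.

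For the space bound I would note that the $O(1/\theta)$ trapezoidal maps occupy $O(n/\theta^2)$ space in total; the graph $\graph$ stores $O(n/\theta)$ anchor edges, $O(1)$ edges per sample point per map, and the surviving $\Theta$-graph edges, hence $O(n/\theta^2)$ edges; $\graph_\Theta$ itself has $O(n/\theta^2)$ edges; and storing the input regions (needed to answer queries) adds $O(N)$, for a total of $O(N + n/\theta^2)$. Substituting $\theta = \Theta(\e)$ turns the two bounds into $O(\bhopperConstructionTime)$ and $O(\bhopperDSSpace)$ respectively, which is the claim. The only part that is not pure bookkeeping is the bound $|\map(k)| = O(n/\theta)$ and the resulting $O(n/\theta^2)$ face/edge count: these rest on the standard fact that the trapezoidal map of interior-disjoint convex polygons has size linear in their total number of edges, together with a check that the degeneracies permitted by our sample-point definition (several extreme points for one direction, one point extreme for several directions) do not increase the map's complexity. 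I expect that verification, rather than the summation itself, to be the only point that needs care.
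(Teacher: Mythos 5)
Your proposal matches the paper's own analysis essentially line by line: the same per-step time bounds (Observation~\ref{obs:sample_points_on_convex} for Step~1, $O((n/\theta^2)\log(n/\theta))$ for the $O(1/\theta)$ trapezoidal maps, $O(\log N)$ per face for the Edelsbrunner queries, $O((n/\theta^2)\log(n/\theta))$ for the $\Theta$-graph), the same $O(n/\theta^2)$ total face/edge count, and the same substitution $\theta = \Theta(\e)$ at the end. The only cosmetic difference is that the paper justifies $|\map(k)| = O(n/\theta)$ by counting at most three $\edges$-edges per sample point $\samplep(A, k\theta + \pi/2)$, whereas you invoke the general linear-size bound for trapezoidal maps of non-crossing segments; these are the same fact.
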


To the best of our knowledge, constructing a trapezoidal map using arcs has not been studied before. Therefore, in Algorithm~\ref{alg:01_construct}, we used the well-studied result on the trapezoidal map construction using segments. 

For each direction $r(k)$, we built a trapezoidal map $\map(k)$ using the simplified 0-regions $\simple{\zeroregion}$, see Step~3 of Algorithm~\ref{alg:01_construct}. Consider the union of $\zeroregion$ and $\simple{\zeroregion}$. We argue that constructing a trapezoidal map using the simplified regions captures the structure of the trapezoidal map if we had used the subboundaries of the original regions instead. To do this, we show that if a ray $r = r(p, k)$ originating from a sample point $p$ on 0-region $A$ hits a subboundary $\boundary{B}(u, v)$ first, then $r$ must hit the segment $\segment{uv}$ next. 

First, the 0-regions are non-overlapping, so $A$ cannot intersect the region enclosed by $\boundary{B}(u, v)$ and $\segment{uv}$. Second, without loss of generality, let $r(p, k)$ coincide with the $y$-axis. By construction, if $r(k)$ exists, then $r(k\theta + \pi/2)$ and $r(k\theta - \pi/2)$ exist, which implies that there exist two sample points: one is the leftmost point of $A$, and one is the rightmost. Therefore, the horizontal span of $\boundary{B}(u, v)$ and $\segment{uv}$ are equal. A ray shooting in the direction $r(k)$ cannot hit $\boundary{B}(u, v)$ first but misses $\segment{uv}$. 

\begin{observation} \label{lem:first_hit_true}
    If a ray $r = r(p, k)$ hits a subboundary $\boundary{A}(u, v)$ first, then $r$ must hit $\segment{uv}$ next.
\end{observation}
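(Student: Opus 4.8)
The plan is to turn the two informal remarks preceding the statement into a clean argument built around the \emph{cap} region cut off by the subboundary. Fix consecutive sample points $u,v$ of the convex $0$-region $A$, and write $\segment{uv}$ for the corresponding edge of $\simple{A}$ and $R$ for the closed region enclosed by the arc $\boundary{A}(u,v)$ and the chord $\segment{uv}$. After a rotation we may assume the direction $r(k)$ points along the positive $y$-axis, so $r(p,k)$ is a vertical ray. First I would record that $R$ is convex with $\boundary{R}=\boundary{A}(u,v)\cup\segment{uv}$: if $\ell$ denotes the line through $u$ and $v$, then $\segment{uv}$ is an edge of the convex polygon $\simple{A}$, so all other vertices of $\simple{A}$ lie in one of the closed half-planes bounded by $\ell$; $R$ is the intersection of $A$ with the other (closed) half-plane, and is therefore convex, and since the $0$-regions are non-overlapping, the interior of $R\subseteq A$ meets no other region. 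If $\boundary{A}(u,v)$ happens to be a straight segment, then $\segment{uv}=\boundary{A}(u,v)$ and the claim is immediate, so we assume otherwise.

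The key step is to show that the arc $\boundary{A}(u,v)$ is $x$-monotone, and this is precisely where the assumption $(\pi/2)/\theta\in\integers$ enters. Since $r(k)$ is the direction of the positive $y$-axis, the directions $r(k\theta+\pi/2)$ and $r(k\theta-\pi/2)$ are the directions of the negative and positive $x$-axes, and by the assumption both of them generate sample points of $A$ --- namely a leftmost and a rightmost point of $A$. These points therefore belong to the set of sample points defining $\simple{A}$; because $u$ and $v$ are \emph{consecutive} sample points, neither of them lies in the relative interior of $\boundary{A}(u,v)$. Consequently $x$ has no interior turning point along the arc, so it is monotone from $u$ to $v$; hence a vertical line meets $\boundary{A}(u,v)$ in at most one point, and the orthogonal projections onto the $x$-axis of $\boundary{A}(u,v)$ and of $\segment{uv}$ are the same interval (with endpoints $x(u)$ and $x(v)$). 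Any vertical segment of $\boundary{A}$ would be the leftmost or rightmost extent of $A$, hence would consist of sample points and be excluded from the relative interior of the arc, so the degenerate ``vertical overlap'' case does not arise here.

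Finally I would assemble the pieces. Because $r$ meets $\boundary{A}(u,v)$ before any other subboundary, and $\boundary{A}$ is the union of the subboundaries of $A$, the point $q_1\in\boundary{A}(u,v)$ where $r$ first meets the arc is where $r$ crosses $\boundary{A}$; a neighbourhood of an interior arc point lies on the $R$-side of $\ell$, so just past $q_1$ the ray lies in the interior of $R$. Since $r$ is unbounded and $R$ is bounded, $r$ leaves $R$ at a point $q_2\neq q_1$ of $\boundary{R}=\boundary{A}(u,v)\cup\segment{uv}$; by the $x$-monotonicity of the arc, the vertical ray cannot meet $\boundary{A}(u,v)$ a second time, so $q_2\in\segment{uv}$. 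Moreover the open portion of $r$ strictly between $q_1$ and $q_2$ lies in the interior of $R$, which is disjoint from every other region and contains no simplified edge except along $\segment{uv}$; therefore $r$ meets nothing between $q_1$ and $q_2$, i.e.\ $\segment{uv}$ is the next object $r$ hits. The main obstacle is exactly the second paragraph: ruling out the possibility that $r$ re-exits through the arc and leaves $R$ without ever touching $\segment{uv}$, which is what the monotonicity argument --- and hence the directional assumption on $\theta$ --- is there to prevent.
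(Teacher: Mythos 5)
Your proof is correct and follows essentially the same route as the paper: exploit that $(\pi/2)/\theta\in\integers$ forces the leftmost and rightmost points of the hit region to be sample points, so the arc between the consecutive sample points $u$ and $v$ has the same horizontal span as the chord $\segment{uv}$, and combine this with non-overlapping of regions (so the cap meets nothing else). Your write-up simply makes explicit the $x$-monotonicity of the arc and the ``enter at $q_1$, exit at $q_2\in\segment{uv}$'' step that the paper leaves compressed into a single sentence.
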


With the data structure defined, we analyse the quality of the path we obtain from $\bhopper$.
\subsection{Trapezoidal map} \label{sec:01_tmap_ellipse}
Before arguing that there exists a good path using the edges constructed, we start with an observation about the pair of points realising the shortest distance between two 0-regions. For a convex region $A$ and a point $p \in \boundary{A}$, there exists at least one \textit{supporting line} $l = l_t(A, p)$ going through $p$ such that $A$ lies entirely in one of the two halfplanes determined by $l$~\cite{toponogovDifferentialGeometryCurves}. Let $p \in \boundary{A}$, and $q \in \boundary{B}$. We can observe that if $\segment{pq}$ realises $\distance(A, B)$, then $\segment{pq}$ must be perpendicular to a pair of supporting lines $l_t(A, p)$ and $l_t(B, q)$. 

\begin{observation} \label{obs:shortest_connection_perpendicular_to_tangent}
    Let $A$ and $B$ be two convex regions. Let $\segment{pq}$ be the line segment realising $\distance(A, B)$, where $p \in A$ and $q \in B$. The segment $\segment{pq}$ must be perpendicular to a pair of supporting lines $l_t(A, p)$ and $l_t(B, q)$.
\end{observation}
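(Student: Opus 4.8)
The plan is to deduce this from the standard variational (``obtuse angle'') characterisation of the nearest point of a convex set. First I would note that since $\overline{pq}$ realises $\distance(A,B)$, the point $p$ is a closest point of (the closure of) $A$ to $q$, and symmetrically $q$ is a closest point of $B$ to $p$; since the regions are bounded, their closures are compact and this infimum is indeed attained. It then suffices to prove the following one-sided claim and apply it twice, once to $(A,p,q)$ and once to $(B,q,p)$: if $p$ minimises the distance to a fixed point $q$ over a convex region $A$, then the line $\ell$ through $p$ perpendicular to $\overline{pq}$ supports $A$, with $A$ lying in the closed halfplane bounded by $\ell$ that does not contain $q$; this $\ell$ can then serve as $l_t(A,p)$.

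To establish the one-sided claim I would argue by contradiction. Suppose some $p'$ in the closure of $A$ lies strictly on the $q$-side of $\ell$, i.e. $\langle p'-p,\, q-p\rangle > 0$. By convexity the segment from $p$ to $p'$ stays in the closure of $A$; writing $p(t) = p + t(p'-p)$ for $t \in [0,1]$, a one-line computation gives $\frac{d}{dt}\,\abse{p(t)-q}^2 = -2\langle p'-p,\, q-p\rangle < 0$ at $t=0$, so for all sufficiently small $t > 0$ the point $p(t)$ is strictly closer to $q$ than $p$ is, contradicting the minimality of $p$. Hence no such $p'$ exists, which is precisely the assertion that $\ell$ supports $A$. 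Applying this with the roles of the two regions swapped yields the perpendicular supporting line at $q$ as well.

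Two minor points will need care in the write-up. First, $A$ is treated as an open set elsewhere in the paper, so throughout I would work with the closures $\overline A$ and $\overline B$ and only observe at the end that the realising points satisfy $p\in\boundary A$ and $q\in\boundary B$ (an interior point could be perturbed to strictly shorten the connection). Second, a supporting line need not be unique, which is why the statement only claims the existence of \emph{some} pair $l_t(A,p)$, $l_t(B,q)$; the construction above produces one such pair explicitly. The only non-bookkeeping step is the derivative computation, and it is entirely routine; nothing in the argument uses polygonality, so the observation holds verbatim for the general (possibly curved) convex 0-regions considered later.
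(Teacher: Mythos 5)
The paper states this as an unproved Observation: it merely recalls that convex regions have supporting lines and then asserts the perpendicularity without argument, so there is no ``paper proof'' to compare against. Your variational argument correctly supplies the missing justification. The key step --- computing $\tfrac{d}{dt}\abse{p(t)-q}^2\big|_{t=0} = -2\langle p'-p,\ q-p\rangle$ along the chord $p(t)=p+t(p'-p)$ and invoking convexity to keep $p(t)\in\overline{A}$ --- is exactly the standard obtuse-angle characterisation of nearest points to a convex set, and applying it symmetrically at $q$ finishes the claim. Your two cautionary remarks are well placed: the paper treats regions as open, so passing to closures (and noting the minimiser lies on $\partial A$, else a small perturbation shortens $\segment{pq}$) is needed; and the statement only asserts existence of \emph{some} supporting line, which your construction produces explicitly. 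The observation that the argument is metric rather than combinatorial --- no polygonality is used --- is also relevant, since the paper applies this to non-polygonal convex 0-regions.
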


We will show that if $\segment{pq}$ realises the distance between two 0-regions, we can transform $\segment{pq}$ into another segment $\segment{pq'}$ such that $\segment{pq'}$ is parallel to some direction $r(k)$, and $\abse{pq'}$ approximates $\abse{pq}$. To do this, we first need a fact (see Appendix~\ref{app:lemma_rotation} for a proof). 

\newcommand{\pqprimeoverpq}[0]{\mathchoice
    {\frac{\cos(\beta)}{\cos(\theta)}}
    {(\cos(\beta)/\cos(\theta))}
    {(\cos(\beta)/\cos(\theta))}
    {(\cos(\beta)/\cos(\theta))}
}

\newcommand{\qqprimeoverpq}[0]{\mathchoice
    {\frac{\sin(\alpha)}{\cos(\theta)}}
    {(\sin(\alpha)/\cos(\theta))}
    {(\sin(\alpha)/\cos(\theta))}
    {(\sin(\alpha)/\cos(\theta))}
}

\begin{restatable}{lemma}{lemRotation} \label{lem:cost_of_rotation}
    Given a segment $\segment{pq}$, let $\alpha$ (resp. $\beta$) be the acute angle between $\segment{pq}$ and $r(p, k + 1)$ (resp. $r(p, k)$), where $\alpha + \beta = \theta < \pi/6$. Let $q'$ be the intersection of $r(p, k + 1)$ and $r(q, k\theta + \pi/2)$. We have that $\abse{pq'} = \pqprimeoverpq \cdot \abse{pq}$, and $\abse{qq'} = \qqprimeoverpq \cdot \abse{pq}$. 
\end{restatable}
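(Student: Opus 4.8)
The plan is to set up coordinates so that the direction $r(p,k)$ is the positive $x$-axis, hence $r(p,k+1)$ makes angle $\theta$ with it and $r(q,k\theta+\pi/2)$ is the vertical direction. Place $p$ at the origin. Since $\beta$ is the acute angle between $\segment{pq}$ and $r(p,k)$, the point $q$ lies at $q = \abse{pq}\,(\cos\beta,\sin\beta)$ (taking the half-plane above the $x$-axis without loss of generality; the symmetric case is identical). The ray $r(p,k+1)$ is the line through the origin with direction $(\cos\theta,\sin\theta)$, and $r(q,k\theta+\pi/2)$ is the vertical line $x = \abse{pq}\cos\beta$. Their intersection $q'$ therefore has $x$-coordinate $\abse{pq}\cos\beta$ and lies on the line $y = x\tan\theta$, so $q' = \abse{pq}\cos\beta\,(1,\tan\theta)$.

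From this explicit description the two claims are immediate computations: $\abse{pq'} = \abse{pq}\cos\beta\sqrt{1+\tan^2\theta} = \abse{pq}\cos\beta/\cos\theta$, which is the first identity. For the second, $\segment{qq'}$ is the vertical segment between $y$-coordinates $\abse{pq}\sin\beta$ and $\abse{pq}\cos\beta\tan\theta$, so $\abse{qq'} = \abse{pq}\,\lvert \cos\beta\tan\theta - \sin\beta\rvert$; using $\theta = \alpha+\beta$ and the identity $\cos\beta\sin\theta - \sin\beta\cos\theta = \sin(\theta-\beta) = \sin\alpha$, this equals $\abse{pq}\sin\alpha/\cos\theta$. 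Since $\theta<\pi/6$ forces $\alpha,\beta\in[0,\theta]\subset[0,\pi/6)$, all the quantities $\cos\beta$, $\cos\theta$, $\sin\alpha$ are nonnegative and the absolute values resolve with the stated signs.

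The only point needing a little care is verifying that $\segment{qq'}$ is genuinely parallel to $r(k\theta+\pi/2)$ as stated, i.e.\ that $q'$ lies on the correct side of $q$ and that the configuration is the one the figure intends; this is handled by the case analysis on which side of $r(p,k)$ the segment $\segment{pq}$ falls, and by noting that $\beta<\theta$ guarantees $r(p,k+1)$ separates $r(p,k)$ from $\segment{pq}$, so the intersection point $q'$ exists and is where claimed. I do not anticipate any real obstacle here — the lemma is a planar trigonometry identity, and the whole argument is the coordinate computation above together with the bookkeeping of the acute-angle conventions. The detailed sign-chasing is deferred to Appendix~\ref{app:lemma_rotation} as the authors indicate.
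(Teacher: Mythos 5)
Your proof is correct and establishes both identities. The route differs in presentation from the paper's: the paper argues synthetically, applying the law of sines in the triangle $\triangle{pqq'}$ after reading off $\measuredangle{pqq'} = \pi/2 + \beta$ and $\measuredangle{pq'q} = \pi/2 - \theta$ from the figure, whereas you fix an explicit coordinate frame with $r(p,k)$ along the positive $x$-axis, obtain $q = \abse{pq}(\cos\beta, \sin\beta)$ and $q' = \abse{pq}\cos\beta\,(1, \tan\theta)$, and extract both lengths by direct computation (the second via the identity $\cos\beta\sin\theta - \sin\beta\cos\theta = \sin\alpha$). In content the two arguments are equivalent elementary trigonometry; they differ only in whether the angle bookkeeping happens at the level of the triangle or of coordinates. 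Your coordinate version has the small advantage of making explicit why $q'$ lies on the claimed side of $q$ along the vertical ray (namely, $\cos\beta\tan\theta > \sin\beta$ whenever $\theta > \beta \geq 0$), a point the paper leaves to the figure; the paper's version is slightly terser once the angles are correctly identified. Either is a complete proof of the lemma.
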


\begin{proof}
    \begin{figure}[!tbh]
        \centering
        \includegraphics[scale=0.75]{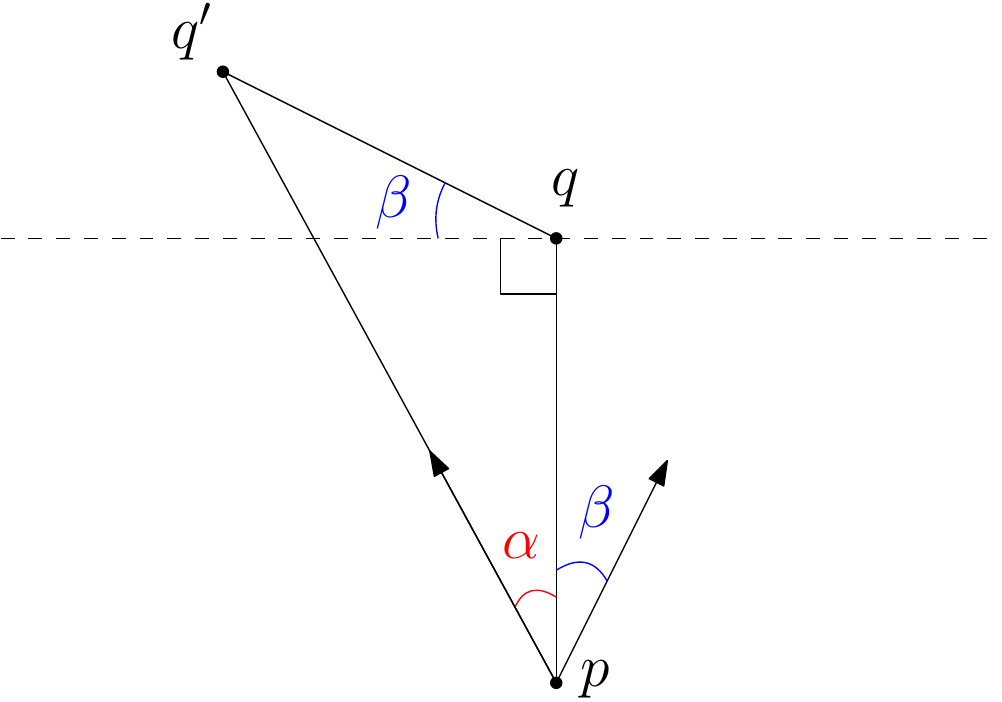}
        \caption{The construction in Lemma~\ref{lem:cost_of_rotation}. The segment $\abse{pq}$ lies between two directions, and $\alpha$ and $\beta$ are the respective angles, where $\alpha + \beta = \theta < \pi/6$}
        \label{fig:rotation}
    \end{figure}

    See Figure~\ref{fig:rotation} for the construction. By the law of sines, we have 
    \begin{align*}
        \frac{\abse{pq'}}{\sin(\measuredangle{pqq'})} = \frac{\abse{pq}}{\sin(\measuredangle{pq'q})}
        \implies \abse{pq'} = \sin(\measuredangle{pqq'}) \cdot \frac{\abse{pq}}{\sin(\measuredangle{pq'q})}.
    \end{align*}
    
    Filling in our angles, we get
    \begin{align*}
        \abse{pq'} = \frac{\sin(\frac{\pi}{2} + \beta) \abse{pq}}{\sin(\frac{\pi}{2} - \alpha - \beta)}
        = \frac{\cos(\beta)}{\cos(\theta)} \cdot \abse{pq}.
    \end{align*}

    Similarly, by the law of sines, we have
    \begin{align*}
        \frac{\abse{qq'}}{\sin(\measuredangle{qpq'})} = \frac{\abse{pq}}{\sin(\measuredangle{pq'q})} 
        \implies \abse{qq'} = \sin(\measuredangle{qpq'}) \cdot \frac{\abse{pq}}{\sin(\measuredangle{pq'q})}.
    \end{align*}

    Filling in our angles, we get
    \begin{align*}
        \abse{qq'} = \frac{\sin(\alpha) \abse{pq}}{\sin(\frac{\pi}{2} - \alpha - \beta)} 
        = \frac{\sin(\alpha)}{\cos(\theta)} \cdot \abse{pq}.
    \end{align*}

    The proof is complete.
\end{proof}


Let $\segment{pq}$ realise $\distance(A, B)$. We consider the scenario when $\abse{pq}$ is relatively small compared to the horizontal span of (say) $B$. Using the above lemma, we will show that we have constructed a set of edges in $\edges$ connecting two sample points $a \in A$ and $b \in B$, such that the total weight of these edges approximates $\abse{pq}$. 

\begin{lemma} \label{lem:tmap}
    \sloppy Let $\segment{pq} \subseteq P^*$, and let $\segment{pq}$ realise $\distance(A, B)$, where $p \in A$ and $q \in B$. Let $p \in \boundary{A}(a, a')$, and $q \in \boundary{B}(b', b)$, where points $a$ and $a'$ (resp. $b$ and $b'$) are adjacent sample points on 0-region $A$ (resp. $B$). If $\max\{\abse{pa'}, \abse{qb}\} \geq \qqprimeoverpq \cdot \abse{pq}$ or $\max\{\abse{pa},\abse{qb'}\} \geq (\sin(\beta)/\cos(\theta)) \cdot \abse{pq}$, then there exists a path $P \subseteq \edges$ from $A$ to $B$ such that $\weight(P) \leq \pqprimeoverpq \cdot \abse{pq}$.
\end{lemma}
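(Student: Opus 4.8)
The plan is to exhibit an explicit short path $P \subseteq \edges$ connecting $A$ to $B$, and then bound its weight. The path will consist of three parts: a zero-cost part from the anchor $\anchor(A)$ to a suitable sample point $a^* \in \samplepoints(A)$, a single $\Theta$-graph-like hop across the gap, and a zero-cost part back to $\anchor(B)$. Since the first and third parts are free (weight $0$ edges added in Step 2 of Algorithm~\ref{alg:01_construct}), the entire weight of $P$ is carried by the one crossing edge, so the task reduces to finding one cheap crossing edge in $\edges$. I would split into the two cases given by the hypothesis's disjunction.

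First consider the case $\max\{\abse{pa'}, \abse{qb}\} \geq \qqprimeoverpq \cdot \abse{pq}$; say $\abse{qb} \geq \qqprimeoverpq \cdot \abse{pq}$ (the symmetric subcase is analogous after swapping the roles of the two rotation directions). By Observation~\ref{obs:shortest_connection_perpendicular_to_tangent}, $\segment{pq}$ is perpendicular to supporting lines at $p$ and $q$; let $r(k)$ be the direction among the $O(1/\theta)$ fixed directions that is closest to the direction of $\segment{pq}$, so the angle $\theta' $ between them is at most $\theta/2 < \theta$, and split $\theta = \alpha+\beta$ as in Lemma~\ref{lem:cost_of_rotation}, where $q'$ is the intersection of $r(p, k+1)$ with $r(q, k\theta+\pi/2)$. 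Lemma~\ref{lem:cost_of_rotation} gives $\abse{pq'} = \pqprimeoverpq\cdot\abse{pq}$ and $\abse{qq'} = \qqprimeoverpq\cdot\abse{pq}$. The point is that $q'$ lies on the segment of $\boundary B$ between $b'$ and $b$: the assumed lower bound $\abse{qb} \geq \qqprimeoverpq\cdot\abse{pq} = \abse{qq'}$, together with the fact that $q$ lies between $b'$ and $b$ on $\boundary B$ and $q'$ is obtained by moving from $q$ toward the $b$ side along $\boundary B$ by exactly $\abse{qq'}$, forces $q' \in \boundary B(b', b)$. Then by Observation~\ref{lem:first_hit_true}, the ray $r(p, k+1)$ — equivalently the ray $r(a^*, k+1)$ for the extreme sample point $a^* = \samplep(A, (k+1)\theta)$, which lies at or beyond $p$ along that direction on $\boundary A$ by convexity — hits the segment of $\simple{B}$ spanning $b', b$. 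Hence in the trapezoidal map $\map(k)$ (or $\map(k+1)$, whichever direction matches the construction in Step 3, where the rays shot are $r(a, k)$ for $a$ extreme in $r(k\theta+\pi/2)$), the faces adjacent to $\simple A$ and $\simple B$ exist, so the edge $(\anchor(A), \anchor(B))$ of weight $\distance(A,B) = \abse{pq} \le \pqprimeoverpq\cdot\abse{pq}$ was added to $\edges$, and we are done: $P$ is just (free edges) $+ (\anchor(A),\anchor(B)) + $ (free edges).

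The second case, $\max\{\abse{pa},\abse{qb'}\} \geq (\sin(\beta)/\cos(\theta))\cdot\abse{pq}$, is handled the same way but rotating $\segment{pq}$ the \emph{other} way — toward $r(k)$ rather than $r(k+1)$ — so that the analog of $q'$ moves off $q$ by $(\sin(\beta)/\cos(\theta))\cdot\abse{pq}$ toward the $b'$ side, and the corresponding extreme sample point of $A$ is $\samplep(A, k\theta)$; the resulting rotated length is $(\cos(\alpha)/\cos(\theta))\cdot\abse{pq} \le \pqprimeoverpq\cdot\abse{pq}$ since in this case $\beta \ge \alpha$ can be arranged by choosing the rotation direction to be the one where the closer fixed direction sits, or more simply both $\cos(\alpha)/\cos(\theta)$ and $\cos(\beta)/\cos(\theta)$ are at most $1/\cos(\theta)$ and one checks the stated bound $\pqprimeoverpq$ is the larger/relevant one. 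In either case the crossing edge has weight $\distance(A,B) \le \pqprimeoverpq\cdot\abse{pq}$.

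The main obstacle I expect is the bookkeeping that ties the rotation picture of Lemma~\ref{lem:cost_of_rotation} to the actual rays shot during the trapezoidal-map construction in Step 3 of Algorithm~\ref{alg:01_construct}: the construction shoots rays from the sample points $\samplep(A, k\theta+\pi/2)$ in direction $r(k)$, and I must verify that the rotated segment $\segment{pq'}$ being parallel to $r(k{+}1)$ and having its far endpoint on the $\simple B$-segment spanning $b',b$ genuinely implies that the relevant sample point of $A$ and $\simple B$ share a face in the appropriate $\map(\cdot)$ — this uses convexity of $A$ (to argue the extreme sample point of $A$ in the rotated direction is "no farther" than $p$, so its ray is not blocked by $A$ itself) and Observation~\ref{lem:first_hit_true} (to argue the ray hits the \emph{segment} $\overline{b'b}$, not just the arc $\boundary B(b',b)$). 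The length estimate itself is immediate from Lemma~\ref{lem:cost_of_rotation} and the monotonicity $\distance(A,B)=\abse{pq}$; the geometric incidence argument is where the care is needed.
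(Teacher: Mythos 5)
There is a genuine gap, and the approach you sketched would fail in an important case. Your plan delivers $A$ to $B$ via a single crossing edge $(\anchor(A),\anchor(B))$, and you justify this edge's existence by arguing that the rotated segment $\segment{pq'}$ "directly" connects $A$ and $B$ in some trapezoidal map. But nothing in the hypotheses prevents the rotated segment $\segment{pq'}$ from passing through several \emph{other} $0$-regions. In that situation, no face of any $\map(\cdot)$ is adjacent to both $A$ and $B$, the edge $(\anchor(A),\anchor(B))$ is simply not in $\edges$, and the single-hop path does not exist. The paper's proof handles this by observing that $\segment{pq'}$ is partitioned by the intermediate $0$-regions into intra-region pieces (free) and inter-region pieces, and that each inter-region piece $\segment{cd}$ admits (by the same sliding argument) an edge $(\anchor(C),\anchor(D)) \in \edges$ with $\weight \le \abse{cd}$; summing gives $\weight(P) < \abse{pq'} = (\cos\beta/\cos\theta)\abse{pq}$. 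This multi-hop case is not optional bookkeeping — it is one of the two structural cases and must be addressed.

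Your incidence argument is also not the one that actually works. You propose to replace $p$ by $a^\ast = \samplep(A,(k{+}1)\theta)$, the extreme point of $A$ \emph{in} the rotated direction, and to shoot a ray from $a^\ast$. But the trapezoidal map $\map(k{+}1)$ shoots rays in direction $r(k{+}1)$ only from the sample points extreme \emph{perpendicular} to that direction, namely $\samplep(\cdot,(k{+}1)\theta \pm \pi/2)$; the point $a^\ast$ can lie on the far side of $A$, and there is no reason $r(a^\ast,k{+}1)$ hits $B$ at all. The correct move, used in the paper, is to \emph{translate} $\segment{pq'}$ rigidly (keeping its direction fixed) perpendicular to itself, moving $p$ along $\boundary A(p,a)$ and $q'$ along $\boundary B(b,q)$ simultaneously, until the segment first touches some sample point $v$. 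That $v$ is necessarily extreme in the perpendicular direction, so it is one of the vertices from which $\map(k{+}1)$ shoots a ray; from there Observation~\ref{lem:first_hit_true} shows the ray makes $A$ and $B$ (or the relevant intermediate pair) adjacent in a face, and the edge exists. Your version has the right length estimate once an edge is in hand, but it neither produces the required trapezoidal-map incidence nor accounts for the case in which $\segment{pq'}$ overlaps other $0$-regions, so the proof as written does not go through.
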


\begin{proof}
    Without loss of generality, assume that $\abse{qb} \geq \qqprimeoverpq \cdot \abse{pq}$. Lemma~\ref{lem:cost_of_rotation} implies that there exists a point $q' \in \boundary{B}(b, q)$ such that $\segment{pq'}$ is in some direction $r(k)$, and $\abse{pq'} \leq \pqprimeoverpq \cdot \abse{pq}$. 

    \begin{figure}[tbh]
        \centering
        \includegraphics[scale=0.75]{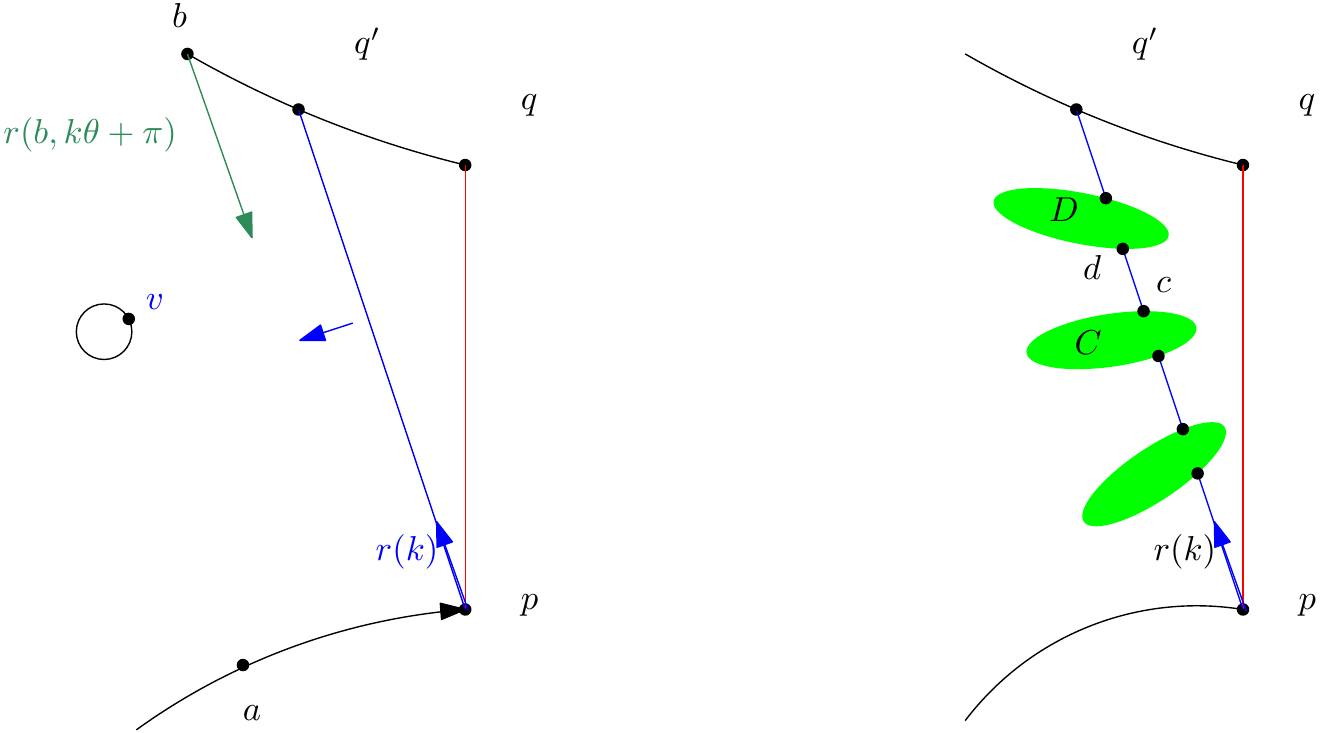}
        \caption{In the left figure, if $\segment{pq'}$ does not overlap a 0-region, we slide $\segment{pq'}$ until it touches a sample point. In the right figure, we slide each inter-region segment ($\segment{cd}$ as an example) the same way.}
        \label{fig:slide}
    \end{figure}
    
    Observe that $\segment{pq}$ cannot overlap any 0-region; otherwise, $P^*$ is not optimal. If $\segment{pq'}$ does not overlap any 0-region (see Figure~\ref{fig:slide}, left), we fix the orientation of $\segment{pq'}$, and move $p$ along $\boundary{A}(p, a)$ and $q$ along $\boundary{B}(b, q)$, until $\segment{pq'}$ touches a sample point. 
    
    If $\segment{pq'}$ touches $a$ (resp. $b$), then $r(a, k)$ (resp. $r(b, k\theta + \pi)$) hits $B$ (resp. $A$). If $\segment{pq'}$ touches a sample point $v \notin A \cup B$, $v$ must be extreme in the direction $r(k\theta - \pi/2)$. As a result, $r(v, k)$ hits $B$, and $r(v, k\theta + \pi)$ hits $A$. In either case, $A$ and $B$ are adjacent in some face of $\map(k)$, and the edge $e = (\anchor(A), \anchor(B))$ is in $\edges$ by construction. As a result, $\weight(e) = \distance(A, B) = \abse{pq}$. This is also trivially true when $p$ or $q'$ is already a sample point. 

    Otherwise, the segment $\segment{pq'}$ overlaps a set $E'$ of 0-regions, and there exists a path $P$ from $A$ to $B$ through $E'$ (see Figure~\ref{fig:slide}, right). Since the 0-regions do not overlap, the boundaries of the 0-regions in $E'$ partition $\segment{pq'}$ into a set of intra-region and inter-region segments. Let $\segment{cd}$ be one among the set $S$ of inter-region segments, where $c$ is on a 0-region $C$, and $d$ is on a 0-region $D$. Using the same argument as above, one can slide $\segment{cd}$ until it touches a sample point, and edge $(\anchor(C), \anchor(D)) \in \edges$ exists by construction. 
    
    In total, traveling from $A$ to $B$ via the 0-regions $E'$ must be less costly than $\abse{pq'}$, since $\weight(\anchor(C), \anchor(D)) = \distance(C, D) \leq \abse{cd}$, and the intra-region segments have weight 0. Summing up the cost of $P$, we have that
    \begin{align*}
        \weight(P) = \sum\limits_{\segment{cd} \in S} \weight(\anchor(C), \anchor(D)) < \sum\limits_{\segment{cd} \in S} \abse{cd} < \abse{pq'} = \pqprimeoverpq \cdot \abse{pq}. &\qedhere
    \end{align*} 
\end{proof}

\subsection{$\Theta$-Graph} \label{sec:01_theta_graph_ellipse}
In Step~4 of Algorithm~\ref{alg:01_construct}, we constructed a $\Theta$-graph $\graph_\Theta = (\vertices_\Theta, \edges_\Theta)$ using the defined sample points. The vertices $\vertices_\Theta$ are simply all sample points. The edges in $\edges_\Theta$ are constructed using the standard $\Theta$-graph construction~\cite{debergComputationalGeometryAlgorithms2008}. Recall that in Algorithm~\ref{alg:01_construct}, for every edge $(p, q) \in \edges_\Theta$, with $p \in A$, $q \in B$, and $A \neq B$, we add an edge $(p, q)$ to $\edges$, and set $\weight(p, q) = \abse{pq}$.

\newcommand{\pprimeqprimeoverpq}[0]{\mathchoice
    {\frac{1}{\cos(\theta)}}
    {(1/\cos(\theta))}
    {(1/\cos(\theta))}
    {(1/\cos(\theta))}
}

The $\Theta$-graph constructs a set of \doublequote{good} edges in $\edges$ when the distances between $p$ (resp. $q$) and its adjacent sample points are small compared to $\abse{pq}$. In this case, we argue that there exists a pair of sample points $a \in A$ and $b \in B$, such that $\abse{ab} \leq \pprimeqprimeoverpq \cdot \abse{pq}$. Similar to Lemma~\ref{lem:cost_of_rotation}, we first prove a geometric property. 

\begin{lemma} \label{lem:double_rotation}
    Given a segment $\segment{pq}$, let $\alpha$ (resp. $\beta$) be the acute angle between $\segment{pq}$ and $r(p, k + 1)$ (resp. $r(p, k)$), where $\alpha + \beta = \theta < \pi/6$. Let $q'$ be the intersection of $r(p, k + 1)$ and $r(q, k\theta + \pi/2)$, and let $p'$ be the intersection of $r(q, k\theta + \pi)$ and $r(p, (k + 1)\theta + \pi/2)$. Let $c$ be the intersection of $\segment{pq'}$ and $\segment{qp'}$. We have that $\abse{cp'} + \abse{cq'} = (\sin(\alpha) + \sin(\beta))/(\cos(\theta)\sin(\theta))$, and $\abse{p'q'} = \pprimeqprimeoverpq \cdot \abse{pq}$.  
\end{lemma}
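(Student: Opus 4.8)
The plan is to set up the figure exactly as in the statement and to treat the two endpoints symmetrically: we apply the computation of Lemma~\ref{lem:cost_of_rotation} once at $p$ (rotating $\segment{pq}$ onto $r(p,k+1)$ to get $q'$) and once at $q$ (rotating $\segment{qp}$ onto $r(q,k\theta+\pi)$ to get $p'$). Concretely, $\segment{pq'}$ lies along direction $r(k+1)$ and $\segment{qp'}$ lies along direction $r(k\theta+\pi)$, so these two lines make an angle of exactly $\theta$ with each other; let $c$ be their intersection. The triangles $\triangle pcq$, $\triangle pcp'$ (wait — rather $\triangle p' c$ something) — more carefully, I would identify the small triangle $\triangle c q q'$ and the small triangle $\triangle c p p'$ and the triangle $\triangle p c q$, all sharing the vertex $c$, and read off every angle in terms of $\alpha$, $\beta$, $\theta=\alpha+\beta$.

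First I would compute $\abse{pq'}$ and $\abse{qp'}$. By Lemma~\ref{lem:cost_of_rotation} applied at $p$ we already have $\abse{pq'}=(\cos\beta/\cos\theta)\cdot\abse{pq}$, and by the same lemma applied at $q$ with the roles of $\alpha$ and $\beta$ swapped we get $\abse{qp'}=(\cos\alpha/\cos\theta)\cdot\abse{pq}$. Next, in triangle $\triangle pcq$ the angle at $p$ is $\beta$, the angle at $q$ is $\alpha$, hence the angle at $c$ is $\pi-\theta$, and the law of sines gives $\abse{pc}=(\sin\alpha/\sin\theta)\abse{pq}$ and $\abse{qc}=(\sin\beta/\sin\theta)\abse{pq}$. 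Then $\abse{cq'}=\abse{pq'}-\abse{pc}$ and $\abse{cp'}=\abse{qp'}-\abse{qc}$ (after checking the points occur in this order along the respective rays, which follows because $\segment{pq}\subseteq P^*$ keeps $q'$ beyond $c$ on $r(p,k+1)$ and $p'$ beyond $c$ on $r(q,k\theta+\pi)$). Substituting, $\abse{cq'}=\big((\cos\beta/\cos\theta)-(\sin\alpha/\sin\theta)\big)\abse{pq}$ and symmetrically for $\abse{cp'}$; adding them and using $\cos\beta\sin\theta-\sin\alpha\cos\theta=\cos\beta\sin(\alpha+\beta)-\sin\alpha\cos(\alpha+\beta)=\sin\beta$ (angle-subtraction identity), and likewise $\cos\alpha\sin\theta-\sin\beta\cos\theta=\sin\alpha$, yields $\abse{cp'}+\abse{cq'}=\big((\sin\alpha+\sin\beta)/(\cos\theta\sin\theta)\big)\abse{pq}$, as claimed. (I note the statement as printed omits the trailing $\cdot\abse{pq}$, presumably a typo.)

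For the second equality, the segment $\segment{p'q'}$ is the third side of triangle $\triangle c p' q'$, whose angle at $c$ equals $\theta$ (it is vertically opposite, or equal by construction, to the angle $\theta$ between the two rays through $c$). I would apply the law of cosines: $\abse{p'q'}^2=\abse{cp'}^2+\abse{cq'}^2-2\abse{cp'}\abse{cq'}\cos\theta$, substitute the two expressions just derived, and simplify. The main obstacle is purely trigonometric bookkeeping: making the algebra collapse to the clean value $(1/\cos\theta)\abse{pq}$. I expect the simplification to go through by repeatedly using $\alpha+\beta=\theta$ together with product-to-sum identities; alternatively, and more cleanly, I would avoid the law of cosines entirely by dropping the perpendicular from $c$ to line $p'q'$, or better, observe that $\measuredangle cq'p'$ and $\measuredangle cp'q'$ can be computed directly (the direction of $\segment{p'q'}$ is determined since $q'$ is extreme on $B$ in direction $r(k\theta+\pi/2)$ and $p'$ is extreme on $A$ in direction $r((k+1)\theta+\pi/2)$), so that $\segment{p'q'}$ makes known angles with $r(k+1)$ and with $r(k\theta+\pi)$; then a single application of the law of sines in $\triangle cp'q'$, together with $\abse{cq'}$ from above, gives $\abse{p'q'}$ in one step. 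Either route, the key cancellation is the same identity family used in Lemma~\ref{lem:cost_of_rotation}, so no genuinely new idea is required — only care with which angle is $\alpha$ and which is $\beta$ at each vertex.
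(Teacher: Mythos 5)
Your route for the first identity is sound in spirit but contains two sign errors that happen to cancel in the symmetric sum. In $\triangle pcq$ the angle at $p$ is $\alpha$ (not $\beta$): by the lemma's convention $\alpha$ is the angle between $\segment{pq}$ and $r(p,k+1)$, and $c$ lies on $r(p,k+1)$. Consequently $\abse{pc}=(\sin\beta/\sin\theta)\abse{pq}$, and the relevant identity is $\cos\beta\sin\theta-\sin\beta\cos\theta=\sin(\theta-\beta)=\sin\alpha$, giving $\abse{cq'}=(\sin\alpha/(\cos\theta\sin\theta))\abse{pq}$. Your stated identity $\cos\beta\sin\theta-\sin\alpha\cos\theta=\sin\beta$ is simply false (try $\alpha=0.1$, $\beta=0.2$). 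Because $\abse{cp'}+\abse{cq'}$ is symmetric in $\alpha,\beta$ you still land on the right total, but the individual formulas are swapped, and the intermediate identity you invoke does not hold.

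The second part has a genuine error that would derail the computation. The angle at $c$ in $\triangle cp'q'$ is $\pi-\theta$, not $\theta$. Since $\abse{pq'}=(\cos\beta/\cos\theta)\abse{pq}>(\sin\beta/\sin\theta)\abse{pq}=\abse{pc}$ and similarly $\abse{qp'}>\abse{qc}$, the point $c$ lies strictly between $p$ and $q'$ and strictly between $q$ and $p'$; hence $\measuredangle p'cq'$ is the \emph{vertical} angle to $\measuredangle pcq=\pi-\alpha-\beta=\pi-\theta$. Plugging $\cos\theta$ into the law of cosines yields
\begin{align*}
\abse{cp'}^2+\abse{cq'}^2-2\abse{cp'}\abse{cq'}\cos\theta=\frac{\sin^2\theta-4\sin\alpha\sin\beta\cos\theta}{\cos^2\theta\sin^2\theta}\cdot\abse{pq}^2,
\end{align*}
which is not $(1/\cos\theta)^2\abse{pq}^2$; only with $\cos(\pi-\theta)=-\cos\theta$ does the numerator collapse to $\sin^2\theta$. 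The paper sidesteps this entirely: since $r(q,k\theta+\pi/2)\perp r(q,k\theta+\pi)$, the triangle $\triangle p'qq'$ has a right angle at $q$, and $\abse{p'q'}^2=\abse{qq'}^2+\abse{qp'}^2$ follows directly from Lemma~\ref{lem:cost_of_rotation}. That observation is cleaner than either the law-of-cosines or the law-of-sines route you sketch, and avoids the angle bookkeeping that tripped you up.
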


\begin{proof}
    \begin{figure}[tbh]
        \centering
        \includegraphics[scale=0.75]{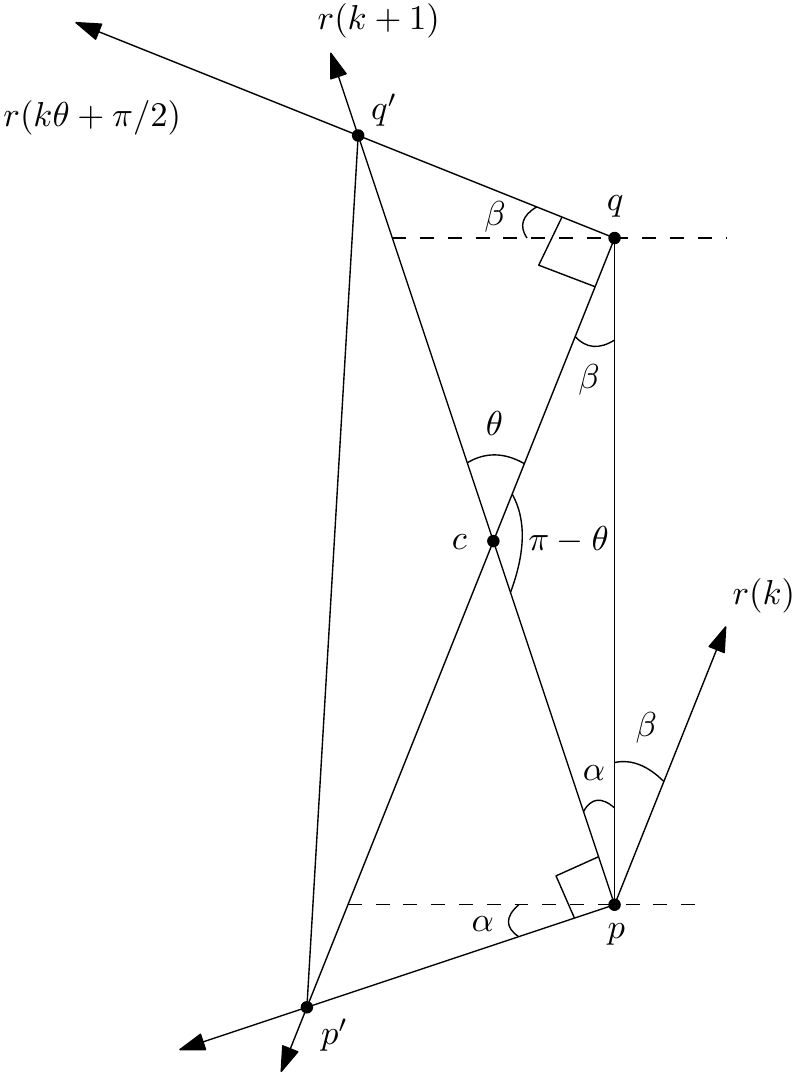}
        \caption{The construction in Lemma~\ref{lem:double_rotation}.}
        \label{fig:double_rotation}
    \end{figure}

    See Figure~\ref{fig:double_rotation} for the construction. Without loss of generality, assuming $\abse{pq} = 1$. Using the law of sines in the triangle $\triangle{pqc}$, we have the following.
    \begin{align*}
        \frac{\abse{cq}}{\sin(\measuredangle{qpc})} = \frac{\abse{pq}}{\sin(\measuredangle{qcp})} \implies \abse{cq} =  \frac{\sin(\measuredangle{qpc})}{\sin(\measuredangle{qcp})} \cdot \abse{pq} = \frac{\sin(\alpha)}{\sin(\pi - \theta)} = \frac{\sin(\alpha)}{\sin(\theta)}  \tag{1}
    \end{align*}

    Observe that $\measuredangle{cqq'} = \pi/2$, since $\measuredangle{cqq'}$ is the result of rotating $r(q, k\theta + \pi/2)$ counter-clockwise by $\pi/2$. Therefore, $\triangle{cqq'}$ is a right triangle, and we have the following.
    \begin{align*}
        \abse{cq'} &= \frac{\abse{cq}}{\sin(\measuredangle{cq'q})} \\
        &= \frac{\abse{cq}}{\sin(\frac{\pi}{2} - \theta)} && \triangleright \text{$\measuredangle{cq'q} = \frac{\pi}{2} - \theta$ by construction}\\
        &= \frac{1}{\cos(\theta)} \cdot \frac{\sin(\alpha)}{\sin(\theta)} && \triangleright \text{Using (1)}
    \end{align*}

    Using an analogous computation, we obtain 
    \begin{align*}
        \abse{cp'} = \frac{1}{\cos(\theta)} \cdot \frac{\sin(\beta)}{\sin(\theta)}.
    \end{align*}

    Combining $\abse{cp'}$ and $\abse{cq'}$, we have that 
    \begin{align*}
        \abse{cp'} + \abse{cq'} = \frac{\sin(\alpha) + \sin(\beta)}{\cos(\theta)\sin(\theta)}.  \tag{2}
    \end{align*}

    Next, since $\triangle{p'qq'}$ is a right triangle, we can bound $\abse{p'q'}$ as follows.
    \begin{align*}
        \abse{p'q'}^2 &= \abse{qq'}^2 + \abse{qp'}^2 \\
            &= (\qqprimeoverpq)^2 + (\abse{cp'} + \abse{cq})^2 && \triangleright \text{Using Lemma~\ref{lem:cost_of_rotation}}\\
            &= (\qqprimeoverpq)^2 + (\frac{\sin(\beta)}{\cos(\theta)\sin(\theta)} + \frac{\sin(\alpha)}{\sin(\theta)})^2 &&\triangleright\text{Using (2)}\\
            &= (\qqprimeoverpq)^2 + (\frac{\sin(\theta - \alpha)}{\cos(\theta)\sin(\theta)} + \frac{\sin(\alpha)}{\sin(\theta)})^2 \\
            &= (\qqprimeoverpq)^2 + (\frac{\sin(\theta)\cos(\alpha) - \sin(\alpha)\cos(\theta)}{\cos(\theta)\sin(\theta)} + \frac{\sin(\alpha)}{\sin(\theta)})^2 \\
            &= (\qqprimeoverpq)^2 + (\frac{\cos(\alpha)}{\cos(\theta)} - \frac{\sin(\alpha)}{\sin(\theta)} + \frac{\sin(\alpha)}{\sin(\theta)})^2 \\
            &= (\pprimeqprimeoverpq)^2
    \end{align*}

    Since $0 < \theta < \pi/6$, all terms are greater than 0, and $\abse{p'q'} = \pprimeqprimeoverpq \cdot \abse{pq}$.
\end{proof}

In the case that both $p$ and $q$ are close to their adjacent sample points, we argue that there exists a pair $(a, b)$ of sample points such that $\abse{ab}$ approximates $\abse{pq}$. 
\begin{lemma} \label{lem:theta_graph}
    \sloppy Let $\segment{pq} \subseteq P^*$, and let $\segment{pq}$ realise $\distance(A, B)$, where $p \in A$ and $q \in B$. Let $p \in \boundary{A}(a', a)$, and $q \in \boundary{B}(b, b')$, where points $a$ and $a'$ (resp. $b$ and $b'$) are adjacent sample points on $A$ (resp. $B$). If $\max\{\abse{pa'}, \abse{qb}\} < \qqprimeoverpq \cdot \abse{pq}$ and $\max\{\abse{pa},\abse{qb'}\} < (\sin(\beta)/\cos(\theta)) \cdot \abse{pq}$, then $\distance(a, b) < \pprimeqprimeoverpq \cdot \abse{pq}$.
\end{lemma}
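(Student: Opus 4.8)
The plan is to put coordinates on the plane so that $\segment{pq}$ lies along the positive $x$-axis, with $p$ at the origin and $q=(\abse{pq},0)$; to identify the sample points $a,a',b,b'$ exactly; to confine $a$ and $b$ to two thin circular sectors anchored at $p$ and $q$; and then to derive $\distance(a,b)=\abse{ab}<(1/\cos\theta)\cdot\abse{pq}$ from a short monotonicity computation together with Lemma~\ref{lem:double_rotation}.

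\noindent\emph{Coordinates and the sample points.} Since $\segment{pq}$ realises $\distance(A,B)$, Observation~\ref{obs:shortest_connection_perpendicular_to_tangent} gives supporting lines of $A$ at $p$ and of $B$ at $q$ perpendicular to $\segment{pq}$, hence vertical. As no point of $B$ lies within distance $\abse{pq}$ of $p$ (and symmetrically for $A$ and $q$), the regions lie in the closed halfplanes $A\subseteq\{x\le 0\}$ and $B\subseteq\{x\ge\abse{pq}\}$; in particular $p$ is extreme in direction $(1,0)$ over $A$, and $q$ is extreme in direction $(-1,0)$ over $B$. Taking $r(p,k)$ at angle $-\beta$ and $r(p,k+1)$ at angle $\alpha$ to the $x$-axis (so $\alpha+\beta=\theta$), direction $(1,0)$ lies strictly between $r(k)$ and $r(k+1)$ and $(-1,0)$ strictly between $r(k\theta+\pi)$ and $r((k+1)\theta+\pi)$; since extreme directions turn counter-clockwise along a convex boundary this forces $a'=\samplep(A,k)$, $a=\samplep(A,k+1)$, $b=\samplep(B,k\theta+\pi)$ and $b'=\samplep(B,(k+1)\theta+\pi)$, which is consistent with the boundary orderings $p\in\boundary{A}(a',a)$ and $q\in\boundary{B}(b,b')$ in the statement.

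\noindent\emph{Confining $a$ and $b$.} As $a=\samplep(A,k+1)$ is extreme in direction $(\cos\alpha,\sin\alpha)$ over $A\ni p=(0,0)$, we have $a_x\cos\alpha+a_y\sin\alpha\ge 0$; combined with $a_x\le 0$ this places $a$ in the cone at $p$ spanned by $(0,1)$ and $(-\sin\alpha,\cos\alpha)$, so using the hypothesis $\abse{pa}<(\sin\beta/\cos\theta)\abse{pq}=:\rho_a$ we may write $a=t_a(-\sin\mu_a,\cos\mu_a)$ with $0\le t_a<\rho_a$ and $0\le\mu_a\le\alpha$. Symmetrically $b=q+t_b(\sin\mu_b,\cos\mu_b)$ with $0\le t_b<\rho_b:=(\sin\alpha/\cos\theta)\abse{pq}$ and $0\le\mu_b\le\beta$. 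The points corresponding to $(t_a,\mu_a)=(\rho_a,\alpha)$ and $(t_b,\mu_b)=(\rho_b,\beta)$ are precisely the points $p'$ and $q'$ of Lemma~\ref{lem:double_rotation}, so $\abse{p'q'}=(1/\cos\theta)\cdot\abse{pq}$.

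\noindent\emph{Monotonicity.} A direct expansion gives
\[
\abse{ab}^2=\bigl(\abse{pq}+t_a\sin\mu_a+t_b\sin\mu_b\bigr)^2+\bigl(t_a\cos\mu_a-t_b\cos\mu_b\bigr)^2,
\]
which equals $\abse{pq}^2+2\abse{pq}(t_a\sin\mu_a+t_b\sin\mu_b)+t_a^2+t_b^2-2t_at_b\cos(\mu_a+\mu_b)$. Both partial derivatives of this in $\mu_a$ and in $\mu_b$ are nonnegative (each is a sum of terms that are nonnegative because $\mu_a+\mu_b\le\theta<\pi/6$), so $\abse{ab}^2$ is at most its value $h(t_a,t_b)$ at $\mu_a=\alpha$, $\mu_b=\beta$. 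Since $t_b<\rho_b$ gives $2t_b\cos\theta<2\abse{pq}\sin\alpha$, we obtain $\partial h/\partial t_a=2\abse{pq}\sin\alpha+2t_a-2t_b\cos\theta>2t_a\ge 0$, hence $h(t_a,t_b)<h(\rho_a,t_b)$; and $\partial h/\partial t_b(\rho_a,t_b)=2\abse{pq}\sin\beta+2t_b-2\rho_a\cos\theta=2t_b\ge 0$, hence $h(\rho_a,t_b)\le h(\rho_a,\rho_b)$. As $h(\rho_a,\rho_b)=\abse{p'q'}^2=(1/\cos\theta)^2\abse{pq}^2$, chaining these yields $\abse{ab}^2<(1/\cos\theta)^2\abse{pq}^2$, i.e.\ $\distance(a,b)<(1/\cos\theta)\cdot\abse{pq}$.

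The step I expect to be the main obstacle is the monotonicity step: the obvious triangle-inequality estimates, which bound $t_a,t_b$ and the horizontal and vertical deviations of $a$ and $b$ independently, lose a constant factor — already for $\alpha=\beta=\theta/2$ the horizontal deviation on its own, at its maximum, consumes the entire slack $\abse{p'q'}^2-\abse{pq}^2$ — so one must exploit that the horizontal and vertical deviations trade off against each other. The monotonicity of $h$, which rests on the exact identities $\rho_a\cos\theta=\abse{pq}\sin\beta$ and $\rho_b\cos\theta=\abse{pq}\sin\alpha$, is precisely what captures this trade-off. A secondary, routine point is disposing of the degenerate cases where $\alpha=0$ or $\beta=0$ (there one of $\rho_a,\rho_b$ is $0$ and the hypothesis is vacuously false) or where $A$ or $B$ degenerates to a single point.
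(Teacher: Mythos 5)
Your proof is correct, and it follows the same overall strategy as the paper: confine $a$ and $b$ to the circular sectors at $p$ and $q$ determined by the supporting lines, the rays $r(p,(k{+}1)\theta+\pi/2)$ and $r(q,k\theta+\pi/2)$, and the distance bounds $\rho_a,\rho_b$; show that $\abse{ab}\le\abse{p'q'}$; and then invoke Lemma~\ref{lem:double_rotation} to get $\abse{p'q'}=(1/\cos\theta)\abse{pq}$. The one step you execute differently is the extremality argument in the middle. The paper does this qualitatively: it slides $a$ along the line through $\segment{ab}$ to the sector boundary, then along the sector boundary to $p'$, arguing the length increases because $\measuredangle{bap'}\ge\pi/2$, and then repeats for $b$. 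You instead parametrize $a$ and $b$ by $(t_a,\mu_a)$ and $(t_b,\mu_b)$, write out $\abse{ab}^2$ explicitly, and verify the four one-sided monotonicities by inspecting signs of partial derivatives. Your version is somewhat more work but is also fully explicit; in particular it makes precise where the hypotheses $t_a<\rho_a$ and $t_b<\rho_b$ and the identities $\rho_a\cos\theta=\abse{pq}\sin\beta$, $\rho_b\cos\theta=\abse{pq}\sin\alpha$ actually enter, which the paper's sliding argument leaves implicit (the paper does not justify in which direction ``towards the boundary of $S$'' or ``along the boundary of $S$'' means, nor verify $\measuredangle{bap'}\ge\pi/2$ across all positions of $a$ on the two straight edges and the arc of $\partial S$). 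Your identification of $a=\samplep(A,k{+}1)$, $b=\samplep(B,k\theta+\pi)$ from the perpendicular supporting lines (Observation~\ref{obs:shortest_connection_perpendicular_to_tangent}) and convexity, and your note that the $\alpha=0$ or $\beta=0$ cases make the hypothesis vacuous, are both correct and worth keeping.
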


\begin{proof}
    \begin{figure}[tbh]
        \centering
        \includegraphics[scale=0.75]{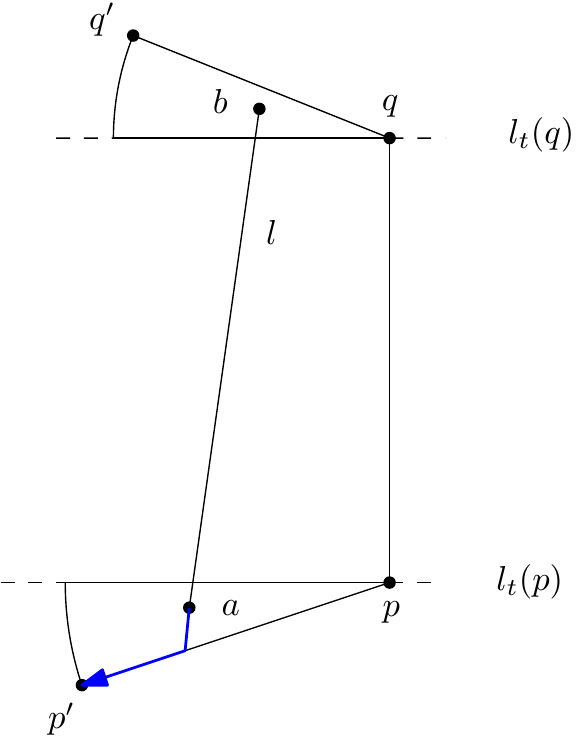}
        \caption{With $b$ fixed, moving $a$ to $p'$ strictly increases $\abse{ab}$.}
        \label{fig:theta_graph_01}
    \end{figure}

    We argue that $\abse{ab} < \abse{p'q'}$ (see Figure~\ref{fig:theta_graph_01}), where $q'$ (resp. $p'$) is the intersection of $r(p, (k + 1)\theta)$ (resp. $r(q, k\theta + \pi)$) and $B$ (resp. $A$). Let $p''$ (resp. $q''$) be the intersection of $r(q, k\theta + \pi)$ (resp. $r(p, k + 1)$) and $l_t(p)$ (resp. $l_t(q)$). Without loss of generality, assume $\segment{pq}$ is parallel to the $y$-axis, and $y(q) > y(p)$, and consider the sample point $a$ next to $p$ counter-clockwise. 

    We will first show that $a$ must reside in a circular sector. The region $A$ is convex; therefore $a$ must be below the supporting line $l_t(p)$. Due to how the sample points are constructed, the sample point $a$ must be above $r(p, (k + 1)\theta + \pi/2)$. Since $\abse{pa} < (\sin(\beta)/\cos(\theta)) \cdot \abse{pq}$, $a$ must reside in the disk $D = D(p, \abse{pp'})$ centered at $p$ with radius $\abse{pp'}$. Combining the above restrictions, we have that $a$ must reside in the smaller circular sector $S$ of $D$ enclosed by $l_t(p)$ and $\segment{pp'}$. An analogous argument shows that $b$ must reside in the smaller circular sector of $D(q, \abse{qq'})$ enclosed by $l_t(q)$ and $\segment{qq'}$.
    
    Consider the line $l$ through $\segment{ab}$, and observe that pushing $a$ along $l$ towards the boundary of $S$ strictly increases $\abse{ab}$. Next, now that $a$ lies on the boundary of $S$, we push $a$ towards $p'$ along the boundary of $S$. This also increases $\abse{ab}$, since $\measuredangle{bap'} \geq \pi/2$. Now that $a = p'$, an analogous argument applies to $b$, since $\measuredangle{p'bq'} \geq \pi/2$. In conclusion, $\abse{ab} \leq \abse{p'q'}$.
    
    By Lemma~\ref{lem:double_rotation}, $\abse{ab} \leq \abse{p'q'} \leq \pprimeqprimeoverpq \cdot \abse{pq}$, and the proof is complete.
\end{proof}

\subsection{The quality of the path}
For now, assume that $s$ and $t$ lie in some 0-region. An optimal $s$-$t$ path $P^*$ consists of a set of segments, where the endpoints of each segment lie on the boundaries of the 0-regions. A segment $\segment{pq}$ either lies within a 0-region or connects two different 0-region. Since it costs nothing to follow an edge inside a 0-region, the weight of $P$ is the total weight of those edges connecting different 0-regions. 

Let $\segment{pq}$ realise the distance between 0-regions $A$ and $B$, where $p$ lies on $\boundary{A}$ between sample points $a$ and $a'$, and $q$ lies on $\boundary{B}$ between sample points $b$ and $b'$. In Lemma~\ref{lem:tmap}, we have shown that if $\max\{\abse{pa'}, \abse{qb}\} \geq \qqprimeoverpq \cdot \abse{pq}$ or $\max\{\abse{pa},\abse{qb'}\} \geq (\sin(\beta)/\cos(\theta)) \cdot \abse{pq}$, there exists a path $P \subseteq \edges$ from a sample point on $A$ to a sample point on $B$ of length at most $\pqprimeoverpq \cdot \abse{pq}$.

In Lemma~\ref{lem:theta_graph}, we have shown that if $\max\{\abse{pa'}, \abse{qb}\} < \qqprimeoverpq \cdot \abse{pq}$ and $\max\{\abse{pa},\abse{qb'}\} < (\sin(\beta)/\cos(\theta)) \cdot \abse{pq}$, there exist sample points $a \in A$ and $b \in B$, such that $\distance(a, b) < \pprimeqprimeoverpq \cdot \abse{pq}$. To obtain a path between $a$ and $b$ in this case, we rely on the $\Theta$-graph. The tightest bounds on the length of this path are due to Bose et al.~\cite{boseTightBoundsThetagraphs2016}, who showed that the spanning ratio of a $\Theta$-graph is at most $r_\theta = 1 + 2\sin(\theta/2)/(\cos(\theta/2) − \sin(\theta/2))$. 

In summary, if $\segment{pq} \subseteq P^*$, there exists a path $P$ from $A$ to $B$ such that 
\begin{align*}
    \weight(P) \leq \max\{\pqprimeoverpq, \pprimeqprimeoverpq \cdot r_\theta\} \cdot \weight(P^*) \leq \pprimeqprimeoverpq \cdot r_\theta \cdot \weight(P^*).
\end{align*}
Given an approximation factor $0 < \e < 1$, we compute a proper value for $0< \theta < \pi/6$ as follows.
\begin{align*}
    \pprimeqprimeoverpq \cdot (1 + \frac{2\sin(\frac{\theta}{2})}{\cos(\frac{\theta}{2}) − \sin(\frac{\theta}{2})}) = \frac{1}{1 - \sin(\theta)} \leq 1 + \e \implies \theta \leq \sin^{-1}(\frac{\e}{1 + \e}) \in O(\e)
\end{align*}

Using $\theta$, we can construct the data structure $\bhopper$. Combining Lemma~\ref{lem:tmap} and Lemma~\ref{lem:theta_graph}, we have the following.

\begin{lemma} \label{lem:bound_path_01}
    Data structure $\bhopper$ contains a path $P \subseteq \edges$ from sample point $a$ to sample point $b$ such that $\weight(P) \leq (1 + \e) \cdot \weight(P^*)$, where $P^*$ is the optimal path from $a$ to $b$. 
\end{lemma}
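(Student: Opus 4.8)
The plan is to stitch together the two regimes handled by Lemma~\ref{lem:tmap} and Lemma~\ref{lem:theta_graph} along an optimal path $P^*$, and to argue that $\bhopper$ contains a path from $a$ to $b$ that loses at most a factor $1 + \e$ at every inter-region segment. First I would set up notation: write $P^*$ as a concatenation of segments, and discard the intra-region segments since they cost nothing, so $\weight(P^*) = \sum_i \abse{p_i q_i}$ where each $\segment{p_i q_i}$ realises $\distance(A_i, B_i)$ for consecutive 0-regions along $P^*$ (with $A = A_1$, $B = B_{\text{last}}$, and the $s$, $t$ endpoints treated as degenerate 0-regions). For each such inter-region segment, $p_i$ lies on $\boundary{A_i}$ between adjacent sample points, say $a_i, a_i'$, and similarly $q_i$ between $b_i, b_i'$. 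The dichotomy is on whether the ``slack'' conditions $\max\{\abse{p_i a_i'},\abse{q_i b_i}\} \geq \qqprimeoverpq\abse{p_i q_i}$ or $\max\{\abse{p_i a_i},\abse{q_i b_i'}\}\geq(\sin(\beta)/\cos(\theta))\abse{p_i q_i}$ hold; if so, Lemma~\ref{lem:tmap} gives a path in $\edges$ between a sample point on $A_i$ and one on $B_i$ of weight at most $\pqprimeoverpq\abse{p_i q_i}$, and otherwise Lemma~\ref{lem:theta_graph} gives sample points $a, b$ with $\distance(a,b) < \pprimeqprimeoverpq\abse{p_i q_i}$, which the $\Theta$-graph then realises within the Bose et al.\ spanning ratio $r_\theta$, for a total of at most $\pprimeqprimeoverpq\cdot r_\theta\cdot\abse{p_i q_i}$.

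Next I would invoke the calculation already carried out in the excerpt: with $\theta = \sin^{-1}(\e/(1+\e))$ one has $\pprimeqprimeoverpq\cdot r_\theta = 1/(1-\sin(\theta)) \leq 1+\e$, and since $\pqprimeoverpq = \cos(\beta)/\cos(\theta) \leq 1/\cos(\theta) \leq \pprimeqprimeoverpq \leq \pprimeqprimeoverpq\cdot r_\theta$, the worst of the two regimes is bounded by $1+\e$. Hence each inter-region segment $\segment{p_i q_i}$ of $P^*$ is ``covered'' by a subpath $P_i$ in $\bhopper$ between a sample point on $A_i$ and a sample point on $B_i$ with $\weight(P_i) \leq (1+\e)\abse{p_i q_i}$.

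The remaining issue is gluing: the sample point on $B_i$ produced for the $i$-th segment need not be the sample point on $A_{i+1} = B_i$ that begins the $(i+1)$-th subpath. This is exactly what the anchor edges (Step~2 of Algorithm~\ref{alg:01_construct}) and the zero-weight edges inside a 0-region are for: every sample point of a 0-region is joined to its anchor by an edge of weight $0$, and $\Theta$-graph edges between two sample points of the same region are excluded, so travel within a 0-region is free via the anchor. Therefore I would route the concatenated path as $a = \anchor(A_1) \leadsto \anchor(B_1) = \anchor(A_2) \leadsto \cdots \leadsto \anchor(B_{\text{last}}) = b$, inserting the free intra-region hops through anchors between consecutive $P_i$'s; the total weight is $\sum_i \weight(P_i) \leq (1+\e)\sum_i \abse{p_i q_i} = (1+\e)\weight(P^*)$. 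Finally I would note that $a$ and $b$ can be taken to be the anchors of the 0-regions containing $s$ and $t$ (or, in the statement's phrasing, the sample points $a,b$ are the endpoints reached), completing the proof.

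I expect the main obstacle to be the gluing/bookkeeping step rather than any new geometry: one has to be careful that the endpoints of the per-segment subpaths from Lemma~\ref{lem:tmap} and Lemma~\ref{lem:theta_graph} genuinely lie on the correct 0-regions (Lemma~\ref{lem:tmap} guarantees endpoints on $A_i$ and $B_i$ via the constructed $\map(k)$ edges between anchors, and Lemma~\ref{lem:theta_graph} plus the $\Theta$-graph gives endpoints that are the specified sample points $a_i \in A_i$, $b_i\in B_i$), and that in both cases one can then move to the anchor at cost $0$. A secondary subtlety is the boundary case where $\segment{p_i q_i}$ overlaps intermediate 0-regions — but Lemma~\ref{lem:tmap} already absorbs this into its bound, so no extra argument is needed. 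Everything else is a direct appeal to the per-segment lemmas and the choice of $\theta$ already fixed in the text.
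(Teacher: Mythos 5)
Your proposal is correct and matches the paper's approach: decompose $P^*$ into inter-region segments, apply Lemma~\ref{lem:tmap} or Lemma~\ref{lem:theta_graph} to each according to the stated dichotomy, concatenate the resulting subpaths via the weight-$0$ anchor edges inside each $0$-region, and choose $\theta$ so that $\pprimeqprimeoverpq \cdot r_\theta \leq 1 + \e$. The paper leaves the anchor-gluing and the bookkeeping of which sample points the per-segment subpaths start and end at implicit; you spell these out explicitly, which is a faithful elaboration rather than a different argument.
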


\subsection{Finding a shortest path amidst $0$-regions} \label{sec:01_shortest_path}
Now that we know good paths exist between all pairs of sample points, it remains to include our arbitrary query points $s$ and $t$. Given the data structure $\bhopper = \{\map(k) \mid \forall k \in [0, 2\pi/\theta), k \in \mathbb{Z} \} \cup \{\graph, \graph_\Theta\}$, a point $s$, and a point $t$, we query the approximate shortest path from $s$ to $t$ using Algorithm~\ref{alg:01_query}.

\begin{algorithm} 
\caption{Query $s$-$t$ weighted shortest path amidst 0-regions} \label{alg:01_query}
\ \\
This algorithm takes as input a point $s$ and a point $t$, and a data structure $\bhopper = \{\map(k) \mid \forall k \in [0, 2\pi/\theta), k \in \mathbb{Z} \} \cup \{\graph, \graph_\Theta\}$ storing a set of 0-regions. It outputs an $(1 + \e)$-approximated weighted shortest path from $s$ to $t$. In Step~1 and Step~2, this algorithm shows how to add $s$ to $\bhopper$, and the same operations are used to add $t$.

    \begin{enumerate}
        \item For each trapezoidal map $\map(k)$, do the following for point $s$. 
        \begin{enumerate}
            \item Query the face $F$ containing $s$. 
            \item Let $F$ be adjacent to 0-regions $A$ and $B$. Add $e = (s, \anchor(A))$ to $\edges$, and set $\weight(e) = \distance(s, A)$. Perform the same operation for $s$ and $B$. 
        \end{enumerate}
        \item  Add $s$ to $\graph_\Theta$. Specifically, using $s$ as the apex, we construct a set of disjoint cones with $\theta$-angle. For each point $p$ closest to $s$ in each cone, add $e = (s, p)$ to $\edges$, and set $\weight(e) = \abse{sp}$. For every existing vertex $p \in \graph_\Theta$, and every existing edge $(p, q) \in \edges_\Theta$, if $s$ is closer to $p$ than $q$ is, add $e = (s, p)$ to $\edges$, and set $\weight(e) = \abse{sp}$. (Note that $\Theta$-graph uses projected distance instead of Euclidean distance.)
        \item Use Dijkstra's shortest path algorithm to compute a path $P'$ from $s$ to $t$ in $\graph$. Transform $P'$ into a path $P$ in the original environment and return~$P$. 
    \end{enumerate}
\end{algorithm}

Once $s$ and $t$ are added to $\bhopper$, they are treated as 0-region with no interior. The point $s$ is its extreme point in every direction. Therefore, the distance between $s$ and its adjacent sample points is $0$. As a result, Lemma~\ref{lem:tmap} and Lemma~\ref{lem:theta_graph} both apply.  Thus $\bhopper$ with $s$ and $t$ added contains a $(1 + \e)$-approximation of the shortest path from $s$ to $t$, which Dijkstra's shortest path algorithm will find. 

\subsubsection{Analysis}
Finally, we look at the query time. Recall that $\e \in O(\theta)$. In Step~1, we consider $O(1/\e)$ trapezoidal maps. For each trapezoidal map, it takes $O(\log(n/\e))$ time to perform a point-location query~\cite{debergComputationalGeometryAlgorithms2008}, and it takes $O(\log N)$ time to compute $\distance(s, A)$~\cite{edelsbrunnerComputingExtremeDistances1985}. In Step~2, it takes $O(n/\e^2 + (n/\e) \log(n/\e))$ time to find the closest point of $s$ in each cone, and at the same time check if $s$ is closest to any point $p$. In Step~3, it takes $O(\abs{\edges} + \abs{\vertices} \log \abs{\vertices}) = O(n/\e^2)$ time to run Dijkstra's shortest path algorithm to find the shortest path $P'$ from $s$ to~$t$ in $\vertices$~\cite{ericksonAlgorithms2019}. It takes $O(n/\e^2 + N)$ time to transform $P'$ into a path $P$ in the environment. The query time is $O(\bhopperQueryTime)$.



\zeroOneSummarised*

\section{Partial weak \Frechet similarity} \label{sec:weak_frechet}
This section highlights one application of our data structure: approximating the partial weak \Frechet similarity. This problem was previously considered by Buchin et al.~\cite{buchinExactAlgorithmsPartial2009} and De Carufel et al.~\cite{decarufelSimilarityPolygonalCurves2014}. We start with an introduction for the \Frechet distance and the weak \Frechet distance, eventually leading us to the formal definition of the partial weak \Frechet similarity problem.

\newcommand{\curvepi}[0]{\pi}
\newcommand{\curvesigma}[0]{\sigma}
\newcommand{\width}[0]{\textbf{width}}
The \Frechet distance is a popular measure of the similarity between two polygonal curves. An \emph{orientation-preserving reparameterisation} is a continuous and bijective function $f: [0, 1] \rightarrow [0, 1]$ such that $f(0) = 0$, and $f(1) = 1$. The $\width_{f, g}(\curvepi, \curvesigma)$ between two curves $\curvepi$ and $\curvesigma$ with respect to the reparameterisations $f$ and $g$, is defined as follows. 
\begin{align*}
    \width_{f, g}(\curvepi, \curvesigma) = \max_{t \in [0, 1]} \abse{\curvepi(f(t)) - \curvesigma(g(t))}
\end{align*}

\newcommand{\fredist}[1]{\delta_F(#1)}

Consider the scenario where a person is walking his dog with a leash connecting them: the person needs to stay on $\curvepi$ while walking according to $f$, and the dog needs to stay on $\curvesigma$ while walking according to $g$. The maximum leash length is the width between $\curvepi$ and $\curvesigma$ with respect to the reparameterisations $f$ and $g$. The standard \Frechet distance $\fredist{\curvepi, \curvesigma}$ is the minimum leash length required over all possible walks (defined by reparameterisations $f$ and $g$).
\begin{align*}
    \fredist{\curvepi, \curvesigma} = \inf_{f, g \in [0, 1] \rightarrow [0, 1]} \width_{f, g}(\curvepi, \curvesigma)
\end{align*}

\newcommand{\freespace}[3]{\mathcal{F}_{#3}(#1, #2)}
\newcommand{\freespaceDiagram}[3]{\mathcal{D}_{#3}(#1, #2)}
Problems relating to the \Frechet distance are commonly solved in a configuration space called the \emph{freespace diagram}. The \emph{free space} $\freespace{\curvepi}{\curvesigma}{d}$ with respect to the \Frechet distance $d$ is the union of all pairs of points $x \in \curvepi$ and $y \in \curvesigma$ such that the distance between $x$ and $y$ is at most $d$. As opposed to the free space, we will call $[0, \abse{\curvepi}] \times [0, \abse{\curvesigma}] \setminus \freespace{\curvepi}{\curvesigma}{d}$ the \emph{forbidden space}. 
\begin{align*}
    \freespace{\curvepi}{\curvesigma}{d} = \{(x, y) \in [0, \abse{\curvepi}] \times [0, \abse{\curvesigma}] \mid \abse{\curvepi(x) - \curvesigma(y)} \leq d\}
\end{align*}

\newcommand{\cell}[1]{C(#1)}
The \emph{freespace diagram} $\freespaceDiagram{\curvepi}{\curvesigma}{d}$ is a data structure that stores the free space $\freespace{\curvepi}{\curvesigma}{d}$ in $n^2$ cells. Alt and Godau~\cite{altComputingFrechetDistance1995} showed that the intersection of the free space with each cell is the intersection of an ellipse and a rectangle. Therefore the free space in each cell is convex, and its boundary is of constant complexity. For two polygonal curves with complexity $n$, Alt and Godau~\cite{altComputingFrechetDistance1995} proved the following fact. 

\begin{fact} \label{fact:freespace_diagram}
    The freespace diagram contains at most $n^2$ cells, and it can be constructed in $O(n^2 \log n)$ time. The free space inside each cell is the intersection of an ellipse and a rectangle. 
\end{fact}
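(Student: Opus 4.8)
The plan is to verify the three assertions in order, treating the cell count and the time bound as bookkeeping and putting the work into the geometric claim about a single cell. I would parameterize $\pi$ and $\sigma$ by arc length, so that $x$ ranges over $[0,\abse{\pi}]$ and $y$ over $[0,\abse{\sigma}]$, and let $0=x_0<x_1<\dots<x_m=\abse{\pi}$ and $0=y_0<\dots<y_{m'}=\abse{\sigma}$ be the vertex parameters, where $m,m'\in O(n)$. The horizontal lines $y=y_j$ and vertical lines $x=x_i$ cut $[0,\abse{\pi}]\times[0,\abse{\sigma}]$ into $m\cdot m'$ axis-parallel rectangular cells, one for each ordered pair consisting of an edge of $\pi$ and an edge of $\sigma$; under the convention that ``complexity $n$'' counts edges, this is at most $n^2$ cells.

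The crux will be the shape of $\freespace{\pi}{\sigma}{d}$ restricted to one cell $\cell{i,j}=[x_{i-1},x_i]\times[y_{j-1},y_j]$. There $\pi(x)=a+(x-x_{i-1})\,u$ and $\sigma(y)=b+(y-y_{j-1})\,v$ for fixed points $a,b$ and unit vectors $u,v$; setting $\xi=x-x_{i-1}$, $\eta=y-y_{j-1}$, and $w=a-b$, I get $\abse{\pi(x)-\sigma(y)}^2=\abse{w+\xi u-\eta v}^2$, whose expansion has quadratic part $\xi^2-2(u\cdot v)\,\xi\eta+\eta^2$, that is, the quadratic form with matrix $M=\left(\begin{smallmatrix}1&-u\cdot v\\-u\cdot v&1\end{smallmatrix}\right)$. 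Since $|u\cdot v|\le 1$, the eigenvalues $1\pm u\cdot v$ of $M$ are nonnegative, so $M\succeq0$ and $\abse{\pi(x)-\sigma(y)}^2$ is a convex quadratic in $(\xi,\eta)$. Hence the set $\{(\xi,\eta):\abse{\pi(x)-\sigma(y)}^2\le d^2\}$ is the sublevel set of a convex quadratic, which is empty, the whole plane, a slab (the degenerate case $|u\cdot v|=1$ of parallel edges), or the region bounded by an ellipse --- in every case convex, with a boundary that is a single conic of constant complexity. Intersecting with the rectangular cell then exhibits $\freespace{\pi}{\sigma}{d}\cap\cell{i,j}$ as the intersection of an ellipse with a rectangle, convex and of constant boundary complexity, which is exactly the assertion.

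For the running time, computing arc lengths and vertex parameters is $O(n)$, and for each of the at most $n^2$ cells the quadratic above is available in closed form, so I would compute $M$, the ellipse, and the portion of each of the four cell edges that lies in the free space in $O(1)$ per cell, yielding the diagram for a fixed $d$ in $O(n^2)$ time; the extra $\log n$ factor accounts for sorting the $O(n^2)$ critical parameter values (edge--edge and vertex--edge events) the freespace machinery uses afterwards, at cost $O(n^2\log n)$. The main obstacle will be the middle step, and it rests on the single observation that the off-diagonal entry of $M$ equals $-\cos\angle(u,v)$ and so has absolute value at most $1$, which forces $M\succeq0$; convexity of the per-cell free space is precisely what the later monotone-reachability arguments in the freespace diagram rely on, so this is the load-bearing point.
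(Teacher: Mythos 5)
The paper offers no proof of Fact~\ref{fact:freespace_diagram}: it is cited directly from Alt and Godau (1995) and used as a black box in Section~\ref{sec:weak_frechet}, so there is no ``paper's own proof'' to compare against. That said, your reconstruction of the classical argument is correct and is exactly the standard Alt--Godau proof: parameterizing by arc length, reducing to the quadratic form $\xi^2 - 2(u\cdot v)\xi\eta + \eta^2$ on a single cell, and noting that $|u\cdot v|\le 1$ forces the matrix $M$ to be positive semidefinite is the load-bearing step, and your case analysis (proper ellipse, slab, empty, whole plane) cleanly handles the degeneracies that the shorthand ``ellipse'' elides.

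One inaccuracy worth flagging in your running-time bookkeeping: constructing the freespace diagram for a \emph{fixed} $d$ takes only $O(n^2)$ time, as your own cell-by-cell account in fact shows. The $O(n^2\log n)$ bound in Alt and Godau is the cost of computing the Fr\'echet distance $\delta_F$ itself (parametric search, or sorting the $O(n^2)$ critical distances followed by binary search with the $O(n^2)$-time decision procedure); it is not part of laying out the diagram at a given threshold. Your guess that the extra $\log n$ comes from ``sorting critical parameter values the freespace machinery uses afterwards'' describes the optimization step, not the construction the Fact asserts. The Fact as stated in the paper is thus slightly loose, and since $O(n^2)\subseteq O(n^2\log n)$ your argument still establishes what is claimed, but it would be cleaner to say the diagram itself is built in $O(n^2)$ time and leave the $\log n$ for the optimization.
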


It is well-known that if one can find an $xy$-monotone path in $\freespaceDiagram{\curvepi}{\curvesigma}{d}$ from the bottom-left corner $s$ to the top-right corner $t$ via the free space, then $\fredist{\curvepi, \curvesigma} \leq d$. 

\newcommand{\wfredist}[1]{\delta_{wF}(#1)}
The notion of weak \Frechet distance relaxes the requirement of the reparameterisation $f$: it still needs to be continuous but not bijective. This means that the person and the dog can walk backward. To determine if the weak \Frechet distance $\wfredist{\curvepi, \curvesigma}$ is at most $d$, we need to find only a (potentially not $xy$-monotone) path through the free space from $s$ to $t$ in $\freespaceDiagram{\curvepi}{\curvesigma}{d}$. Buchin et al.~\cite{buchinSETHSaysWeak2019} showed that there is no strongly subquadratic time algorithm for approximating the weak \Frechet distance within a factor less than 3 unless the strong exponential-time hypothesis fails. 

\newcommand{\similarity}[0]{S}
Buchin et al.~\cite{buchinExactAlgorithmsPartial2009} proposed the \emph{partial \Frechet similarity} (partial similarity in short) to deal with the \Frechet distance's sensitivity to outliers. Instead of determining whether a leash of length $d$ is enough to complete the walk, partial similarity determines how much can be completed given a leash of length $d$. The partial similarity is the total length of the portion of two curves that are matched under the \Frechet distance $d$. 

Let $\distance_p(x, y)$ be the distance between point $x$ and point $y$ under the $L_p$ norm. Let $\abse{v}$ be the $L_2$ norm of the vector $v$. Under the $L_p$ metric, given the desired \Frechet distance $d$,  the partial similarity $\similarity_{f, g}(\curvepi, \curvesigma)$ of curves $\curvepi$ and $\curvesigma$ with respect to the reparameterisations $f$ and $g$ is formally defined as follows~\cite{buchinExactAlgorithmsPartial2009}. 
\begin{align*}
    \similarity_{f, g}(\curvepi, \curvesigma) = \int_{\distance_p(\curvepi(f(t)), \curvesigma(g(t)) \leq d} (\abse{\curvepi(f(t))'} + \abse{\curvesigma(g(t))'})dt
\end{align*}

Naturally, we want to compute a pair of reparameterisations $f$ and $g$ that maximise the partial similarity. To do this, Buchin et al.~\cite{buchinExactAlgorithmsPartial2009} proposed a cubic time algorithm. They showed that it is sufficient to find an $xy$-monotone path $P$ from $s$ to $t$ such that $P$ intersects as much free space as possible, i.e., $\abse{P \cap \freespace{d}{P}{Q}}_p$ should be maximised. When the distance measure between two curves is the more natural $L_2$-metric, De Carufel et al.~\cite{decarufelSimilarityPolygonalCurves2014} proposed a cubic time algorithm with an additive error. They showed that finding a path that intersects as much free space as possible is equivalent to finding a path that intersects as little forbidden space as possible. They formalised the latter as the Minimum-Exclusion (MinEx) problem. They also showed that the MinEx problem is equivalent to the weighted shortest $xy$-monotone path problem: computing a $xy$-monotone weighted shortest path $P$ from $s$ to $t$ in the freespace diagram, where the weight in the forbidden space is one, and the weight in the free space is zero. Under the weak \Frechet distance, the monotonicity requirement is removed.

Therefore, solving the weak \Frechet version of the MinEx problem under the $L_2$ metric is equivalent to finding a weighted shortest path amidst a set of $O(n^2)$ non-overlapping and convex $0$-regions, where each 0-region is of constant complexity. Using Theorem~\ref{thm:01} and Fact~\ref{fact:freespace_diagram}, we have the following theorem.

\weakFrechet*

\section{Shortest path amidst 0-regions and obstacles} \label{sec:01infty}
In this section, we generalise our data structure from Section~\ref{sec:01} to allow convex obstacles that cannot be traversed, i.e., obstacles with weight $\infty$. Our problem is finding an approximate shortest path amidst 0-regions and obstacles. More concretely, we consider the following problem.

\begin{problem}
    In the planar-subdivision induced by the plane with weight $1$, and a set of non-overlapping convex regions consisting of obstacles with weight $\infty$, and $0$-regions with weight $0$, given an approximation error $0 < \e < 1$, find a $(1 + \e)$-approximate weighted shortest path from point $s$ to point $t$. 
\end{problem}

In this section, we redefine $\distance(A, B)$ as the minimum distance between two geometric regions $A$ and $B$ in a $0/1/\infty$ weighted setting. The $\Theta$-graph can be constructed in an environment with obstacles. Clarkson~\cite{clarksonApproximationAlgorithmsShortest1987} described such construction over points and polygonal obstacles, and proved that a path that $(1 + \e)$-approximates $\distance(a, b)$ exists in the $\Theta$-graph, where $a$ and $b$ are vertices. We will use this $\Theta$-graph in the rest of the paper.

Like in the previous section, we will first describe the construction of the data structure $\bhopper$ and analyse the time and space complexity. We then show that $\graph \in \bhopper$ contains a good path between every pair of sample points. We use this to argue the approximation ratio for arbitrary $s$ and $t$.

\subsection{Construction of the data structure} 
In order to deal with obstacles, we need to define two new types of sample points. For clarity, we refer to the sample points defined previously as the \textit{original sample points}. In Section~\ref{sec:01}, a trapezoidal map $\map(k)$ was only used to determine if two 0-regions should be connected, and we did not explicitly compute the intersection of a vertical segment and the boundary of a region. With the introduction of obstacles, we do need such intersections. Consider a sample point $a$. When constructing $\map(k)$, we shoot two vertical rays from $a \in A$, one upwards and one downwards. Let $p$ be the first intersection of $r(a, k)$ with the boundary of some region that is not $A$. We call $p$ a \textit{propagated sample point}. 

The other type of sample points we need is the \textit{tangent sample points}. Given two disjoint obstacles $A$ and $B$, and a common tangent $l$, if $l$ touches $A$ at point $a$ and $B$ at $b$, we add $a$ and $b$ as tangent sample points. When we say a point $a$ is a sample point, $a$ can be any type of sample point. 

Recall that $\simple{A}$ is the simplified region by connecting every pair of adjacent sample points of $A$. $\simple{\zeroregion}$ is the set of simplified 0-regions, and $\simple{\obstacles}$ is the set of simplified obstacles. We formally define the construction of our data structure. Given a set $\obstacles$ of convex obstacles and a set $\zeroregion$ of convex $0$-regions, we build our data structure using Algorithm~\ref{alg:01infty_construct}. 

\begin{algorithm}[tbh]
\caption{Construct $\bhopper$ with 0-regions and obstacles} \label{alg:01infty_construct}
\ \\
This algorithm takes as input a set of non-overlapping and convex regions, including 0-regions $\zeroregion$ and obstacles $\obstacles$, and constructs a data structure $\bhopper$. The undirected graph $\graph = (\vertices, \edges)$ is initially empty.

\begin{enumerate}[\indent1)]
    \item Compute the original sample points $\samplepoints(\zeroregion) \cup \samplepoints(\obstacles)$, and add them to $\vertices$. 
    \item For each direction $r(k)$, generate a trapezoidal map $\map(k)$ using $\simple{\zeroregion} \cup \simple{\obstacles}$. 
    \item For every $\map(k)$, do the following for every face $F$ adjacent to $A$ and $B$, $A \neq B$. 
        \begin{enumerate}
            \item Compute the propagated sample points, and add them to $\vertices$. 
            \item If $A$ and $B$ are both 0-regions, add $e = (\anchor(A), \anchor(B))$ to $\edges$, and set $\weight(e) = \distance(A, B)$. 
            \item If at least one of $A$ and $B$ is an obstacle, let $\segment{ab}$ and $\segment{a'b'}$ be the vertical segments defining $F$, where $a, a' \in A$, and $b, b' \in B$. Add edges $e_1 = (a, b)$ and $e_2 = (a', b')$ to $\edges$. Set $\weight(e_1) = \abse{ab}$ and $\weight(e_2) = \abse{a'b'}$. 
            \item If $A$ and $B$ are both obstacles, we compute their common tangents. For each common tangent that touches $A$ at $a$ and $B$ at $b$, add $a$ and $b$ to $\vertices$.
            
        \end{enumerate}
    \item  Redefine $\simple{A}$ as the polygon generated by connecting adjacent sample points of $A$, original, propagated, and tangent sample points included. With $\vertices$ and $\simple{\obstacles}$ as the input, generate a $\Theta$-graph $\graph_\Theta = (\vertices, \edges_\Theta)$. 
        \begin{itemize}
            \item For each edge $(p, q) \in \edges_\Theta$, if $p$ and $q$ belongs to different regions $A$ and $B$, add $e = (p, q)$ to $\edges$ and set $\weight(e) = \abse{pq}$. 
        \end{itemize}
    \item For every pair of adjacent sample points $a, a'$ on an obstacle, add $e = (a, a')$ to edges, and set $\weight(e) = \abse{aa'}$. 
    \item Pick an arbitrary sample point as the anchor $\anchor(A)$ for every $A \in \zeroregion$. For each sample point $a$ of a $0$-region $A$, add $e = (a, \anchor(A))$ to $\edges$, and set $\weight(e) = 0$.
    \item Return $\bhopper = \{\map(k) \mid \forall k \in [0, 2\pi/\theta), k \in \mathbb{Z} \} \cup \{\graph, \graph_\Theta\}$ as the data structure.
\end{enumerate}
\end{algorithm}


\subsubsection{Analysis}
Recall that we are given an approximation error $\e$ and a set of non-overlapping convex regions consisting of obstacles $\obstacles$ and 0-regions $\zeroregion$. There are $n$ regions, and the total complexity of $\obstacles \cup \zeroregion$ is $O(N)$. In Step~1, it takes $O(N + n/\e)$ time to compute the sample points. In Step~2, it takes $O((n/\e^2)\log (n/\e))$ expected time to construct the trapezoidal maps~\cite{debergComputationalGeometryAlgorithms2008}. 

Step~3 considers $O(n/\e^2)$ trapezoidal map faces. In Step~3a, given a vertical segment $c$ intersecting a non-vertical segment $\segment{ab} \in A$, it takes $O(\log N)$ time to use a binary search on $\boundary{A}(a, b)$ to find the intersection of $c$ and $\boundary{A}(a, b)$. In Step~3b, using the algorithm by Edelsbrunner~\cite{edelsbrunnerComputingExtremeDistances1985}, it takes $O(\log N)$ time to compute the distance between two 0-regions. In Step~3c, it takes constant time to compute the length of the vertical segments. In Step~3d, using the algorithm by Kirkpatrick and Snoeyink~\cite{kirkpatrickComputingCommonTangents1995}, and Guibas et al.~\cite{guibasCompactIntervalTrees1991}, it takes $O(\log N)$ time to compute at most four common tangents. In total, Step~3 takes $O((n/\e^2) \log(N))$ time, and we add at most $O(n/\e^2)$ tangent and propagated sample points. 

In Step~4, using Clarkson's algorithm~\cite{clarksonApproximationAlgorithmsShortest1987}, it takes $O((n/\e^3)\log(n/\e))$ time to generate a $\Theta$-graph with obstacles using $O(n/\e^2)$ sample points and $O(1/\e)$ cones. In Step~5, it takes $O(n/\e^2 + N)$ time to generate an edge and set the weight for every pair of adjacent sample points. Step~6 takes $O(n/\e^2)$ time to connect every sample point in a 0-region to its anchor. 

We generate $O(n/\e)$ original sample points. We generate a constant number of propagated and tangent sample points for each of the $O(n/\e^2)$ faces. As a result, each propagated or tangent sample point contributes a constant number of edges in $\edges$. A $\Theta$-graph constructed with $O(n/\e^2)$ sample points, and $O(1/\e)$ cones has at most $O(n/\e^3)$ edges. Therefore $\abs{\graph} = O(n/\e^3)$. We summarise the complexities below.

\begin{lemma} \label{lem:bhopper_construction_with_obstacles}
    Given an approximation error $0 < \e < 1$, and $n$ non-overlapping convex regions including $0$-regions and obstacles with total complexity $N$, one can build the data structure $\bhopper$ in $O(\bhopperConstructionTimeObstacles)$ expected time, and the total size of $\bhopper$ is $O(\bhopperDSSpaceObstacles)$. 
\end{lemma}

The structure of the rest of the section is as follows. By Lemma~\ref{lem:boundary_over_segment}, the distance between two adjacent sample points on the boundary of an obstacle approximates the straight line segment. Therefore, we show that we can \doublequote{snap} the vertices of an optimal path to our sample points. For every segment $\segment{pq} \subseteq P^*$, we then argue that either the trapezoidal map or the $\Theta$-graph contains a path approximating $\abse{pq}$ to within a factor of $1 + \e$. 

\subsection{Walking on the boundary is not expensive}
We first argue that if $a$ and $a'$ are adjacent sample points of $A$, then $\abse{aa'}$ approximates $\abse{\boundary{A}(a, a')}$ within a factor of $\sec(\theta/2)$. Therefore, if we find a path $P$ amidst the simplified obstacles, we only need to pay a small factor to transform $P$ into a path amidst the original obstacles. Then, we prove that if $\segment{pq}$ is part of the optimal path, we can replace $\segment{pq}$ with a path $P \subseteq \edges$, such that $\weight(P)$ approximates~$\abse{pq}$.

\newcommand{\boundaryoversegment}[0]{\mathchoice
    {\sec(\frac{\theta}{2})}
    {\sec(\theta/2)}
    {\sec(\theta/2)}
    {\sec(\theta/2)}
}

\begin{lemma} \label{lem:boundary_over_segment}
    Let $a$ and $b$ be adjacent sample points on $\boundary{A}$, where $a$ appear after $b$ in a counter-clockwise walk. We have that $\abse{\boundary{A}(a, b)} \leq \boundaryoversegment \cdot \abse{ab}$. 
\end{lemma}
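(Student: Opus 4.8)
The statement is a bound on the arc length of the boundary between two adjacent sample points in terms of the chord length. The key geometric fact driving it is that $a$ and $b$ are consecutive extreme points, say $a = \samplep(A, k+1)$ and $b = \samplep(A, k)$ (consecutive in the cyclic order of directions), so the outward normal directions at $a$ and at $b$ differ by at most $\theta$. Convexity then forces the subboundary $\boundary{A}(a, b)$ to be squeezed into a thin region: every point on the arc lies between the supporting line at $a$ and the supporting line at $b$, and these two lines make an angle of at most $\theta$ with one another.

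**Key steps.** First I would set up coordinates: rotate so that the chord $\segment{ab}$ is horizontal, or better, so that the bisecting direction $r(k\theta + \theta/2)$ is vertical. By the definition of sample points, $a$ is extreme in direction $r(k+1)$ and $b$ is extreme in direction $r(k)$, so $A$ lies in the halfplane below the supporting line $l_t(A, a)$ (perpendicular to $r(k+1)$) and below the supporting line $l_t(A, b)$ (perpendicular to $r(k)$). Hence the arc $\boundary{A}(a, b)$ — which is part of $\boundary A$ — lies inside the wedge bounded by these two supporting lines, and also on the far side of the chord $\segment{ab}$ from the interior. Second, I would use convexity to say that $\boundary{A}(a,b)$ is itself a convex arc, so its length is bounded by the length of the two-segment polygonal path from $a$ to the apex $v$ of the wedge and then from $v$ to $b$ (a convex curve contained in a triangle, sharing the two endpoints with one side, has length at most the sum of the other two sides). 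Third, the triangle $\triangle a v b$ has apex angle $\pi - \theta$ at $v$ (the two supporting lines meet at angle $\theta$), so by symmetry in the worst case the base angles are each $\theta/2$; an elementary trig computation gives $\abse{av} + \abse{vb} \le \abse{ab}/\cos(\theta/2) = \sec(\theta/2)\cdot\abse{ab}$, since $\abse{av} = \abse{vb} = (\abse{ab}/2)/\cos(\theta/2)$ when the triangle is isoceles, and the isoceles case maximizes the perimeter for a fixed base and fixed apex angle.

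**The main obstacle.** The delicate point is making the reduction "arc length $\le$ perimeter of the bounding triangle" fully rigorous, including the degenerate cases: when $a$ is extreme for several consecutive directions (so $a = b$ possibly, or the relevant normals span a range) one must check the angle between the supporting lines is still at most $\theta$; and when there are two extreme points for one direction lying on a boundary segment, the "arc" may include a straight segment, which only helps. I would also need to justify the claim that a convex arc inside a triangle with the same endpoints as one side is no longer than the other two sides — this follows from the standard monotonicity of perimeter for nested convex curves, applied to the convex hull of the arc versus the triangle. Finally I would argue that among all triangles with base $\abse{ab}$ and apex angle $\ge \pi - \theta$, the perimeter is maximized in the isoceles configuration, which reduces the bound to the clean $\sec(\theta/2)$ factor; this is a one-variable optimization that I would state rather than grind through. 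Everything else is routine.
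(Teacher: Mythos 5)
Your proposal takes essentially the same route as the paper: both bound $\abse{\boundary{A}(a,b)}$ by the two-segment path through the apex $v$ of the wedge formed by the supporting lines at $a$ and $b$, use the sample-point construction to bound the base angles of $\triangle avb$ by $\theta$ in total, and finish with an elementary trigonometric optimization showing the isoceles case ($\alpha = \beta = \theta/2$) is extremal, yielding the factor $\sec(\theta/2)$. Your extra attention to the degenerate cases and to justifying the convex-arc-in-a-triangle bound is reasonable bookkeeping that the paper elides, but the core argument is the same.
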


\begin{proof}
    \begin{figure}[tbh]
        \centering
        \includegraphics[scale=0.75]{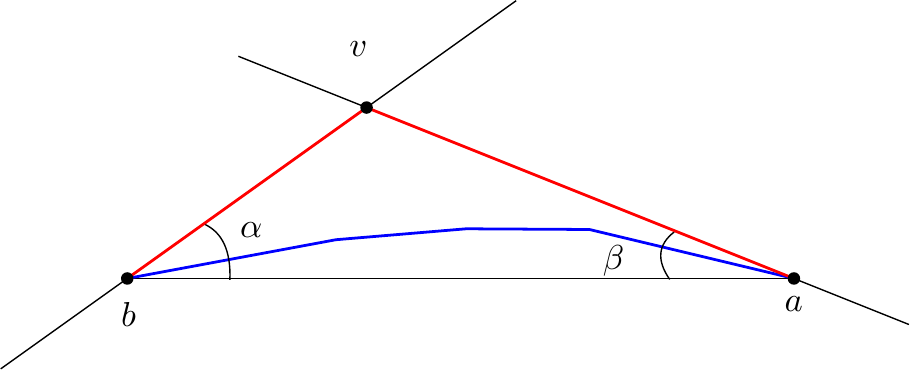}
        \caption{The length of $\boundary{A}(a, b)$ (blue) is upperbounded by $\abse{av} + \abse{bv}$ (red). The two lines are the tangent lines of $A$ through $a$ and $b$, respectively. }
        \label{fig:boundary_over_segment}
    \end{figure}

    See Figure~\ref{fig:boundary_over_segment}. Let $v$ be the intersection of $l_t(a)$ and $l_t(b)$. By convexity of $A$, $\abse{\boundary{A}(a, b)} \leq \abse{av} + \abse{vb}$. By the construction of the sample points, $\measuredangle{vab} + \measuredangle{vba} \leq \theta$. Let $\alpha = \measuredangle{vab}$, and let $\beta = \measuredangle{vba}$. Then, using the law of sines, we have the following. 

    \begin{align*}
        \abse{\boundary{A}(a, b)} \leq \abse{av} + \abse{bv} \leq \frac{\sin(\beta) \cdot \abse{ab}}{\sin(\theta)} + \frac{\sin(\alpha) \cdot \abse{ab}}{\sin(\theta)} = \frac{\sin(\alpha) + \sin(\beta)}{\sin(\theta)} \cdot \abse{ab}
    \end{align*}

    The above expression is maximised when $\alpha = \beta = \theta/2$, resulting in the following.
    \begin{align*}
        \abse{\boundary{A}(a, b)} \leq \frac{2\sin(\frac{\theta}{2})}{\sin(\theta)} = \sec\left(\frac{\theta}{2}\right)\cdot \abse{ab} &\qedhere
    \end{align*}
\end{proof}

\newcommand{\simpdist}[1]{\normalfont \textbf{d}_\textbf{S}(#1)}

The above lemma implies the following. Let $\simpdist{a, b}$ be the distance between point $a$ and point $b$ in the environment with simplified obstacles and simplified 0-regions, and let $P$ be the path achieving this distance. If we partition $P$ using the sample points, in the worst case, each segment connects adjacent sample points on obstacles. This implies the following corollary.

\begin{corollary} \label{cor:boundarydist_over_simplified_obstacles}
    Let $a$ and $b$ be two sample points. We have that $\distance(a, b) \leq \boundaryoversegment \cdot \simpdist{a, b}$.
\end{corollary}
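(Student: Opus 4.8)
The plan is to derive Corollary~\ref{cor:boundarydist_over_simplified_obstacles} directly from Lemma~\ref{lem:boundary_over_segment} by a segment-by-segment accounting argument. First I would set up the right object: let $P$ be a path realising $\simpdist{a, b}$, i.e.\ a shortest path from $a$ to $b$ in the environment where every obstacle and every $0$-region has been replaced by its simplified (polygonal) version $\simple{\cdot}$. Since $a$ and $b$ are sample points, they are vertices of the simplified regions, so $P$ is a well-defined polygonal path whose weight is the total Euclidean length of the portions of $P$ lying in the $1$-region (the portions inside simplified $0$-regions contribute $0$, and $P$ avoids the simplified obstacles).

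Next I would transform $P$ into a path $P'$ in the \emph{original} environment and bound its weight. The natural map replaces each maximal sub-walk of $P$ that runs along the boundary of a simplified obstacle $\simple{A}$ by the corresponding arc of $\boundary{A}$, and leaves the straight-line pieces through the $1$-region untouched; pieces through simplified $0$-regions are similarly handled so that they still pass through the original $0$-region at zero cost (here one uses that the original $0$-region contains its simplification, or more carefully that a straight segment can be rerouted through the original $0$-region — the excerpt's Lemma~\ref{lem:tmap}-style sliding argument already licenses this). Partitioning $P$ at the sample points, in the worst case every piece is a segment $\segment{cd}$ joining two adjacent sample points $c, d$ of the same simplified obstacle, i.e.\ a boundary edge of $\simple{\cdot}$; Lemma~\ref{lem:boundary_over_segment} gives $\abse{\boundary{A}(c,d)} \le \boundaryoversegment \cdot \abse{cd}$ for each such piece, and the straight $1$-region pieces satisfy the bound trivially (with factor $1 \le \sec(\theta/2)$). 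Summing over all pieces of the partition yields
\begin{align*}
    \distance(a, b) \le \weight(P') \le \sum_{\segment{cd}} \abse{\boundary{A}(c, d)} \le \boundaryoversegment \sum_{\segment{cd}} \abse{cd} = \boundaryoversegment \cdot \simpdist{a, b},
\end{align*}
which is exactly the claimed inequality.

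The main obstacle I anticipate is not the length bookkeeping — that is a direct application of Lemma~\ref{lem:boundary_over_segment} — but justifying that the rerouting $P \mapsto P'$ actually produces a feasible path in the original environment: one must check that replacing simplified-obstacle boundary arcs by original-obstacle boundary arcs does not create a collision with some other region, and that the straight segments of $P$ through the $1$-region (or through simplified $0$-regions) can be kept or slightly perturbed so they do not pierce any original obstacle. Because the original regions are convex and each simplified region is inscribed in its original (sample points lie on $\boundary A$), the "expansion" from $\simple A$ to $A$ only occupies space that $P$ was already skirting; still, I would spell out that $P$, which hugs $\boundary{\simple A}$, can be continuously deformed onto $\boundary A$ without sweeping across the interior of any third region, using non-overlap of the $A$'s. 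Once feasibility is granted, the factor $\sec(\theta/2)$ follows from the worst-case partition into boundary edges exactly as the paragraph preceding the corollary already sketches.
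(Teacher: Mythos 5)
Your argument follows the same route as the paper's one-paragraph sketch: realise $\simpdist{a,b}$ by a path $P$ in the simplified environment, partition $P$ at sample points, apply Lemma~\ref{lem:boundary_over_segment} to each boundary-edge piece, and note the remaining pieces cost no more, so summing yields the $\sec(\theta/2)$ factor; the explicit rerouting $P \mapsto P'$ and the feasibility concern you raise are reasonable elaborations of that sketch.

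One part of your bookkeeping is slightly too optimistic, and it is exactly the point you flag yourself. You assign factor $1$ to the straight $1$-region pieces of $P$ and propose they can be ``kept or slightly perturbed.'' But a straight segment of $P$ may cut through the lens $A \setminus \simple{A}$ of an original obstacle $A$ while avoiding $\simple{A}$; such a segment cannot be perturbed around the lens, it must detour along $\boundary{A}$, and that detour is not free. It still only costs the $\sec(\theta/2)$ factor: since the segment avoids the interior of the convex polygon $\simple{A}$, the chord $\segment{uv}$ it leaves inside $A$ lies inside a single lens piece, so $u$ and $v$ lie on the arc between two adjacent sample points and the tangent-angle argument of Lemma~\ref{lem:boundary_over_segment} applies verbatim to the sub-arc $\boundary{A}(u,v)$; non-overlap of the regions makes the rerouted arc collision-free, just as in your boundary-piece argument. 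With that repair, the estimate $\distance(a,b) \le \weight(P') \le \sec(\theta/2)\cdot\simpdist{a,b}$ goes through as you compute.
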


\subsection{Snapping a segment of the optimal path to the sample points}
Gewali et al.~\cite{gewaliPathPlanningWeighted1988} defined three types of \textit{locally optimal} edges joining two simple polygonal regions, and they proved that the shortest path from $s$ to $t$ must be comprised of these locally optimal edges \cite[Lemma~2.5]{gewaliPathPlanningWeighted1988} (ignoring edges in 0-regions). For convex obstacles and 0-regions, we need to consider only four types of segments. 

\begin{fact} 
\label{fac:types_of_edges}
    If segment $\segment{pq}$ is in the optimal weighted path $P^*$ amidst convex and non-overlapping 0-regions and obstacles, there must exists two supporting lines $l_t(p)$ and $l_t(q)$ such that $\segment{pq}$ belong to one of the following cases (ignoring segments in 0-regions). 
    \begin{enumerate}[\indent1)]
        \item $\segment{pq}$ connects two 0-regions such that $\segment{pq} \perp l_t(q)$ and $\segment{pq} \perp l_t(p)$. 
        \item $\segment{pq}$ connects the point $p$ on a 0-region $A$ and the point $q$ on an obstacle $B$ such that $\segment{pq} \subset l_t(q)$, and $\segment{pq} \perp l_t(p)$.
        \item $\segment{pq}$ lies on one of the common tangent of two different obstacles.
        \item $p$ and $q$ are two points on the same obstacle, and $\segment{pq} = \boundary{A}(p, q)$ or $\segment{pq} = \boundary{A}(q, p)$.
    \end{enumerate}
\end{fact}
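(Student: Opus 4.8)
The plan is to adapt the local-optimality argument of Gewali et al.~\cite[Lemma~2.5]{gewaliPathPlanningWeighted1988} to the convex setting, which is actually simpler than the general polygonal case because a convex region has a well-defined supporting line at every boundary point. I would first establish the structure of $P^*$: since $P^*$ is a weighted shortest path amidst convex non-overlapping $0$-regions (weight $0$) and obstacles (weight $\infty$), it is a polygonal path except where it follows the boundary of an obstacle, and its cost is the total Euclidean length of the portion lying in the weight-$1$ region. A maximal portion of $P^*$ between two consecutive ``contact events'' with region boundaries is either a straight segment in free space or an arc along an obstacle boundary (case~4, which follows immediately from convexity: the shortest way to get around a convex obstacle while staying outside it is to hug its boundary, $\segment{pq} = \boundary{A}(p,q)$ or $\boundary{A}(q,p)$). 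So the work is to classify a straight segment $\segment{pq}$ of $P^*$ whose endpoints $p$ and $q$ lie on region boundaries.

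Next I would run the standard first-variation / perturbation argument. Fix $\segment{pq}$ with $p$ on the boundary of region $A$ and $q$ on the boundary of region $B$. Perturb $p$ infinitesimally along $\boundary{A}$ (and symmetrically $q$ along $\boundary{B}$) and examine the change in the cost of $P^*$. At an interior boundary point of a convex region there is a supporting line $l_t(A,p)$ (Observation~\ref{obs:shortest_connection_perpendicular_to_tangent} supplies one); the feasible perturbation directions of $p$ are, to first order, along $l_t(A,p)$. The sub-cases then fall out from whether $A$ is a $0$-region or an obstacle:
\begin{enumerate}[\indent1)]
    \item If both $A$ and $B$ are $0$-regions, moving $p$ along $\boundary A$ costs nothing inside $A$ (weight $0$), so optimality of $P^*$ forces the derivative of $\abse{pq}$ with respect to this motion to vanish, i.e. $\segment{pq}\perp l_t(A,p)$; symmetrically $\segment{pq}\perp l_t(B,q)$. (This is exactly Observation~\ref{obs:shortest_connection_perpendicular_to_tangent}, here re-derived as a consequence of $\segment{pq}\subseteq P^*$.)
    \item If $A$ is a $0$-region and $B$ an obstacle, the same argument at $p$ gives $\segment{pq}\perp l_t(A,p)$. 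At $q$, the path cannot enter the obstacle $B$, so just before leaving $B$ the path runs along $\boundary B$; hence $\segment{pq}$ leaves $q$ tangentially, $\segment{pq}\subset l_t(B,q)$.
    \item If both $A$ and $B$ are obstacles, the path is tangent to each at the point where it departs, so $\segment{pq}\subset l_t(A,p)$ and $\segment{pq}\subset l_t(B,q)$; a line tangent to both convex bodies at the contact points is a common tangent of $A$ and $B$.
\end{enumerate}
In each case, if the derivative of the cost did not satisfy the stated condition, one could push $p$ (or $q$) in the cost-decreasing direction, staying feasible by convexity of the regions, contradicting optimality of $P^*$; this yields a strictly shorter weighted path.

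The main obstacle is handling the degenerate configurations carefully: the endpoint $p$ may be a vertex of $A$ (so $l_t(A,p)$ is not unique — here one must observe that the appropriate supporting line is the one the first-variation argument ``selects,'' namely one for which no feasible perturbation decreases the cost), the segment $\segment{pq}$ may be collinear with a boundary edge, and $s$ or $t$ (treated as degenerate $0$-regions with no interior) may be endpoints, in which case every line through them is supporting and the conditions are vacuous. I would also need to argue that these four cases are exhaustive — every maximal sub-piece of $P^*$ is one of them — which follows from the observation that between two consecutive boundary-contact events the path is a straight segment in the weight-$1$ region (geodesics in a uniform-weight region are straight) and each of its two endpoints touches either a $0$-region or an obstacle. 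This is essentially a restatement of the Gewali et al.\ argument specialized to convex regions, so I expect it to be routine once the perturbation setup and the vertex/degenerate cases are pinned down.
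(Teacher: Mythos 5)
The paper does not actually prove Fact~\ref{fac:types_of_edges}: it is stated without proof, attributed to Gewali et al.~\cite[Lemma~2.5]{gewaliPathPlanningWeighted1988} for polygonal regions, with the remark that convexity reduces the catalogue to the four cases listed. Your first-variation sketch is the natural specialization of that local-optimality argument to the convex setting and is correct in outline, so in effect you are supplying the proof the paper delegates to the citation.

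One caution, though, about the justification you give for the tangency condition in Case~2 (and implicitly Case~3). You write that at $q$ the path ``cannot enter the obstacle $B$, so just before leaving $B$ the path runs along $\boundary{B}$; hence $\segment{pq}$ leaves $q$ tangentially.'' This is not an implication: a segment that meets $\boundary{B}$ at $q$ from the outside at any non-tangential angle is perfectly feasible (it does not enter the interior of $B$), and a touch-and-leave at an isolated point $q$, never running along $\boundary{B}$, is also feasible. Tangency is not forced by the non-penetration constraint; it is forced by optimality, exactly as in your Case~1. Concretely, if $\segment{pq} \not\subset l_t(B,q)$, consider the continuation of $P^*$ past $q$: if $P^*$ hugs $\boundary{B}$ after $q$, sliding $q$ forward along $\boundary{B}$ strictly shortens $\abse{pq}$ plus the boundary arc by a second-order-small convexity argument; if $P^*$ leaves $B$ immediately after touching at $q$, the two incident segments make an angle strictly less than $\pi$ on the side away from $B$, and shortcutting past $q$ (which stays outside the convex obstacle) strictly shortens the path. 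Either way $q$ is not a local optimum. You do invoke the first-variation / cost-decreasing-perturbation principle in your closing paragraph, so the right idea is present; but the per-case justification for Cases~2 and~3 should appeal to that optimality argument rather than to the obstacle constraint, which by itself proves nothing about the angle of incidence.
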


In Section~\ref{sec:case1},~\ref{sec:case2}, and~\ref{sec:case3}, we handle each type of edge in Case 1, 2, and 3, respectively. For an edge $\segment{pq}$ in each of these cases, we argue that a good path in our data structure approximates $\segment{pq}$. Section~\ref{sec:01infty_quality_of_path} summarises the approximating ratio using the Case~4 edges.

More specifically, in the following subsections, we argue that for each type of segment $\segment{pq} \subseteq P^*$, there exists a path $P$ constructed using our data structure such that the length of $P$ approximates $\abse{pq}$. If $\segment{pq}$ is parallel to a direction $r(k)$, based on the construction of the sample points and Fact~\ref{fac:types_of_edges}, $p$ and $q$ are both sample points, and the argument is straightforward. Therefore, we will focus on the scenario where $\segment{pq}$ is not parallel to any predefined direction. We assume without loss of generality that $\segment{pq}$ lies between the direction $r(k)$ and $r(k + 1)$, and $\alpha$ (resp. $\beta$) is the measure of the acute angle between $r(p, k)$ (resp. $r(p, k + 1)$) and $\segment{pq}$. Furthermore, by Corollary~\ref{cor:boundarydist_over_simplified_obstacles}, we consider only simplified obstacles. 

\subsubsection{Case 1: $\segment{pq}$ connects two $0$-regions} \label{sec:case1}
We observe that, unfortunately, Lemma~\ref{lem:tmap} does not trivially apply. When we rotate $\segment{pq}$ to $\segment{pq'}$, if $\segment{pq'}$ overlaps obstacles, a path generated using the skewed set of obstacles can be much longer than $\abse{pq'}$, since the path would have to take a detour around the obstacles. The following lemmas resolve this issue. 

\begin{lemma} \label{lem:rotation_with_obstacle}
    Let $A$ and $B$ be two $0$-regions. Let $\segment{pq} \subseteq P^*$, where $p \in \boundary{A}$, and $q \in \boundary{B}$. If $\abse{qb} \geq \qqprimeoverpq \cdot \abse{pq}$, where $b$ is a sample point adjacent to $q$, then there exists a sample point $a \in \boundary{A}$ such that $\simpdist{a, b} \leq \pqprimeoverpq \cdot \abse{pq}$.
\end{lemma}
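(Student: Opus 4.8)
The plan is to mimic the proof of Lemma~\ref{lem:tmap}, but to route the rotated segment through the simplified obstacles (and 0-regions) rather than insisting that it avoid them, invoking Corollary~\ref{cor:boundarydist_over_simplified_obstacles} to control the extra cost. Concretely, I would first apply Lemma~\ref{lem:cost_of_rotation}: since $\segment{pq}$ realises $\distance(A,B)$ and lies strictly between directions $r(k)$ and $r(k+1)$ with the hypothesis $\abse{qb}\ge(\sin(\alpha)/\cos(\theta))\cdot\abse{pq}$, there is a point $q'\in\boundary{B}(b,q)$ with $\segment{pq'}$ parallel to direction $r(k)$ and $\abse{pq'}\le(\cos(\beta)/\cos(\theta))\cdot\abse{pq}$. (The hypothesis on $\abse{qb}$ is exactly what guarantees $q'$ lands on the sub-boundary between $b$ and $q$, as in Lemma~\ref{lem:tmap}.) The key new idea is that instead of asking that $\segment{pq'}$ miss all obstacles, I treat $\segment{pq'}$ as a curve through the subdivision of simplified regions and bound $\simpdist{a,b}$ for a suitable sample point $a$ near $p$.

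Next I would slide $\segment{pq'}$, keeping its orientation $r(k)$ fixed, moving $p$ along $\boundary{A}(p,a)$ and $q'$ along $\boundary{B}(b,q')$ until the segment first touches a sample point. As in Lemma~\ref{lem:tmap}, the touched sample point is either $a$ itself (so $p$ becomes the sample point $a$), or $b$, or some point $v\notin A\cup B$ that is extreme in direction $r(k\theta-\pi/2)$; in the last case $v$ is the appropriate endpoint for the argument. After sliding, the segment — call it $\segment{aq'}$ with $a$ a sample point on $A$ — has length at most $\abse{pq'}$, and it passes through some sequence of simplified 0-regions and simplified obstacles. I then observe that the straight segment $\segment{aq'}$ is itself a path in the environment of simplified regions from $a$ to $q'$ (obstacles have weight $\infty$, but the \emph{length} of $\segment{aq'}$ — not its weighted cost — is what we need, since $\simpdist{}$ already accounts for the obstacle structure): more carefully, $\simpdist{a,q'}$ is the shortest obstacle-avoiding path among the simplified regions, which is at most $\abse{aq'}$ only when $\segment{aq'}$ is obstacle-free. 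So the honest move is to bound $\simpdist{a,q'}$ by the length of the shortest obstacle-avoiding detour that stays within the "tube" swept by the sliding segment; but this detour lies in the convex hull of $\{a,q'\}$ together with the simplified obstacles it threads, and its length is at most $\abse{aq'}$ plus boundary-walking terms that Corollary~\ref{cor:boundarydist_over_simplified_obstacles} absorbs. Finally $q'$ and $b$ lie on $\boundary{B}$ with $b$ adjacent, so $\simpdist{q',b}=0$ ($B$ is a 0-region), giving $\simpdist{a,b}\le\simpdist{a,q'}\le\abse{pq'}\le(\cos(\beta)/\cos(\theta))\cdot\abse{pq}$.

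The main obstacle is precisely the step where $\segment{pq'}$ crosses simplified obstacles: I cannot simply say "follow $\segment{pq'}$" because that segment may pierce an obstacle, and a detour around a fat obstacle could in principle cost much more than $\abse{pq'}$. The resolution I would push on is that because $\segment{pq}$ (the \emph{original} distance-realising segment) is obstacle-free, and $\segment{pq'}$ differs from $\segment{pq}$ only by a rotation through angle at most $\theta<\pi/6$ pinned at $p$ with the far endpoint sliding along $\boundary{B}$, the region swept between $\segment{pq}$ and $\segment{pq'}$ is a thin circular wedge; any simplified obstacle met by $\segment{pq'}$ must have a vertex in this wedge, so walking along the near side of such an obstacle's simplified boundary costs only an $O(\theta)$-fraction more than the chord, which Lemma~\ref{lem:boundary_over_segment} (via Corollary~\ref{cor:boundarydist_over_simplified_obstacles}) turns into the stated $\sec$-type factor — and this is already subsumed in the later accounting when $\theta\in O(\e)$. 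I would state the detour bound as a short sub-claim, prove it by the wedge argument, and then the displayed chain of inequalities closes the lemma.
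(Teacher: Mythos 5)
Your proposal correctly identifies the crux of the lemma — handling the case where the rotated segment $\segment{pq'}$ crosses obstacles — but it leaves precisely that crux unproven. You reduce the problem to a claimed ``wedge argument'': any obstacle crossed by $\segment{pq'}$ has a vertex in the thin wedge between $\segment{pq}$ and $\segment{pq'}$, so the detour around it costs only an $O(\theta)$-fraction more, and this is ``absorbed'' by Corollary~\ref{cor:boundarydist_over_simplified_obstacles}. This does not hold up as stated. Corollary~\ref{cor:boundarydist_over_simplified_obstacles} compares $\distance$ in the original environment to $\simpdist{}$ in the simplified environment; it is a multiplicative $\sec(\theta/2)$ correction between two different metrics, not a bound on the \emph{additive} cost of detouring around an obstacle that sits in the way of a single straight segment. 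Likewise, Lemma~\ref{lem:boundary_over_segment} bounds the boundary arc between two \emph{adjacent} sample points by the chord, but a single convex obstacle can pierce $\segment{pq'}$ while spanning many sample points, and a chain of obstacles can interleave; neither lemma, by itself, sums those detours to a clean factor of $\cos(\beta)/\cos(\theta)$. You acknowledge this by promising a ``short sub-claim,'' but the sub-claim is exactly what needs to be proven, and it is not short.

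The paper's proof replaces the slide-and-detour plan with a genuinely different construction. Instead of sliding $\segment{pq'}$ and then patching around obstacles, it runs a \emph{sweep-ray} process: starting from the ray $r(p,0)$ aligned with $\segment{pq}$, it rotates the ray counter-clockwise until it first hits a sample point $a_1$ on a simplified obstacle, then restarts the sweep from $a_1$, accumulating rotation angles $\alpha_0, \alpha_1, \ldots$ until the ray reaches direction $r(k+1)$ or lands on $\boundary{B}$; a symmetric sweep in the opposite direction produces the $b_i$'s and terminates at a sample point on $A$. This yields an explicit polygonal path $a_l, \ldots, a_1, a_0=b_0, b_1, \ldots, b_{l'}$ through sample points, every edge of which makes angle at most $\alpha$ (resp.\ $\beta$) with the direction of $\segment{pq}$. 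The total length is then bounded by \emph{projecting} each edge onto the direction $r(k+1)$: since each edge is closer to horizontal than $r(k+1)$, its length is at most that of the corresponding sub-segment of $\segment{a_0 q'}$, and summing gives $\sum \abse{a_i a_{i+1}} + \sum \abse{b_i b_{i+1}} \le (\cos(\beta)/\cos(\theta))\cdot\abse{pq}$. This simultaneously produces the sample point $a$, the path realising $\simpdist{a,b}$, and the required bound — no separate detour argument is needed. If you want to salvage the wedge idea you would effectively have to reinvent this sweep-ray construction, because that is the mechanism by which the paper converts ``the obstacles all poke into a thin wedge'' into a length bound. As written, your proof is incomplete at the step you yourself flag as the main obstacle.
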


\begin{proof}
    \begin{figure}[tbh]
        \centering
        \includegraphics[scale=0.75]{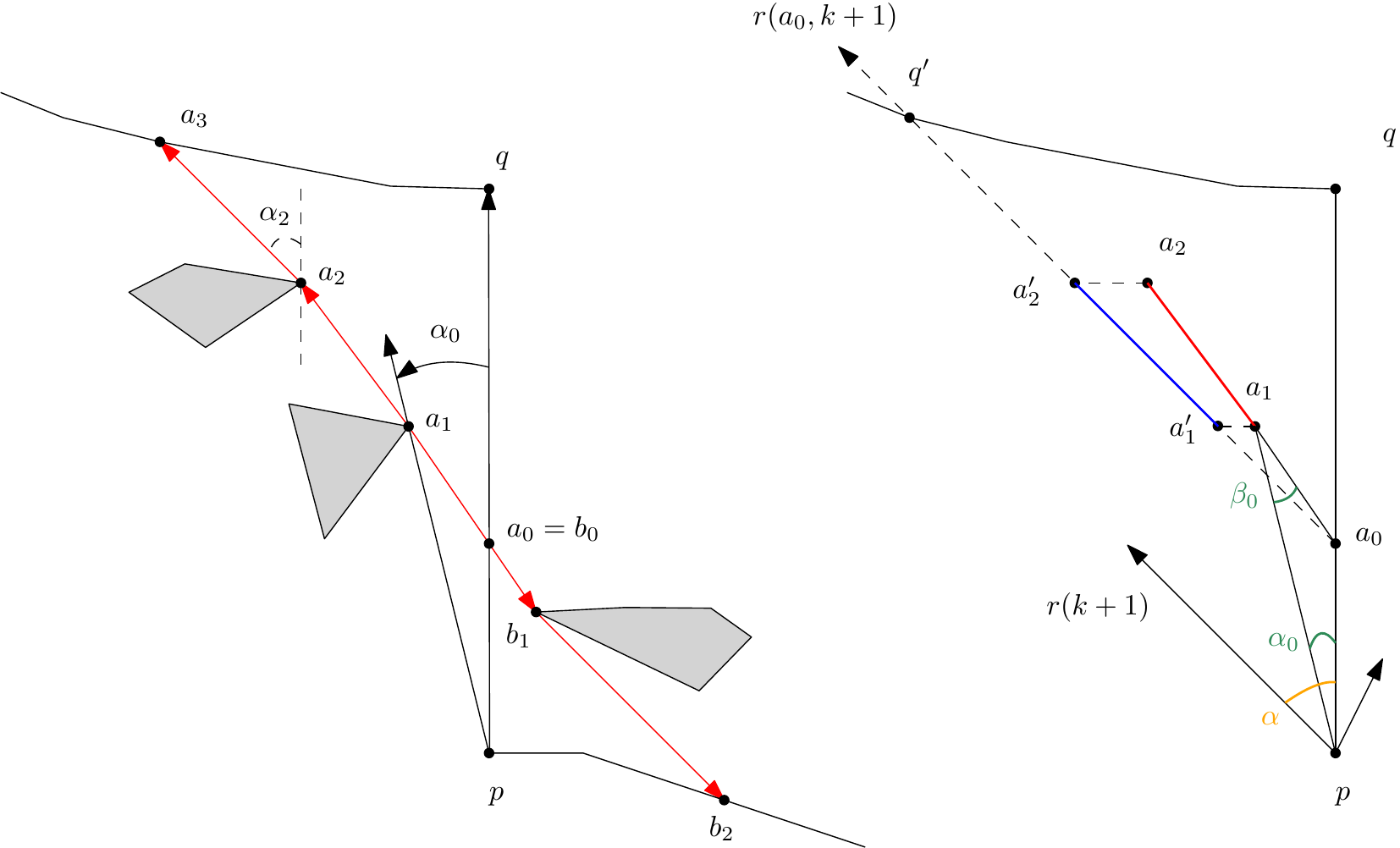}
        \caption{The $x$-axis points upwards. In the left figure, the red path is generated with the sweepline process described in Lemma~\ref{lem:double_rotation_with_obstacle}. In the right figure, by triangle inequality, $\abse{a_1 a_2} \leq \abse{a_1'a_2'}$, and $\beta_0 + \alpha_0 \leq \alpha$.}
        \label{fig:0to0_with_obstacles}
    \end{figure}
    
    In Lemma~\ref{lem:tmap}, we showed that if $\abse{qb} \geq \qqprimeoverpq \abse{pq}$, we can transform $\segment{pq}$ into $\segment{pq'}$. If $\segment{pq'}$ intersects no obstacles, there exists a path $P$ from $A$ to $B$ such that $\abse{P}$ approximates $\abse{pq}$, which also implies this lemma. Otherwise, $\segment{pq'}$ intersects at least one obstacle. Without loss of generality, we assume that $\segment{pq}$ is aligned with the $x$-axis, where $x(q) > x(p)$. 
    
    First, consider a horizontal ray $r = r(p, 0)$, and rotate $r$ counter-clockwise until the new ray, $r_0$, hits a sample point $a_1$ on the boundary of some simplified obstacle (see Figure~\ref{fig:0to0_with_obstacles}, left). Let the counter-clockwise rotation from $r$ to $r_0$ be $\alpha_0$. With $\alpha_{i - 1}$ and $a_i$ defined, let $a_{i + 1}$ be the first sample point that is hit by rotating the ray $r_i = r(a_i, \sum\limits_{j = 0}^{i - 1} \alpha_j)$ counter-clockwise. We continue this process until either $r_i$ is parallel with direction $r(k + 1)$, or $r_i$ hits a sample point on $\boundary{B}$. Let this final ray be $r_l$, and $r_l$ hits $B$ at a sample point $a_l$. Indeed, if $r_l$ is in the direction $r(k + 1)$, then $r(a_{l - 1}, k)$ hits $B$, and $a_l$ is a propagated sample point. 
    
    Similarly, let $r' = r(a_1, \pi + \alpha_0)$. We perform the same process starting from $r'$. Let $b_1, ..., b_{l'}$ be the sample points generated, where $b_{l'} \in \boundary{A}$, and let $\beta_0, ..., \beta_{l' - 1}$ be the set of angles generated. 

    Let the intersection of $\segment{a_1b_1}$ and $\segment{pq}$ be $a_0 = b_0$. We will show the following.
    \begin{align*}\tag{1}
        \sum_{i = 0}^{i = l - 1} \abse{a_i a_{i + 1}} + \sum_{i = 0}^{i = l' - 1} \abse{b_i b_{i + 1}} \leq \pqprimeoverpq \cdot \abse{pq}
    \end{align*}

    Observe that $\alpha_0 + \beta_0 \leq \alpha$, since the sweep-ray process stops once the ray is parallel with direction $r(k + 1)$. Let $q'$ be the point on $\boundary{B}(b, q)$ such that $\segment{a_0q'}$ is parallel with $r(k + 1)$. Let $a_i'$ be the point on $\segment{a_0q'}$, such that $x(a_i') = x(a_i)$ (see Figure~\ref{fig:0to0_with_obstacles}, right). By triangle inequality, $\abse{a_i a_{i + 1}} \leq \abse{a'_i a'_{i + 1}}$. By Lemma~\ref{lem:cost_of_rotation}, $\abse{a_0 q} < \pqprimeoverpq \cdot \abse{a_0 q'}$. Combining the above, we have the following.
    \[ \sum_{i = 0}^{i = l - 1} \abse{a_i a_{i + 1}} < \sum_{i = 0}^{i = l - 1} \abse{a'_i a'_{i + 1}} < \pqprimeoverpq \abse{a_0 q}\]

    Using an analogous argument on the sample points $\{b_0, ..., b_{l'}\}$, and the fact that $\abse{pq} = \abse{pa_0} + \abse{b_0q}$, we proved Equality~(1). We have shown that there exist two sample points $a_l$ and $b_{l'}$, such that $\simpdist{a_l, b_{l'}} \leq \pqprimeoverpq \cdot \abse{pq}$, completing the proof.
\end{proof}

\begin{lemma} \label{lem:double_rotation_with_obstacle}
    Let $A$ and $B$ be two $0$-regions. Let $\segment{pq} \subseteq P^*$, where $p \in \boundary{A}$, and $q \in \boundary{B}$. If $\abse{qb} < \qqprimeoverpq \cdot \abse{pq}$ and $\abse{pa} < (\sin(\beta)/\cos(\theta)) \cdot \abse{pq}$, where $a$ (resp. $b$) is a sample point adjacent to $p$ (resp. $q$), then $\simpdist{a, b} \leq (\sin(\alpha) + \sin(\beta))/(\cos(\theta)\sin(\theta)) \cdot \abse{pq}$.
\end{lemma}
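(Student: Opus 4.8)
The plan is to exhibit an explicit path from $a$ to $b$ through the simplified obstacles whose length is at most $(\sin(\alpha)+\sin(\beta))/(\cos(\theta)\sin(\theta))\cdot\abse{pq}$; this quantity is exactly $\abse{cp'}+\abse{cq'}$ for the auxiliary points $c$, $p'$, $q'$ of Lemma~\ref{lem:double_rotation}, so the construction combines the sector localisation used in Lemma~\ref{lem:theta_graph} with the monotone sweep-ray detour of Lemma~\ref{lem:rotation_with_obstacle}.

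First I would fix coordinates so that $\segment{pq}$ is horizontal and reintroduce $c$, $p'$, $q'$ as in Lemma~\ref{lem:double_rotation}. Because $\segment{pc}\perp\segment{pp'}$ and $\segment{qc}\perp\segment{qq'}$, the triangles $\triangle pcp'$ and $\triangle qcq'$ are right-angled at $p$ and $q$; a short computation then gives $\abse{pp'}=(\sin(\beta)/\cos(\theta))\cdot\abse{pq}$ and $\abse{qq'}=(\sin(\alpha)/\cos(\theta))\cdot\abse{pq}$, which match the two hypotheses of the lemma. Since $\segment{pq}\subseteq P^*$ joins two $0$-regions, Fact~\ref{fac:types_of_edges} supplies supporting lines with $\segment{pq}\perp l_t(p)$ and $\segment{pq}\perp l_t(q)$; hence, exactly as in Lemma~\ref{lem:theta_graph}, the sample point $a$ adjacent to $p$ lies below $l_t(p)$ and on the correct side of the ray $r(p,(k+1)\theta+\pi/2)$, and together with $\abse{pa}<(\sin(\beta)/\cos(\theta))\cdot\abse{pq}=\abse{pp'}$ this confines $a$ to the circular sector $S_A$ of $D(p,\abse{pp'})$ bounded by $l_t(p)$ and $\segment{pp'}$. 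Symmetrically, $b$ lies in the circular sector $S_B$ of $D(q,\abse{qq'})$ bounded by $l_t(q)$ and $\segment{qq'}$.

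Next I would route a path from $a$ to $b$ that bends near $c$. If the two segments $\segment{ac}$ and $\segment{cb}$ avoid every simplified obstacle, the monotonicity argument of Lemma~\ref{lem:theta_graph} --- pushing $a$ along $\partial S_A$ towards $p'$ keeps $\measuredangle cap'\ge\pi/2$, and likewise for $b$ --- gives $\abse{ac}\le\abse{cp'}$ and $\abse{cb}\le\abse{cq'}$, so the concatenation has length at most $\abse{cp'}+\abse{cq'}$. Otherwise I would replace each blocked segment by a sweep-ray chain exactly as in Lemma~\ref{lem:rotation_with_obstacle}: from $a$ (resp. $b$) rotate a ray monotonically, stopping at successive sample points of the obstacles it meets, until its direction reaches that of $\segment{cp'}$ (resp. $\segment{cq'}$) or it runs into the chain grown from the other endpoint, so that the two chains join into a single $a$-$b$ path. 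Projecting each chain edge onto the straight diagonal it shadows, and using that its slope is no larger, shows by the triangle inequality that the chain is no longer than that diagonal, so the two sub-chains together still have length at most $\abse{cp'}+\abse{cq'}=(\sin(\alpha)+\sin(\beta))/(\cos(\theta)\sin(\theta))\cdot\abse{pq}$. Therefore $\simpdist{a,b}\le(\sin(\alpha)+\sin(\beta))/(\cos(\theta)\sin(\theta))\cdot\abse{pq}$. The degenerate cases --- $p$ or $q$ already a sample point, or $\segment{pq}$ parallel to some $r(k)$ --- are immediate.

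The step I expect to be the main obstacle is the sweep-and-meet argument: showing that the simplified obstacles that actually block $\segment{ac}\cup\segment{cb}$ are trapped inside the thin lens bounded by the obstacle-free segment $\segment{pq}$ and the broken diagonal $a,c,b$, so that both sweeps stay monotone, terminate, and join up correctly; and verifying that composing the projection bound of Lemma~\ref{lem:rotation_with_obstacle} with the sector-monotonicity bound of Lemma~\ref{lem:theta_graph} loses no constant. Everything else is routine trigonometry inherited from Lemmas~\ref{lem:double_rotation},~\ref{lem:theta_graph}, and~\ref{lem:rotation_with_obstacle}.
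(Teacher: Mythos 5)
Your proposal diverges from the paper's proof in a way that leaves a genuine gap. The paper does not use a sweep-ray chain here at all. Instead, it first shows (via a propagated-sample-point contradiction) that any obstacle $C$ overlapping $\segment{ab}$ cannot also overlap the segments $\segment{aa'}$ or $\segment{bb'}$, where $a',b'$ are the feet of the rays $r(a,(k+1)\theta)$ and $r(b,k\theta)$ on $\segment{pq}$; this confines all blocking obstacles to the triangle $\triangle{aub}$ with apex $u = r(a,k)\cap r(b,(k+1)\theta+\pi)$. Then Lemma~\ref{lem:boundary_over_segment} bounds the convex detour inside that triangle by the perimeter $\abse{au}+\abse{bu}$, and a separate parallelogram/right-triangle sliding argument shows $\abse{au}+\abse{bu}$ is maximised precisely when $a=p'$, $b=q'$ (at which point $u=c$), giving $\abse{p'c}+\abse{q'c}$.

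You instead route the path through the \emph{fixed} apex $c$ and try to transplant the sweep-ray chain of Lemma~\ref{lem:rotation_with_obstacle}. That chain is tailored to a sweep starting at $p$ (not at a sample point $a$ sitting somewhere in a sector), rotating monotonically through the fixed angular interval $[r(k),r(k+1)]$, and it yields the ratio $\cos\beta/\cos\theta$, not $(\sin\alpha+\sin\beta)/(\cos\theta\sin\theta)$. For a sweep launched from $a$ towards the direction of $\segment{cp'}$ it is unclear which angular interval it sweeps, why the swept angle remains bounded by something that keeps the projection argument valid, how the two chains from $a$ and $b$ are guaranteed to meet at a common sample point, and why the obstacles encountered are trapped in the region you call a ``thin lens'' --- you have not reproduced the propagated-sample-point argument that does this work in the paper. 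You flagged this exact cluster of issues yourself as ``the main obstacle''; I agree, and in the paper it is resolved not by strengthening the sweep but by abandoning it in favour of the triangle-confinement plus Lemma~\ref{lem:boundary_over_segment} perimeter bound. The sector-localisation part of your proposal and the trigonometric computation of $\abse{cp'}+\abse{cq'}$ are fine, but the obstacle-avoidance step needs to be replaced with the paper's confinement-and-convex-path argument (or an equivalent one) to close the gap.
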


\begin{proof}
    \begin{figure}[tbh]
        \centering
        \includegraphics[scale=0.75]{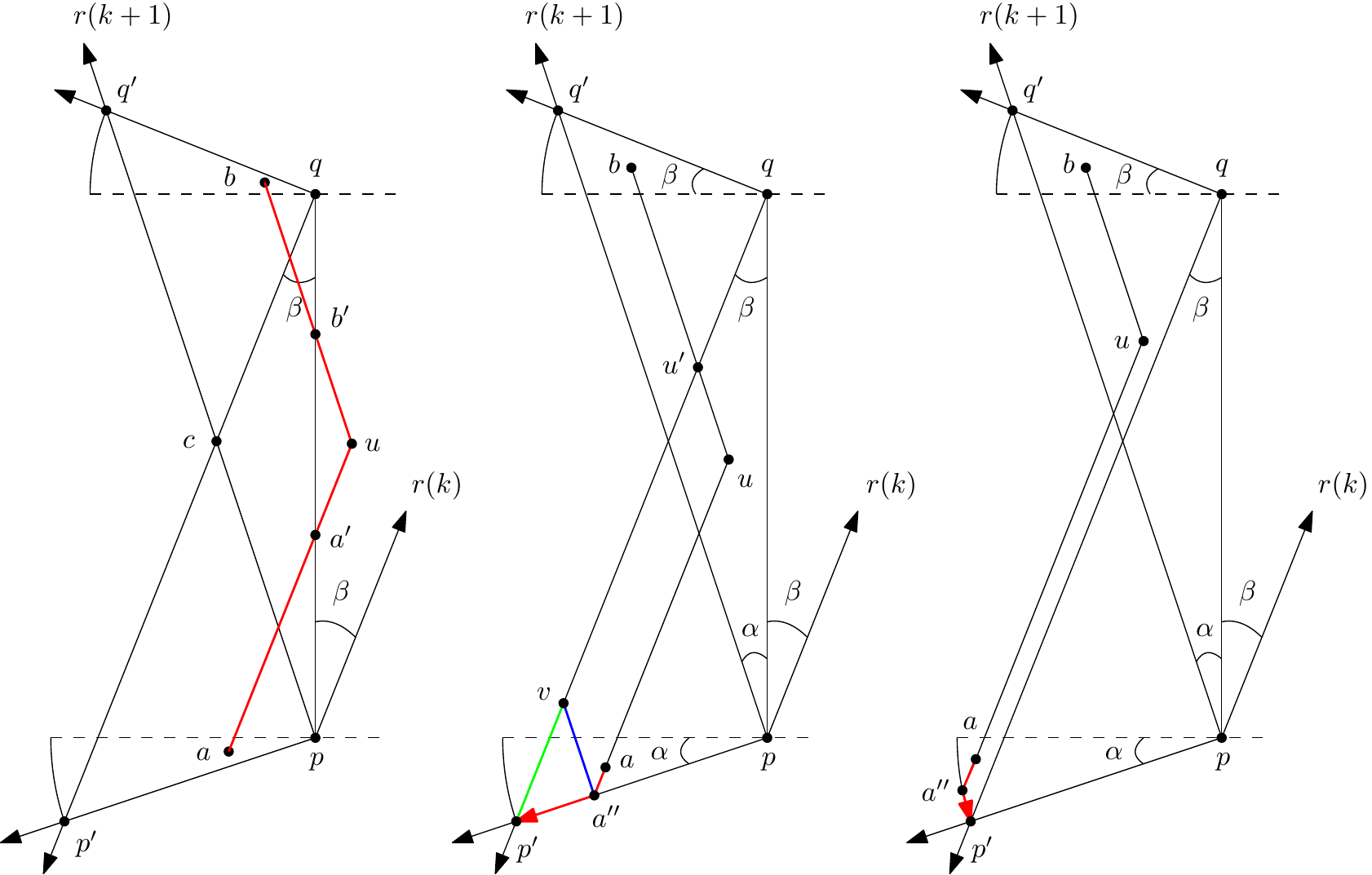}
        \caption{In the left figure, by triangle inequality, $\abse{aa'} + \abse{a'b'} + \abse{b'b}$ is upperbounded by $\abse{au} + \abse{ub}$. In the middle and right figure, fixing $b$, $\abse{au} + \abse{bu}$ strictly increases when $a$ first moves to the boundary of the circular sector, and then moves along the boundary to $p'$ (paths marked in red).}
        \label{fig:double_rotation_with_obstacles_bound}
    \end{figure}
    
    If $\segment{ab}$ overlaps no obstacles, $\simpdist{a, b} \leq \abse{ab}$. Therefore, we focus on the case where $\segment{ab}$ overlaps at least one obstacle. Let $a'$ (resp. $b'$) be the intersection of $r_a = r(a, (k + 1)\theta)$ (resp. $r_b = r(b, k\theta)$) and $\segment{pq}$. See Figure~\ref{fig:double_rotation_with_obstacles_bound}, left for the construction.
    
    Observe that if $\segment{ab}$ overlaps an obstacle $C$, then $C$ cannot overlap $\segment{aa'}$ or $\segment{bb'}$. Assume the opposite that $C$ overlaps $\segment{aa'}$. Since the regions are non-overlapping, and $\segment{pq} \subseteq P^*$, the sample point $v = \samplep(C, k\theta - \pi/2)$ would have to lie in $\triangle{apa'}$. Therefore, the ray $r(v, (k + 1)\theta + \pi)$ hits $A$. As a result, a propagated sample point is closer to $p$ than $a$, contradicting the assumption that $a$ is the closest sample point to $p$. Analogously, $C$ cannot overlap $\segment{bb'}$.

    Consider the intersection $u$ of $r(a, k)$ and $r(b, (k + 1)\theta + \pi)$. We argue that $\simpdist{a, b} \leq \abse{au} + \abse{bu}$. Since no simplified obstacles can intersect $r_a$ or $r_b$, we can build a convex path $P$ from $a$ to $b$ within $\triangle{aub}$ circumventing all obstacles. By Lemma~\ref{lem:boundary_over_segment}, $\abse{P} \leq \abse{au} + \abse{bu}$.

    Let $q'$ be the intersection of $r(p, (k + 1)\theta)$ and $r(q, k\theta + \pi/2)$, and let point $p'$ be the intersection of $r(q, k\theta + \pi)$ and $r(p, (k + 1)\theta + \pi/2)$. Using analogous argument in Lemma~\ref{lem:double_rotation}, we argue that $\abse{ab}$ is maximised when $a = p'$ and $b = q'$. Since $a$ must be above $r(p, (k + 1)\theta + \pi/2)$, below $l_t(p)$ and within a distance $(\sin(\beta)/\cos(\theta)) \cdot \abse{pq}$ from $p$, $a$ must lie in the smaller circular sector $S$ of the disk $D(p, \abse{pp'})$ bounded by $l_t(p)$ and $\segment{pp'}$. 
    
    With $b$ fixed, consider pushing $a$ along $r(a, k\theta + \pi)$ until $a$ lies on the boundary $\boundary{S}$ of $S$. $a$ is either above or below $\segment{p'q}$; see Figure~\ref{fig:double_rotation_with_obstacles_bound}, middle and right, for an illustration of the respective case. We will show that $\abse{au} + \abse{bu}$ strictly increases throughout this process. Observe that by construction, pushing $a$ towards $\boundary{S}$ strictly increases $\abse{au}$ while $\abse{bu}$ remains unchanged.
    
    Let $a''$ be the intersection of $r(a, k\theta + \pi)$ and $\boundary{S}$. If $a''$ lies on $D(p, \abse{pp'})$, pushing $a''$ towards $p'$ increases both $\abse{a''u}$ and $\abse{bu}$. If $a''$ lies on $\segment{pp'}$, let $v$ be the intersection of $r(a'', k + 1)$ and $\segment{qp'}$. Let $u'$ be the intersection of $\segment{bu}$ and $\segment{qp'}$. We argue that $\abse{a''u} + \abse{bu}$ is maximised when $a'' = p'$ and $u = u'$. Specifically, the length we gain ($\abse{p'u'} - \abse{a''u}$) is more than the length we lose ($\abse{uu'}$).
    
    We argue that $\abse{p'u'} - \abse{a''u} \geq \abse{uu'}$. Since $\lozenge{a''vu'u}$ is a parallelogram, $\abse{a''u} = \abse{vu'}$ and $\abse{uu'} = \abse{a''v}$. Therefore, we have that $\abse{p'u'} - \abse{a''u} - \abse{uu'} = \abse{p'v} - \abse{a''v}$. By construction, $\triangle{vp'a''}$ is a right triangle with $\segment{p'v}$ as the hypotenuse. As a result, $\abse{p'v} \geq \abse{a''v}$, and $\abse{p'u'} - \abse{a''u} \geq \abse{uu'}$. Therefore, with $b$ fixed, $\abse{au} + \abse{bu}$ is maximised when $a = p'$. Using an analogous argument by fixing $a$ at $p'$, we have that $\abse{ab}$ is maximised when $a = p'$ and $b = q'$. Combining the above arguments and Lemma~\ref{lem:double_rotation}, we have that
    \begin{align*}
        \simpdist{a, b} \leq \abse{au} + \abse{bu} \leq \abse{p'c} + \abse{q'c} = \frac{\sin(\alpha) + \sin(\beta)}{\cos(\theta)\sin(\theta)} \cdot \abse{pq}. &\qedhere
    \end{align*}
\end{proof}

We combine Lemma~\ref{lem:rotation_with_obstacle} and Lemma~\ref{lem:double_rotation_with_obstacle} to obtain a bound on the path length where $p$ and $q$ both lie on $0$-regions. Note that when two points $p$ and $a$ lie on the boundary of the same 0-region, $\simpdist{a, p} = 0$.

\begin{lemma} \label{lem:0_to_0_with_obstacles}
    Let $A$ and $B$ be two convex 0-regions. Let $\segment{pq} \subseteq P^*$, where $p \in \boundary{A}$ and $q \in \boundary{B}$. There exists a pair of sample points $a \in A$ and $b \in B$, such that $\simpdist{p, a} + \simpdist{a, b} + \simpdist{b, q} \leq \max\{\pqprimeoverpq, (\sin(\alpha) + \sin(\beta))/(\cos(\theta)\sin(\theta))\} \cdot \abse{pq}$.
\end{lemma}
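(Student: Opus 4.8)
The plan is to case-split exactly as the two preceding lemmas do, according to which of the two disjunctive hypotheses on the distances from $p,q$ to their adjacent sample points holds. Recall that $\segment{pq}$ realises $\distance(A,B)$, so by Observation~\ref{obs:shortest_connection_perpendicular_to_tangent} the segment is perpendicular to supporting lines $l_t(A,p)$ and $l_t(B,q)$, and (after the WLOG setup in the text) it lies between directions $r(k)$ and $r(k+1)$ with $\alpha+\beta=\theta$. If $\segment{pq}$ happens to be parallel to some $r(k)$ then $p,q$ are already sample points and the bound is trivial, so we assume it is not. Let $a,a'$ be the sample points of $A$ flanking $p$ and $b,b'$ the sample points of $B$ flanking $q$, labelled so that $a$ (resp. $b$) is the one reached by rotating $\segment{pq}$ toward $r(k+1)$ (resp. toward $r(k)$), matching the conventions in Lemmas~\ref{lem:rotation_with_obstacle} and~\ref{lem:double_rotation_with_obstacle}.

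First I would handle the case $\max\{\abse{pa'},\abse{qb}\}\ge \qqprimeoverpq\cdot\abse{pq}$ or $\max\{\abse{pa},\abse{qb'}\}\ge (\sin(\beta)/\cos(\theta))\cdot\abse{pq}$. WLOG the first alternative with $\abse{qb}$ large holds (the other three sub-cases are symmetric). Then Lemma~\ref{lem:rotation_with_obstacle} directly produces a sample point $a\in\boundary{A}$ with $\simpdist{a,b}\le\pqprimeoverpq\cdot\abse{pq}$. Since $p$ and $a$ lie on the same $0$-region $A$, we have $\simpdist{p,a}=0$, and since $q$ and $b$ lie on the same $0$-region $B$, $\simpdist{b,q}=0$; hence $\simpdist{p,a}+\simpdist{a,b}+\simpdist{b,q}=\simpdist{a,b}\le\pqprimeoverpq\cdot\abse{pq}$, which is at most the claimed maximum.

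In the complementary case all four quantities are small, i.e. $\max\{\abse{pa'},\abse{qb}\}<\qqprimeoverpq\cdot\abse{pq}$ and $\max\{\abse{pa},\abse{qb'}\}<(\sin(\beta)/\cos(\theta))\cdot\abse{pq}$. In particular $\abse{qb}<\qqprimeoverpq\cdot\abse{pq}$ and $\abse{pa}<(\sin(\beta)/\cos(\theta))\cdot\abse{pq}$, so the hypotheses of Lemma~\ref{lem:double_rotation_with_obstacle} are met (with $a$ the closest sample point to $p$ and $b$ the closest to $q$, which is why the ``closest sample point'' phrasing there is compatible). That lemma gives $\simpdist{a,b}\le (\sin(\alpha)+\sin(\beta))/(\cos(\theta)\sin(\theta))\cdot\abse{pq}$, and again $\simpdist{p,a}=\simpdist{b,q}=0$ since each pair lies on a common $0$-region, so the sum is bounded by the second term of the maximum. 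Combining the two cases yields the stated bound. The only real subtlety — and the step I would be most careful about — is bookkeeping the labelling of $a,a',b,b'$ and the sign/rotation conventions so that the disjunction used here is literally the negation of the other, and so that ``the sample point adjacent to $q$'' in Lemma~\ref{lem:rotation_with_obstacle} and ``the closest sample point to $p$'' in Lemma~\ref{lem:double_rotation_with_obstacle} coincide with the $a,b$ we pick; everything else is a direct invocation of the two previous lemmas plus the observation that intra-$0$-region moves are free.
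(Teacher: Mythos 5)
Your proof is correct and takes essentially the same approach as the paper: the paper's treatment of this lemma is just the one-sentence remark immediately preceding it, namely that Lemma~\ref{lem:rotation_with_obstacle} and Lemma~\ref{lem:double_rotation_with_obstacle} are combined using the fact that $\simpdist{p,a}=0$ when $p$ and $a$ lie on the same $0$-region. Your write-up makes explicit the case split (mirroring Lemmas~\ref{lem:tmap} and~\ref{lem:theta_graph}) and the symmetry argument for the remaining sub-cases, which is exactly the implicit content of the paper's remark; the labelling caveat you flag is real but is a gap the paper shares, not one you introduce.
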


\subsubsection{Case 2: $\segment{pq}$ connects two obstacles} \label{sec:case2}
Using Fact~\ref{fac:types_of_edges}, we know that if $\segment{pq}$ connects two obstacles, then $\segment{pq}$ must coincide with a common tangent of $A$ and $B$. When two obstacles are close, $p$ (and $q$) may be very far from its adjacent sample points. Hence, we need the tangent sample points when two obstacles are close (connected via a trapezoidal map). We now show that if $p$ and $q$ lie on obstacles, an approximate path exists in $\graph \in \bhopper$.




\begin{lemma} \label{lem:obs_to_obs}
    Let $A$ and $B$ be two convex obstacles. Let $\segment{pq} \subseteq P^*$, where $p \in \boundary{A}$ and $q \in \boundary{B}$. There exists a pair of sample points $a \in A$ and $b \in B$, such that $\simpdist{p, a} + \simpdist{a, b} + \simpdist{b, q} \leq \pprimeqprimeoverpq \cdot \abse{pq}$.
\end{lemma}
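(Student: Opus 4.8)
The plan is to follow the blueprint of Case~1 (Lemma~\ref{lem:0_to_0_with_obstacles}), but to exploit the extra structure that in Case~2 the segment $\segment{pq}$ is \emph{straight}. By Fact~\ref{fac:types_of_edges}, $\segment{pq}$ lies on a common tangent line $l$ of the obstacles $A$ and $B$ with $l = l_t(p) = l_t(q)$; we may assume $\segment{pq}$ is not parallel to any direction $r(k)$, since otherwise $p$ and $q$ are original sample points and the claim holds with $a=p$, $b=q$. Let $\gamma$ be the direction of $l$, with $k\theta < \gamma < (k+1)\theta$. As in Lemmas~\ref{lem:tmap}--\ref{lem:theta_graph} and~\ref{lem:rotation_with_obstacle}--\ref{lem:double_rotation_with_obstacle}, I would split on whether $p$ and $q$ are both \emph{close} to their bracketing original sample points relative to $\abse{pq}$, or not.

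\emph{Close case.} Suppose $p$ and $q$ lie within the closeness thresholds of Lemma~\ref{lem:theta_graph} of the original sample points $a$ of $A$ and $b$ of $B$ adjacent to them on the side of the tangent direction. Each of $\segment{pa}$ and $\segment{qb}$ lies in the region between $\boundary A$ (resp.\ $\boundary B$) and the edge of $\simple A$ (resp.\ $\simple B$) containing $p$ (resp.\ $q$); this region is free of every simplified obstacle, since it is contained in $\overline{A}$ (resp.\ $\overline{B}$) while the interior of any $\simple C$ is contained in the interior of $C$. Hence $\simpdist{p,a}=\abse{pa}$ and $\simpdist{b,q}=\abse{qb}$. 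For $\simpdist{a,b}$ I would rerun the sweep-and-detour argument of Lemma~\ref{lem:double_rotation_with_obstacle} together with Lemma~\ref{lem:boundary_over_segment}: because $l_t(p)=l_t(q)=l$ and $\segment{pq}\subseteq l$, the points $a$ and $b$ lie in thin circular sectors capped by $l$, and pushing them to the sector extremes $p'$ and $q'$ only increases the length; by the identity of Lemma~\ref{lem:double_rotation}, $\abse{p'q'} = \pprimeqprimeoverpq\cdot\abse{pq}$. Since $\segment{pq}\subseteq l$ rather than $\segment{pq}\perp l$ as in Case~1, the chain $p\to a\to b\to q$ is nearly monotone along $\segment{pq}$, and a careful accounting of the three terms collapses to $\simpdist{p,a}+\simpdist{a,b}+\simpdist{b,q}\le\pprimeqprimeoverpq\cdot\abse{pq}$; the straightness of $\segment{pq}$ is exactly what buys the tighter constant $1/\cos\theta$ here compared with Case~1.

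\emph{Far case, the main obstacle.} Suppose, without loss of generality, that $q$ is \emph{not} within the closeness threshold of its bracketing original sample points. I would then argue, adapting the sliding/sweep construction of Lemma~\ref{lem:rotation_with_obstacle}, that $A$ and $B$ must share a face of some trapezoidal map $\map(k)$: the segment $\segment{pq}$ is unobstructed (it lies in $P^*$ and obstacles have infinite weight), and the local bulk of $B$ near $q$ forces a direction-$r(k)$ segment roughly perpendicular to $l$ to be adjacent to both $A$ and $B$. Once $A$ and $B$ are adjacent in some $\map(k)$, Step~3d of Algorithm~\ref{alg:01infty_construct} inserts a tangent sample point at the touch point of \emph{every} common tangent of $A$ and $B$ — in particular at $p$ and at $q$. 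Taking $a=p$ and $b=q$ gives $\simpdist{p,a}=\simpdist{b,q}=0$ and $\simpdist{p,q}=\abse{pq}$, since $\segment{pq}$ avoids every obstacle interior and the interior of each $\simple C$ lies in the interior of $C$, so the bound holds with room to spare. The delicate point — the step I expect to take the most work — is making ``far $\Rightarrow$ trapezoid-adjacent'' rigorous: one must handle possibly intervening $0$-regions (which block trapezoid-adjacency but not $\segment{pq}$) and the fact that the directions $r(k)$ form only a $\theta$-net, and in the residual situations fall back on the propagated sample points generated in Step~3a, routed through a convex detour as in Lemma~\ref{lem:rotation_with_obstacle}. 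Combining the two cases yields the stated bound.
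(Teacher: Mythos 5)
Your proposal inverts the case split the paper actually uses, and this inversion is where it runs into trouble. The paper splits on whether $A$ and $B$ are \emph{adjacent in some trapezoidal map} or not. If they are adjacent, Step~3d has already placed tangent sample points at $p$ and $q$, so $a=p$, $b=q$ immediately gives $\simpdist{p,q}=\abse{pq}$ and the bound holds trivially. If they are \emph{not} adjacent, the paper takes the closest sample points $a$, $b$ (of any type) to $p$, $q$, shows via the construction of sample points that $a,b$ lie inside the triangle $\triangle pqv$ where $v=r(p,k{+}1)\cap r(q,k\theta{+}\pi)$, argues that $\segment{ab}$ must lie in direction between $r(k)$ and $r(k{+}1)$ (otherwise $A,B$ \emph{would} be trapezoid-adjacent), handles intermediate obstacles via a convex detour inside $\triangle aub$ by Lemma~\ref{lem:boundary_over_segment}, and then closes with a parallelogram charging argument that is carried out separately for the ``$A,B$ on the same side of $\segment{pq}$'' and ``opposite sides'' sub-cases, yielding bounds $\sec(\theta/2)$ and $1/\cos(\beta)$ respectively, each at most $1/\cos\theta$. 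Crucially, this works whether or not $p,q$ are ``close'' to their bracketing sample points in your sense: the containment in $\triangle pqv$ and the direction constraint on $\segment{ab}$ are what drive the bound, not a proximity hypothesis.

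Your ``far $\Rightarrow$ trapezoid-adjacent'' step is the gap, and it is a genuine one. It is not the contrapositive of the paper's observation (``not adjacent $\Rightarrow$ the direction of $\segment{ab}$ is pinned between $r(k)$ and $r(k{+}1)$''); those are different implications. Moreover, it is simply false without further hypotheses: intervening $0$-regions (and even other obstacles crossed only by $\segment{ab}$, not $\segment{pq}$) can prevent $A$ and $B$ from sharing any trapezoid face while $q$ sits far from its bracketing \emph{original} sample points. You flag this yourself and defer to ``propagated sample points \ldots routed through a convex detour,'' but that is precisely the entire content of the non-adjacent case in the paper, not a residual patch. Your close case also has real work left: the claim that ``careful accounting of the three terms collapses'' to $1/\cos\theta$ is exactly the parallelogram charging argument the paper spells out, with the non-trivial same-side vs.\ opposite-side distinction; without doing that accounting you cannot claim the tighter constant. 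Adopting the paper's split on trapezoid-adjacency and then carrying out the $\triangle pqv$ containment and parallelogram argument for both side configurations is the way to close the proof.
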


\begin{proof}
    \begin{figure}[tbh]
        \centering
        \includegraphics[scale=0.75]{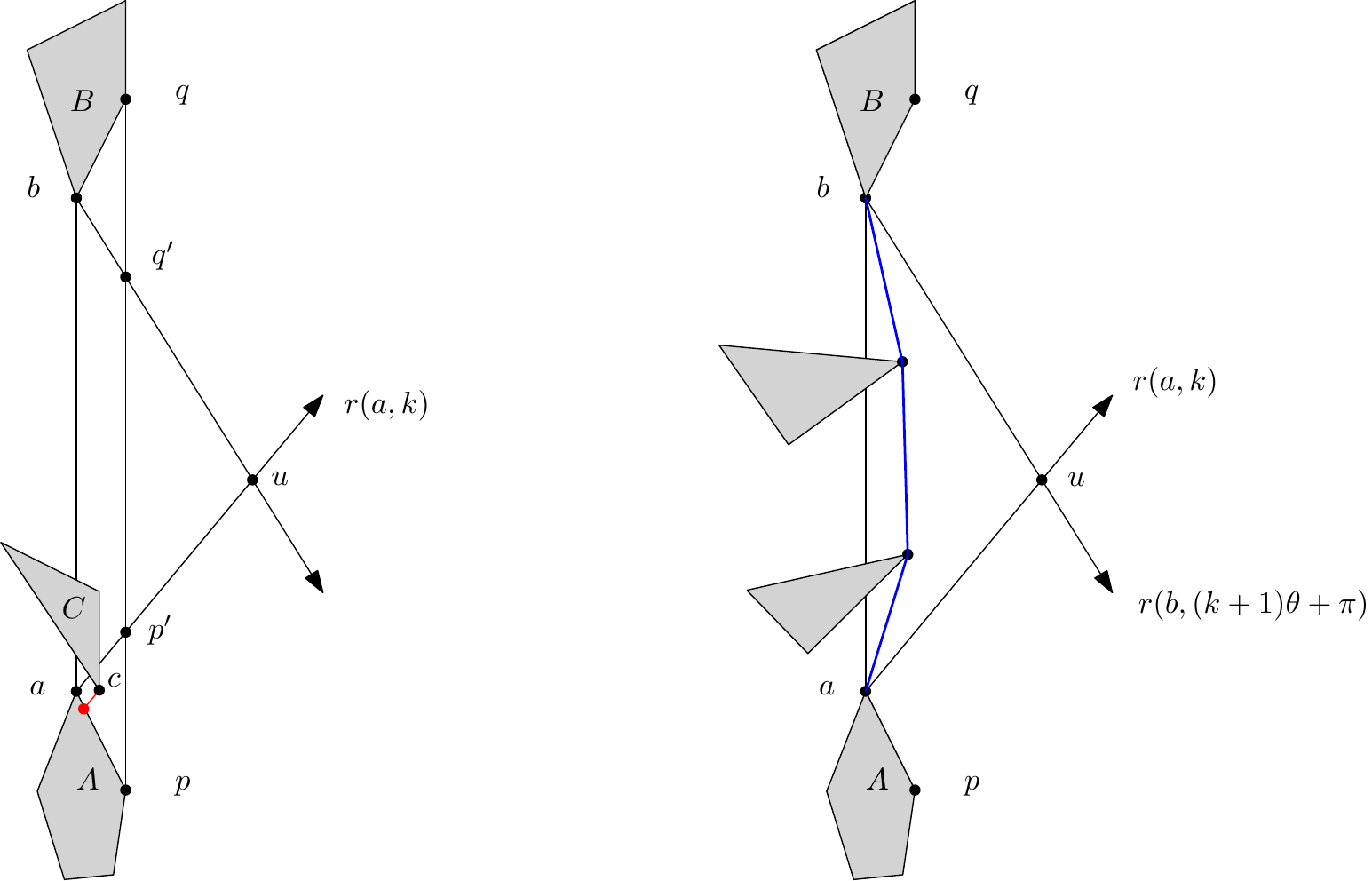}
        \caption{In the left figure, if an obstacle $C$ overlaps $\segment{ab}$ and $r(a, k)$, then there must exists a point $c \in C$ lying in $\triangle{app'}$. There exists a propagated sample point (red) that is closer to $p$ than $a$ is. Otherwise, in the right figure, there exists a convex path $P$ (shown in blue) from $a$ to $b$, and $\abse{P}$ is at most $\abse{au} + \abse{ub}$.}
        \label{fig:obstacle_to_obstacle}
    \end{figure}

    We first observe that if $A$ and $B$ are connected using the trapezoidal map, then both tangent sample points $p$ and $q$ were added by construction. Clearly, $\simpdist{p, q} = \abse{pq} \leq (1/\cos(\theta)) \cdot \abse{pq}$. Therefore, we consider the scenario when $A$ and $B$ are not connected using the trapezoidal map. Without loss of generality, let $\segment{pq}$ lie on the $y$-axis with $y(q) > y(p)$, and let $\segment{pq}$ lie between $r(p, k)$ and $r(p, k + 1)$. 
    
    If both $A$ and $B$ lie on the same side of the line through $\segment{pq}$, we assume that without loss of generality, they lie on the left side (see Figure~\ref{fig:obstacle_to_obstacle}). Let $a$ be the closest sample point of $p$, and let $b$ be the closest sample point of $q$, such that $y(b) < y(q)$ and $y(a) > y(p)$. Observe that $a$ must be above $r(p, k + 1)$, and similarly, $b$ must be below $r(q, k\theta + \pi)$. 
    
    Since $a$ is the closest sample point to $p$ and $b$ is the closest sample point of $q$, we have that $\simpdist{p, a} = \abse{pa}$ and $\simpdist{b, q} = \abse{bq}$. If $\segment{ab}$ overlaps no obstacles, $\simpdist{a, b} \leq \abse{ab}$. Since we do not need to take a detour from $a$ to $b$, it is sufficient to find an upperbound on $\abse{pa} + \abse{ab} + \abse{bq}$. Therefore, we focus on the worse case where $\segment{ab}$ overlaps at least one obstacle. Let $u$ be the intersection of $r(a, k)$ and $r(b, (k + 1)\theta + \pi)$. It remains true that if an obstacle $C$ overlaps $\segment{ab}$ and $r(a, k)$ simultaneously, there must exist a propagated sample point on $A$ that is closer to $p$ than $a$ is. Therefore the arguments in Lemma~\ref{lem:double_rotation_with_obstacle} apply, and $\simpdist{a, b} \leq \abse{au} + \abse{ub}$. 

    \begin{figure}[tbh]
        \centering
        \includegraphics[scale=0.75]{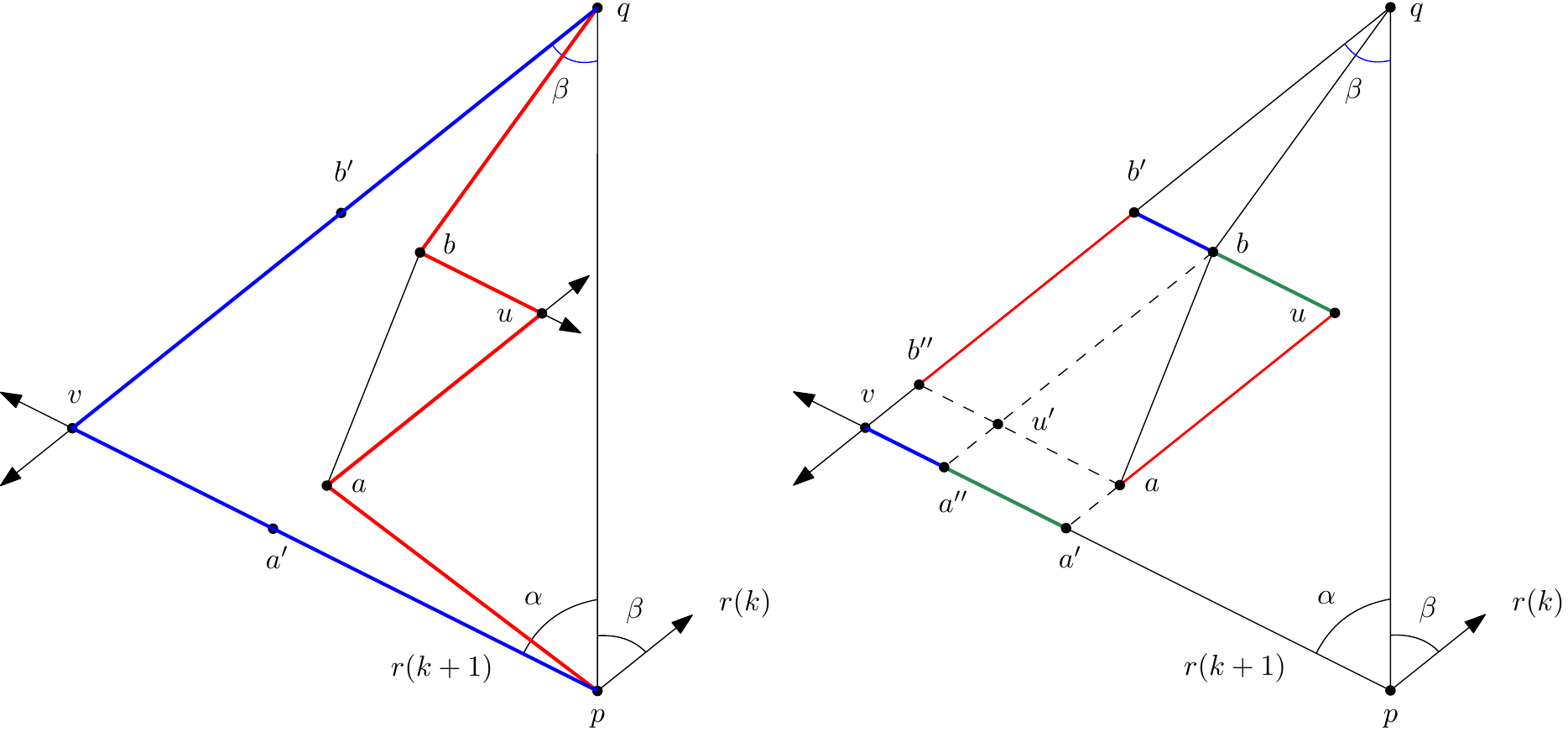}
        \caption{In the left figure, $\simpdist{a, b}$ is maximised when $a$ and $b$ lie on $\segment{pv}$ and $\segment{qv}$, respectively, and $\abse{pa} + \simpdist{a, b} + \abse{qb}$ (red) is at most $\abse{pv} + \abse{qv}$ (blue). In the right figure, segments with the same color have equal lengths.}
        \label{fig:obstacle_to_obstacle_bound}
    \end{figure}

     Let $v$ be the intersection of $r(p, k + 1)$ and $r(q, k\theta + \pi)$ (see Figure~\ref{fig:obstacle_to_obstacle_bound}, left). We argue that $\abse{pa} + \abse{au} + \abse{ub} + \abse{bq}$ is maximised when $a$ lies on $r(p, k + 1)$ and $b$ lies on $r(q, k\theta + \pi)$. Then, when both $a$ and $b$ lie on their respective rays, we show that the total length $\abse{pa} + \abse{au} + \abse{ub} + \abse{bq}$ is at most $\abse{pv} + \abse{qv}$. To do this, we partition the latter and show that each partition (or two partitions combined) pays for a segment in the former.
    
    By constructing the sample points, $a$ and $b$ must reside in $\triangle{pqv}$. The segment $\segment{ab}$ must lie between the directions $r(k)$ and $r(k + 1)$, as otherwise, $A$ and $B$ are connected via the trapezoidal map. As a result, $y(b) > y(a)$, and $u$ is to the right of $\segment{ab}$. See Figure~\ref{fig:obstacle_to_obstacle_bound}, right for the following definitions. Let $u'$ be the intersection of $r(b, k\theta + \pi)$ and $r(a, k + 1)$. Observe that $\abse{bu'} = \abse{au}$ and $\abse{au'} = \abse{bu}$. Let $b'$ (resp. $b''$) be the intersection of $r(b, k + 1)$ (resp. $r(a, k + 1)$) and $\segment{qv}$. Both $b'$ and $b''$ are well-defined, since $r(b, k + 1)$, $r(a, k + 1)$, and $r(p, k + 1)$ are parallel. The points $a'$ and $a''$ are defined similarly. 

    Using the triangle inequality, we have that $\abse{qb} \leq \abse{qb'} + \abse{bb'}$. We observe that $\abse{b'b''} = \abse{au}$ and $\abse{bb'} = \abse{va''}$. Combining the above facts, if we focus on $\abse{qb} + \abse{bu}$, we have
    \begin{align*}
        \abse{qb} + \abse{bu} &\leq \abse{qb'} + \abse{b'b} + \abse{bu} && \triangleright \text{Triangle inequality} \\
            &= \abse{qb'} + \abse{a'v}. && \triangleright \text{$\lozenge{b'ua'v}$ is a parallelogram} \tag{1}             
    \end{align*}

    Similarly, 
    \begin{align*}
        \abse{pa} + \abse{au} &\leq \abse{pa'} + \abse{a'a} + \abse{au} \leq \abse{pa'} + \abse{b'v}. \tag{2}
    \end{align*}

    Combining (1) and (2), and the fact that $\simpdist{a, b} \leq \abse{au} + \abse{ub}$, we have the following.
    \begin{align*}
        \simpdist{p, a} + \simpdist{a, b} + \simpdist{b, q} &\leq \abse{pa} + \abse{au} + \abse{ub} + \abse{bq} \\
         &\leq \abse{pv} + \abse{qv} \\
         &\leq \sec\left(\frac{\theta}{2}\right) \cdot \abse{pq} && \triangleright \text{Lemma~\ref{lem:boundary_over_segment}}
    \end{align*}

    \begin{figure}[tbh]
        \centering
        \includegraphics[scale=0.75]{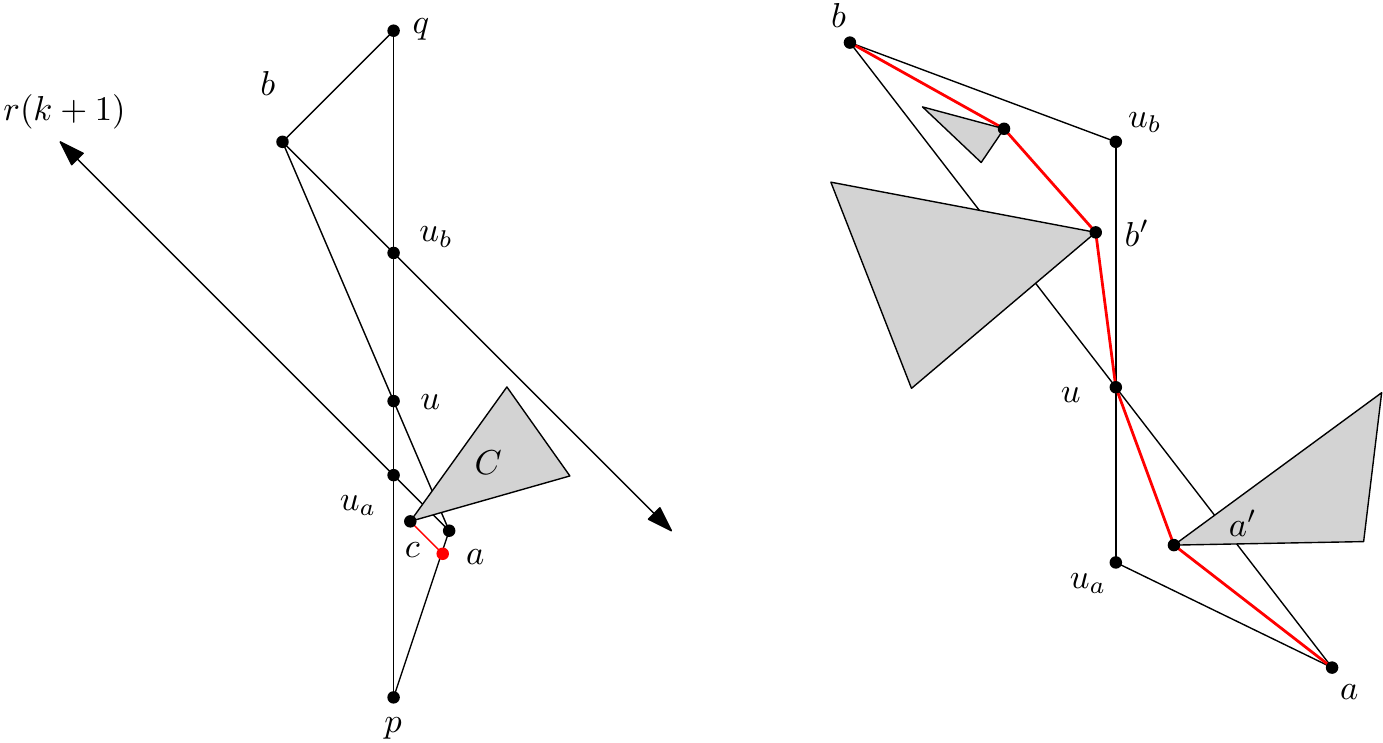}
        \caption{In the left figure, if an obstacle $C$ overlaps $\segment{ab}$ and $r(a, k + 1)$, then a propagated sample point (red) must be closer to $p$ than $a$ is. In the right figure, two convex paths (red) can be generated in $\triangle{auu_a}$ and $\triangle{buu_b}$.}
        \label{fig:obstacle_to_obstacle_opposite_side}
    \end{figure}
    
    Using analogous arguments, if $A$ and $B$ lie on the opposite sides of $\segment{pq}$ (see Figure~\ref{fig:obstacle_to_obstacle_opposite_side}, left), and an obstacle $C$ overlaps $\segment{ab}$, part of $C$ must lie in either triangle $\triangle{pau}$ or $\triangle{qbu}$ (not both). 
    
    Without loss of generality, assume that $C$ overlaps $\triangle{pau}$. If $C$ additionally overlaps $r(a, k + 1)$, a propagated sample point $d$ is closer to $q$ than $b$, contradicting the assumption that $b$ is the closest sample point. An analogous argument applies when $C$ overlaps $\triangle{qbu}$. Therefore if $C$ overlaps $\segment{ab}$, then $C$ must reside in the either $\triangle{a} = \triangle{auu_a}$ or $\triangle{b} = \triangle{buu_b}$, where $u_b$ (resp. $u_a$) is the intersection of $r(b, (k + 1)\theta + \pi)$ (resp. $r(a, k + 1)$) and $\segment{ab}$. 

    We can construct two convex paths $P_a$ and $P_b$ (see Figure~\ref{fig:obstacle_to_obstacle_opposite_side}, right); $P_a$ resides in $\triangle{a}$ and connects $a$ to $u$, and $P_b$ resides in $\triangle{b}$ and connects $b$ to $u$. By Lemma~\ref{lem:boundary_over_segment}, we have that 
    \begin{align*}
        \simpdist{a, b}  &\leq \abse{P_a} + \abse{P_a} \\
        &\leq \abse{au_a} + \abse{u_a u_b} + \abse{u_bb}. \tag{1}
    \end{align*}    

    \begin{figure}[tbh]
        \centering
        \includegraphics[scale=0.75]{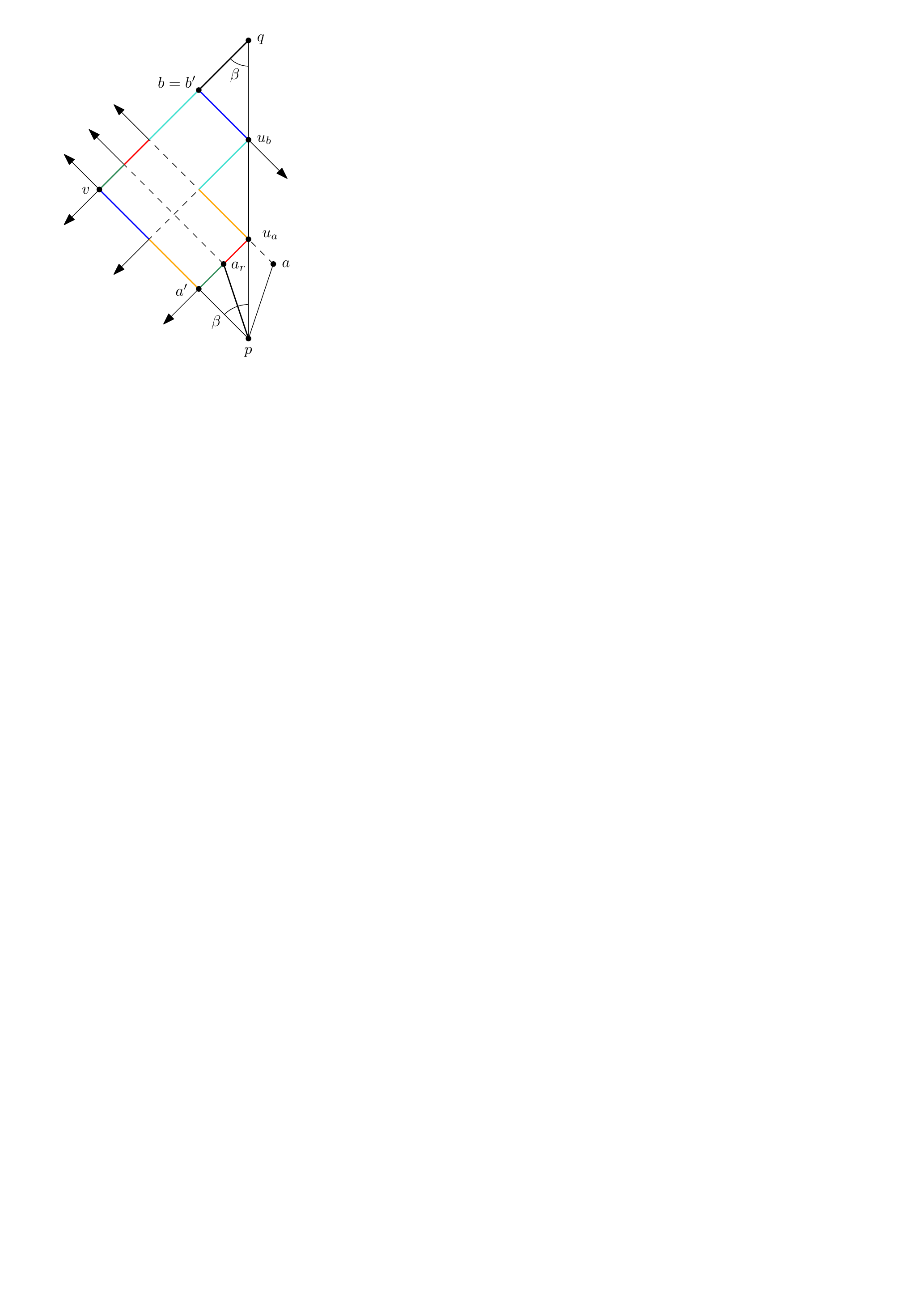}
        \caption{Segments with matching color have equal length. Every segment in $\{\segment{pa_r}, \segment{au_a}, \segment{u_a u_b}, \segment{u_bb}\}$ can be payed for by either segment(s) on $\segment{pv}$ or segment(s) on $\segment{qv}$.}
        \label{fig:obstacle_to_obstacle_bound_opposite}
    \end{figure}

    We now upperbound (1). Let $v$ be the intersection of $r(q, k\theta + \pi)$ and $r(p, k\theta + 2\beta)$ (see Figure~\ref{fig:obstacle_to_obstacle_bound_opposite}). Let point $a_r$ be the reflection of $a$ along $\segment{pq}$. Observe that since $a$ is above $r(p, k\theta)$, $a_r$ must be above $r(p, k\theta + 2\beta)$, and $\abse{pa_r} = \abse{pa}$. Let $a'$ (resp. $b'$) be the intersection of $r(a_r, k\theta + \pi)$ (resp. $r(b, k\theta + 2\beta)$.
    
    Using an analogous charging argument, we have that
    \begin{align*}
        \abse{au_a} + \abse{u_a u_b} + \abse{u_bb} \leq \abse{pv} + \abse{qv} \leq \frac{1}{\cos(\beta)} \cdot \abse{pq}.  
    \end{align*}

    Combining with the bound on $\simpdist{a, b}$ in the case that $a$ and $b$ lie on the same side of $\segment{pq}$, we have that
    \begin{align*}
        \simpdist{p, a} + \simpdist{a, b} + \simpdist{b, q} &= \abse{pa} + \simpdist{a, b} + \abse{bq} \\
        &\leq \max\left\{\sec\left(\frac{\theta}{2}\right), \frac{1}{\cos(\beta)} \right\} \cdot \abse{pq} \leq \frac{1}{\cos(\theta)} \cdot \abse{pq}  &\qedhere
    \end{align*}
\end{proof}

\subsubsection{Case 3: $\segment{pq}$ connects a 0-region and an obstacle} \label{sec:case3}
In this section, we prove that if $\segment{pq}$ connects an obstacle $A$ and a $0$-region $B$, there is a path $P$ that approximate $\segment{pq}$. 

\begin{lemma} \label{lem:0_to_obs}
    Let $A$ be a convex obstacle and let $B$ be a convex $0$-region. Let $\segment{pq} \subseteq P^*$, where $p \in \boundary{A}$ and $q \in \boundary{B}$. There exists a pair of sample points $a \in A$ and $b \in B$, such that $\simpdist{p, a} + \simpdist{a, b} + \simpdist{b, q} \leq \pqprimeoverpq \cdot \abse{pq}$.
\end{lemma}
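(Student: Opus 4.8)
The plan is to follow the template of Case~1 (Lemmas~\ref{lem:rotation_with_obstacle} and~\ref{lem:double_rotation_with_obstacle}), exploiting the fact that $\segment{pq}\subset l_t(p)$ is a supporting line of the obstacle $A$, which singles out a canonical safe rotation direction. First I dispose of the degenerate case: if $\segment{pq}$ is parallel to some $r(k)$, then $\segment{pq}\subset l_t(p)$ forces $p$ to be extreme in direction $r(k\theta\pm\pi/2)$ and $\segment{pq}\perp l_t(q)$ forces $q$ to be extreme in direction $r(k)$, so both are sample points; taking $a=p$ and $b=q$ gives $\simpdist{p,a}=\simpdist{b,q}=0$ and $\simpdist{a,b}=\simpdist{p,q}\le\abse{pq}\le\pqprimeoverpq\cdot\abse{pq}$ (the last step since $\pqprimeoverpq\ge 1$, as $\segment{pq}\subseteq P^*$ avoids obstacles). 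So assume $\segment{pq}$ lies strictly between $r(k)$ and $r(k+1)$, and let $\beta$ be the angle between $\segment{pq}$ and whichever of $r(k),r(k+1)$ lies on the same side of the line through $\segment{pq}$ as $A$ (this side is well defined because $l_t(p)$ supports $A$); this is the $\beta$ appearing in the bound.

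Rotating $\segment{pq}$ about $p$ away from the $A$-side direction aligns it with the other of $r(k),r(k+1)$ while keeping it disjoint from the interior of $A$. By the appropriately oriented version of Lemma~\ref{lem:cost_of_rotation} there is a point $q'$ with $\segment{pq'}$ parallel to that direction and $\abse{pq'}\le\pqprimeoverpq\cdot\abse{pq}$; moreover $q'$ lies on the sub-boundary of $B$ between $q$ and the sample point $b$ of $B$ extreme in the direction perpendicular to the $A$-side direction, because $\segment{pq}\perp l_t(q)$ places $q$ between two consecutive sample points of $B$ whose supporting directions straddle $l_t(q)$, and convexity of $B$ bounds the slide. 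Since $B$ is a $0$-region, $\simpdist{b,q}=0$, so it remains to exhibit a sample point $a$ of $A$ with $\simpdist{p,a}+\simpdist{a,b}\le\abse{pq'}$.

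Now I would split on whether $\segment{pq'}$ overlaps an obstacle other than $A$. If it does not, slide $\segment{pq'}$ rigidly in the rotated direction $r(k')$, moving its endpoints along $\boundary{A}$ and $\boundary{B}$, until it coincides with one of the two vertical segments bounding the face of $\map(k')$ that contains it; the endpoints $a\in A$, $b\in B$ of that segment are sample points, $(a,b)\in\edges$ with weight $\abse{ab}$, and a monotonicity argument as in the sliding step of Lemma~\ref{lem:tmap} (the arc of $\boundary{\simple{A}}$ swept by $p$ together with $\segment{ab}$ is no longer than $\abse{pq'}$) gives $\simpdist{p,a}+\simpdist{a,b}\le\abse{pq'}$. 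If $\segment{pq'}$ does overlap other obstacles, reuse the sweep-ray construction of Lemma~\ref{lem:rotation_with_obstacle}: sweep a ray out of $p$ that successively catches sample points on the obstacle boundaries it would otherwise pierce, and symmetrically out of $q'$, producing a chain of edges of $\edges$ from a sample point $a$ of $A$ to a sample point $b$ of $B$; since the accumulated sweep angle on each side is at most $\theta-\beta$, projecting the chain onto the line through $\segment{pq'}$ and applying the triangle inequality, exactly as in Lemmas~\ref{lem:rotation_with_obstacle} and~\ref{lem:double_rotation_with_obstacle}, bounds its length by $\abse{pq'}\le\pqprimeoverpq\cdot\abse{pq}$. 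In both cases the cost $\simpdist{p,a}$ of snapping the tangency point to a sample point of $A$ is absorbed into the produced path.

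The step I expect to be the main obstacle is this last one: charging the detour around the \emph{other} obstacles to $\abse{pq'}$ while \emph{simultaneously} snapping $p$ to a sample point of the obstacle $A$ — something that did not arise in Case~1 because there both endpoints lay on $0$-regions, so no boundary-walking along an obstacle was required. Ensuring that the sweep from $p$ and the sweep from $q'$ meet correctly when several obstacles thread between $A$ and $B$, and that the resulting chain is realized by edges already present in $\edges$ (original, propagated and tangent sample points all contributing), is the delicate part; I would handle it with the same device already used for Lemma~\ref{lem:rotation_with_obstacle} — project onto the line through $\segment{pq'}$ and use that the two accumulated sweep angles sum to at most $\theta-\beta$ — now anchored at a sample point of $A$ at one end and of $B$ at the other.
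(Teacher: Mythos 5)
Your approach departs from the paper's: you rotate $\segment{pq}$ to $\segment{pq'}$ (in the direction away from $A$, a nice observation) and then try to reuse the slide from Lemma~\ref{lem:tmap} or the sweep-ray construction from Lemma~\ref{lem:rotation_with_obstacle}. The paper instead fixes $a$ as the first sample point of $A$ counter-clockwise after $p$ and $b$ as the first sample point of $B$ clockwise after $q$, and then bounds $\abse{pa}+\simpdist{a,b}$ directly by a charging argument, using the ``closest sample point'' property to rule out interfering obstacles via propagated sample points.

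The gap you flagged is real and is not resolved by your sketch. The sweep-ray construction of Lemma~\ref{lem:rotation_with_obstacle} produces a chain from a sample point $b_{l'}\in\boundary A$ to a sample point $a_l\in\boundary B$ whose total length already equals the \emph{entire} budget $\abse{pq'}\le\pqprimeoverpq\cdot\abse{pq}$: the bound comes from projecting the chain onto the rotated direction and charging it against the full span of $\segment{pq'}$. There is nothing left over to pay for $\simpdist{p,b_{l'}}$, the walk along $\boundary A$ from the tangent point $p$ to the sample point $b_{l'}$, which is now a genuine positive cost (unlike in Case~1 where both endpoints were on $0$-regions). Saying that this cost ``is absorbed into the produced path'' or that the chain is ``anchored at a sample point of $A$'' does not accomplish anything: the chain being anchored at $b_{l'}$ is precisely what leaves $\simpdist{p,b_{l'}}$ unaccounted for. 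You would need a sharper version of the sweep bound that simultaneously covers the arc from $p$ to the first sample point of $A$ -- and it is not evident that the projection argument gives that for free.

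The paper avoids this problem from the outset. By choosing $a$ and $b$ as the nearest sample points in the relevant angular wedges, it can argue (via propagated sample points) that either $\segment{ab}$ is already obstacle-free, or the detour around an interfering obstacle would create a propagated sample point strictly closer to $p$ or to $q$, a contradiction. Then $\abse{pa}+\abse{ab}$ (or $\abse{pa}$ plus the detour) is bounded against $\abse{pq'}$ by a right-triangle/parallelogram charging argument, with $\abse{pa}$ explicitly included in the charge. In other words, the paper keeps $p\to a$ and $a\to b$ \emph{together} in a single geometric bound rather than trying to bound the $a\to b$ piece by the whole $\abse{pq'}$ and then attach $p\to a$ afterwards. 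If you want to make your version go through, you would have to redo the sweep-projection estimate so that it charges against $\abse{pq'}-\simpdist{p,b_{l'}}$ rather than $\abse{pq'}$; as written, the proposal does not establish the claimed inequality.
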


\begin{proof}
    \begin{figure}[tbh]
        \centering
        \includegraphics[scale=0.75]{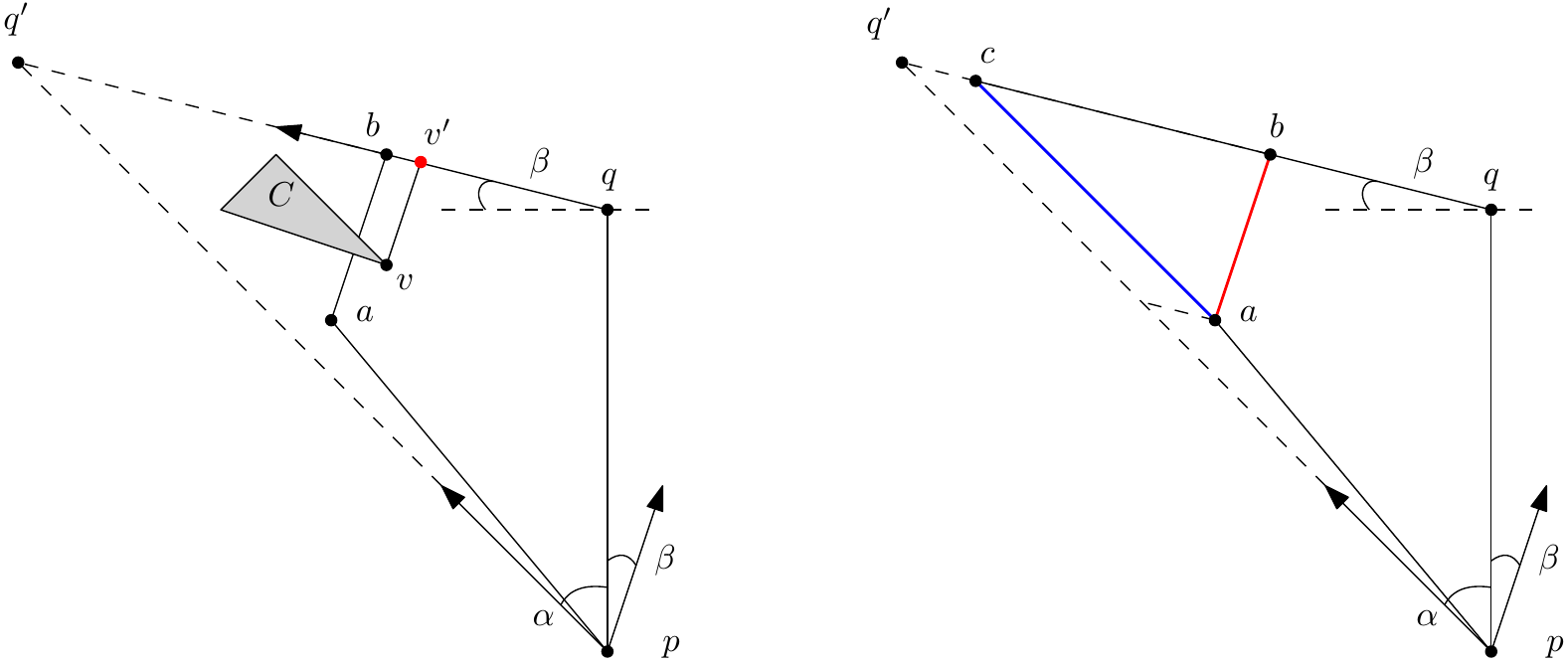}
        \caption{In the left figure, $\segment{ab}$ is parallel to the direction $r(k)$. An obstacle $C$ overlapping $\segment{ab}$ generates a propagated sample point $v'$ (red) closer to $q$ than $b$. In the right figure, $\abse{pa} + \abse{ab}$ is at most $\abse{pq'}$ since $\segment{ab}$ (red) is at most as long as $\segment{ac}$ (blue), and $\abse{pa} + \abse{ac} \leq \abse{pq'}$.} 
        \label{fig:0toobstacles}
    \end{figure}
    
    Let $\segment{pq}$ be aligned with the $y$-axis, such that $y(q) > y(p)$. If $\segment{pq}$ is in some direction $r(k)$, then both $p$ and $q$ are sample points, and $(p, q) \in \edges$. Therefore, we consider the case where $\segment{pq}$ is between direction $r(k)$ and $r(k + 1)$. Let $\beta$ (resp. $\alpha$) be the acute angle between $r(p, k)$ (resp. $r(p, k + 1)$) and $\segment{pq}$ (see Figure~\ref{fig:0toobstacles}). 
    
    Let $a \in \boundary{A}$ be the first sample point in a counter-clockwise order after $p$. Let $b \in \boundary{B}$ be the first sample point in a clockwise order after $q$.  Let $q'$ be the intersection of $r(p, k + 1)$ and $r(q, k\theta + \pi/2)$. By convexity and the placement of our sample points, $a$ must be above $r(p, k + 1)$, and $b$ must lie below $r(q, k\theta + \pi/2)$.

    If $\segment{ab}$ is parallel with $r(k)$, then $\segment{ab}$ cannot overlap any obstacle $C$. Otherwise, there exists a sample point $v \in \boundary{C}$ in the direction of $r(k\theta - \pi/2)$ (see Figure~\ref{fig:0toobstacles}, left). Since $v$ must lie on or to the left of $\segment{pq}$, $r(v, k)$ generates a propagated sample point $v'$ on $\boundary{B}(b, q)$, and $v'$ is closer to $q$ than $b$ is, contradicting the assumption that $b$ is the closest sample point. Therefore, $\simpdist{p, a} + \simpdist{a, b} \leq \abse{pa} + \abse{ab}$. 

    To bound $\abse{pa} + \abse{ab}$, we observe that $\abse{pa} + \abse{ab}$ is maximised when $b$ lies on $r(q, k\theta + \pi/2)$ (see Figure~\ref{fig:0toobstacles}, right). Let $c$ be the intersection of $r(a, k + 1)$ and $r(q, k\theta + \pi/2)$. Observe that $\segment{ab}$ is perpendicular to $\segment{pq'}$. We have that $\abse{ab} \leq \abse{ac}$, since $\segment{ab}$ is a leg of the right triangle $\triangle{abc}$, and $\segment{ac}$ is the hypotenuse. Using the arguments in Lemma~\ref{lem:0_to_0_with_obstacles}, $\abse{pa} + \abse{ac} \leq \abse{pq'} = \pqprimeoverpq \cdot \abse{pq}$. We thus have a path from $p$ to $q$ via $a$ and $b$ of length 
    \begin{align*}
        \abse{pa} + \abse{ab} + \simpdist{b, q} = \abse{pa} + \abse{ab} \leq \abse{pa} + \abse{ac} \leq \abse{pq'} \leq \pqprimeoverpq \cdot \abse{pq}. 
    \end{align*}

    \begin{figure}[tbh]
        \centering
        \includegraphics[scale=0.75]{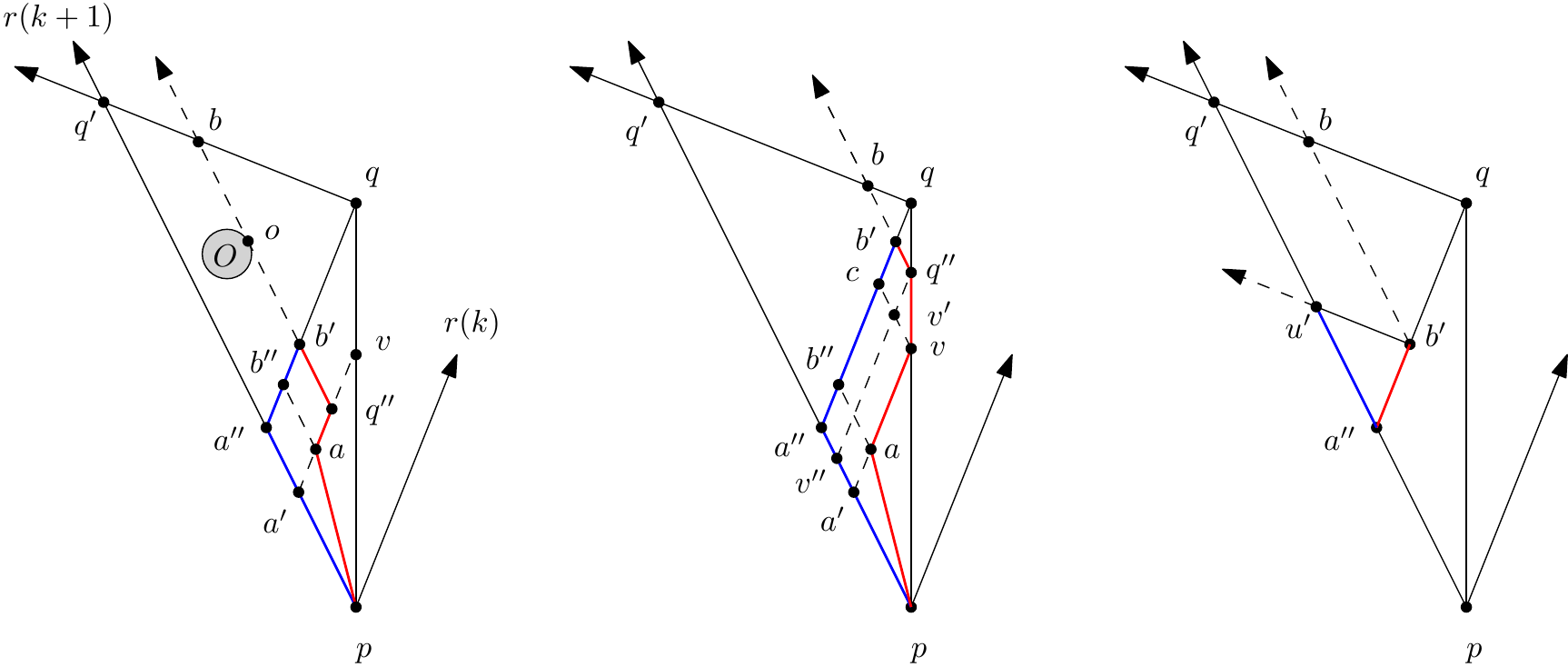}
        \caption{In the left and middle figure, the length of both red paths are upperbounded by the length of the blue path. In the right figure, $\abse{a''u'} > \abse{a''b'}$ by construction.}
        \label{fig:0_to_obstacle_bound}
    \end{figure}
    
    If $\segment{ab}$ is not parallel with $r(k + 1)$, the ray $r(a, k)$ must intersect $\segment{pq}$ at some point $v$ ($a$ does not propagate to $b$). For the same reasons as before, an obstacle overlaps neither $r(a, k)$ nor $\segment{pq}$. Let $q'' = q$. Consider sliding $q''$ towards $a$ along the concatenated polygonal chain $\segment{qv} \circ \segment{va}$ until the ray $r(q'', k + 1)$ touches $b$ (see Figure~\ref{fig:0_to_obstacle_bound}). Observe that $r(q'', k + 1)$ cannot intersect an obstacle during the slide. Indeed, if $r(q'', k + 1)$ touches an obstacle $O$ at point $o$, $o$ must be a sample point with respect to the direction $r((k + 1)\theta - \pi/2)$, and $o$ propagates to $b$ with the ray $r(o, k + 1)$. Let $a'$ (resp. $a''$) be the point on $\segment{pu}$ such that $\segment{a'v}$ (resp. $\segment{a''q}$) is parallel to $r(k)$. Let $b'$ be the intersection of $r(q'', k + 1)$ and $\segment{a''q}$. Let $b''$ be the point on $\segment{a''q}$ such that $\segment{ab''}$ is parallel to $r(k + 1)$. 

    When we stop the sliding, either 1) $q''$ is on $\segment{av}$, or 2) $q''$ is on $\segment{qv}$ (see Figure~\ref{fig:0_to_obstacle_bound}, left and middle, respectively). In order to avoid obstacles, we have to take a detour from $a$ to $b$. In both cases, we argue that the length of the detour is at most $\abse{pa''} + \abse{a''b'} + \abse{b'b}$. In case~1, the maximum length of the detour we must take is $d_1 = \abse{pa} + \abse{aq''} + \abse{q''b}$. We use a charging argument: $\abse{b'b''}$ pays for $\abse{aq''}$, and $\abse{a'a''}$ pays for $\abse{b'q''}$. $\abse{aa'} = \abse{b''a''}$, and by triangle inequality, $\abse{pa} \leq \abse{pa'} + \abse{a'a}$. Therefore, $d_1 \leq \abse{pa''} + \abse{a''b'} + \abse{b'b}$. 

    In case~2, the maximum length of the detour is $d_2 = \abse{pa} + \abse{av} + \abse{vq''} + \abse{q''b}$. Let $v''$ be the point on $\segment{pu}$ such that $\segment{v''q''}$ is parallel to $r(k)$. Let $v'$ be the intersection of $r(v, k + 1)$ and $\segment{v''q''}$. Let $c$ be the point on $\segment{b'b''}$ such that $\segment{vc}$ is parallel to $r(k + 1)$. Using analogous argument, $\abse{a''v''}$ pays for $\abse{b'q''}$, and $\abse{b''v''}$ pays for $\abse{av}$. By triangle inequality, $\abse{a''b''} + \abse{pa'}$ pays for $\abse{ap}$, and $\abse{b'v''} + \abse{a'v''}$ pays for $\abse{vq''}$. For both cases, we have that
    \begin{align*}
        \abse{pa} + \simpdist{a, b} \leq \abse{pa''} + \abse{a''b'} + \abse{b'b}. \tag{1}
    \end{align*}

    Next, we argue that $\abse{pa''} + \abse{a''b'} + \abse{b'b}$ is at most $\abse{pu}$ (see Figure~\ref{fig:0_to_obstacle_bound}, right). Let $u'$ be the point on $\segment{pu}$ such that $\segment{b'u'}$ is parallel to $\segment{qu}$. By construction, $\segment{pu}$ is parallel to $r(k\theta + \pi/2)$, and $\segment{a''b'}$ is parallel to $r(k\theta)$. The triangle $\triangle{b'u'a''}$ is therefore a right triangle with $\segment{a''u'}$ as the hypotenuse. $\abse{bb'} = \abse{uu'}$, and therefore we have that
    \begin{align*}
        \abse{pa''} + \abse{a''b'} + \abse{b'b} < \abse{pa''} + \abse{a''u'} + \abse{u'u} = \abse{pu}. \tag{2}
    \end{align*}
    
    Combining (1), (2), Lemma~\ref{lem:cost_of_rotation}, and the fact that $\segment{bq}$ lies in a 0-region, we complete the proof with the following.
    \begin{align*}
        \simpdist{p, a} + \simpdist{a, b} + \simpdist{b, q} = \abse{pa} + \simpdist{a, b} \leq \abse{pu} \leq \pqprimeoverpq \cdot \abse{pq} &\qedhere
    \end{align*}
\end{proof}

\subsection{The quality of the path} \label{sec:01infty_quality_of_path}
In Lemma~\ref{lem:0_to_0_with_obstacles}, \ref{lem:obs_to_obs}, and \ref{lem:0_to_obs}, we have shown that for every segment $\segment{pq}$ in Case~1-3 in Fact~\ref{fac:types_of_edges}, either there exists a path $P \subseteq \edges$ such that $\weight(P)$ approximates $\abse{pq}$, or there exist two sample points $a$ and $b$ such that $\simpdist{p, a} + \simpdist{a, b} + \simpdist{b, q}$ approximates $\abse{pq}$. Taking the maximum ratio in the three lemmas, we have the following.

\begin{align*}
    \simpdist{p, a} + \simpdist{a, b} + \simpdist{b, q} &\leq \max\left\{\pqprimeoverpq, \pprimeqprimeoverpq, \frac{\sin(\alpha) + \sin(\beta)}{\cos(\theta)\sin(\theta)}\right\} \cdot \abse{pq} \\
    &\leq \frac{2\sin(\frac{\theta}{2})}{\cos(\theta)\sin(\theta)} \cdot \abse{pq}
\end{align*}

For a Case~4 segment $\boundary{A}(p, q)$, where both $p$ and $q$ lies on the obstacle $A$, assume without loss of generality that $p$ occurs before $q$ in $P^*$, and let $\segment{pq} = \boundary{A}(p, q)$. 

If $\boundary{A}(p, q)$ contains no sample point, then assume that the optimal path uses segment $\segment{p'p}$ to reach $A$, and $\segment{qq'}$ to leave $A$. We argue that there exists an approximate path $P$ that approximates $\abse{p'p} + \abse{\boundary{A}(p, q)} + \abse{qq'}$. Let $a$ (resp. $b$) be the closest sample point to $p$ (resp. $q$), such that $\boundary{A}(p, q) \subseteq \boundary{A}(a, b)$. In Lemma~\ref{lem:0_to_0_with_obstacles}, \ref{lem:obs_to_obs}, and \ref{lem:0_to_obs}, we have payed for a path $P_p \subseteq \edges$ from $p'$ to $p$ though $a$ and a path $P_q \subseteq \edges$ from $q$ to $q'$ through $b$. Since there is no sample point on $\boundary{A}(p, q)$, instead of going from $a$ to $p$ and $q$ to $b$, we take the path $\segment{ab}$ directly. The unused cost of $\simpdist{a, p}$ and $\simpdist{q, b}$ pays for $\abse{ab}$. 

Let $p_\perp$ (resp. $q_\perp$) be the orthogonal projection of $p$ (resp. $q$) on $\segment{ab}$. Clearly, $\abse{\boundary{A}(a, p)} \geq \abse{ap_\perp}$ and $\abse{\boundary{A}(q, b)} \geq \abse{bq_\perp}$. By Lemma~\ref{lem:boundary_over_segment}, $\abse{\boundary{A}(p, q)} \leq \boundaryoversegment \cdot \abse{p_\perp q_\perp}$. Therefore, we connect $P_p$ and $P_q$ using $\segment{ab}$ to generate a path $P$, and we have that
\begin{align*}
    \abse{P} \leq \frac{2\sin(\frac{\theta}{2})}{\cos(\theta)\sin(\theta)} \cdot (\abse{p'p} + \abse{\boundary{A}(p, q)} + \abse{qq'}).
\end{align*}

If $\boundary{A}(p, q)$ contains at least one sample point $\{a, ..., b\}$, then by Lemma~\ref{lem:boundary_over_segment}, we have that
\begin{align*}
    \simpdist{p, a} + \simpdist{a, b} + \simpdist{b, q} \leq \sec\left(\frac{\theta}{2}\right) \cdot \abse{pq}.
\end{align*}

Bose and van Renssen~\cite{boseSpanningPropertiesYao2019} showed that in an environment with polygonal obstacles, a $\Theta$-graph has a spanning ratio of at most $r_\theta = 1 + 2\sin(\theta/2)/(\cos(\theta/2) − \sin(\theta/2))$. We also need to apply the factor to traverse the boundaries of convex obstacles to account for the difference compared to the boundaries of simplified obstacles, as in Lemma~\ref{lem:boundary_over_segment}. We obtain the following bound when $\theta < \pi/12$.
\begin{align*}
     \frac{2\sin(\frac{\theta}{2})}{\cos(\theta)\sin(\theta)} \cdot \left(1 + \frac{2\sin(\frac{\theta}{2})}{\cos(\frac{\theta}{2}) − \sin(\frac{\theta}{2})}\right) \cdot \frac{2\sin(\frac{\theta}{2})}{\sin(\theta)} \leq \frac{1}{1 - \sin(2\theta)}
\end{align*}

Given an approximation error $0 < \e < 1$, we compute the parameter $0 < \theta < \pi/12$ as the following.
\begin{align*}
    \frac{1}{1 - \sin(2\theta)} \leq 1 + \e \implies \theta \leq \frac{\sin^{-1}(\frac{\e}{1 + \e})}{2}
\end{align*}

\begin{lemma}
    In $\bhopper$, there exists a path $P \subseteq \edges$ between any pair of sample points $(a, b)$ such that $\weight(P) \leq (1 + \e) \cdot \distance(a, b)$. 
\end{lemma}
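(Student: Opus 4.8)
The plan is to assemble the already-proved case analysis into a single inductive argument along the optimal path $P^*$ between two sample points $a$ and $b$. First I would invoke Fact~\ref{fac:types_of_edges} to decompose $P^*$ into maximal segments, each falling into one of the four types (two $0$-regions, two obstacles, a $0$-region and an obstacle, or a boundary arc of a single obstacle), together with the zero-cost portions that run inside $0$-regions. Since travelling inside a $0$-region is free, it suffices to bound the total weight contributed by the non-$0$-region segments. For each such segment $\segment{pq}$, Lemma~\ref{lem:0_to_0_with_obstacles}, Lemma~\ref{lem:obs_to_obs}, Lemma~\ref{lem:0_to_obs}, or the Case~4 discussion in Section~\ref{sec:01infty_quality_of_path} supplies a pair of sample points $a_{pq}\in A$ and $b_{pq}\in B$ with $\simpdist{p,a_{pq}}+\simpdist{a_{pq},b_{pq}}+\simpdist{b_{pq},q}\le \rho\cdot\abse{pq}$, where $\rho = 2\sin(\theta/2)/(\cos(\theta)\sin(\theta))$ is the common worst-case ratio collected at the start of Section~\ref{sec:01infty_quality_of_path}.

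The key technical point is to turn these per-segment bounds on the simplified-obstacle distance $\simpdist{\cdot,\cdot}$ into an actual path $P\subseteq\edges$ in $\graph$. For this I would argue in two stages. Stage one: the consecutive sample points $\ldots,b_{pq},a_{q r},\ldots$ picked for adjacent segments of $P^*$ lie on the same region (the region shared by the two segments at their common vertex), so they are joined in $\graph$ at zero cost either via the anchor edges (Step~6, for $0$-regions) or via the boundary edges between adjacent sample points (Step~5, for obstacles; here Lemma~\ref{lem:boundary_over_segment} controls the extra $\sec(\theta/2)$ factor already folded into $\rho$). Hence chaining the per-segment sample-point pairs yields a well-defined walk from $a$ to $b$ whose total cost is at most $\rho\cdot\weight(P^*)=\rho\cdot\distance(a,b)$ \emph{measured in the simplified environment}. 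Stage two: realise each $\simpdist{a_{pq},b_{pq}}$ sub-path by an honest path in $\graph$. When $A$ and $B$ are connected through a face of some $\map(k)$, the relevant trapezoidal-map edge (Step~3b for two $0$-regions, Step~3c for the vertical-segment edges, Step~3d's tangent sample points feeding Step~4) is present with the correct weight. When they are far apart, $\simpdist{a_{pq},b_{pq}}$ is witnessed by a $\distance(\cdot,\cdot)$-shortest path amidst simplified obstacles, and Corollary~\ref{cor:boundarydist_over_simplified_obstacles} together with Clarkson's obstacle $\Theta$-graph (Step~4) gives a path in $\edges_\Theta\subseteq\edges$ of length at most $r_\theta\cdot\sec(\theta/2)\cdot\simpdist{a_{pq},b_{pq}}$, the spanning ratio of the obstacle $\Theta$-graph quoted from Bose--van Renssen.

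Finally I would multiply the accumulated factors: $\rho$ from snapping $P^*$ to sample points, times $\sec(\theta/2)$ from passing from simplified obstacles back to the original obstacles, times the $\Theta$-graph spanning ratio $r_\theta = 1 + 2\sin(\theta/2)/(\cos(\theta/2)-\sin(\theta/2))$, which is exactly the triple product bounded above by $1/(1-\sin(2\theta))$ in Section~\ref{sec:01infty_quality_of_path}; choosing $\theta\le \tfrac12\sin^{-1}(\e/(1+\e))$ as there makes this at most $1+\e$. The main obstacle is the bookkeeping in Stage one of the second paragraph: one must check that the sample points handed back by the three case lemmas for two segments meeting at a vertex $v$ really do lie on a common region and are therefore connected for free (or nearly free) in $\graph$, so that concatenation does not introduce an uncontrolled gap — in particular handling the boundary-arc (Case~4) segments, where the "unused" budget $\simpdist{a,p}+\simpdist{q,b}$ must be shown to pay for the shortcut $\segment{ab}$, exactly as sketched in Section~\ref{sec:01infty_quality_of_path}. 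Everything else is a matter of citing the stated lemmas and combining constants.
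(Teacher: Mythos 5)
Your proposal follows the paper's own high-level argument for this lemma, which is really just the conclusion of the discussion in Section~\ref{sec:01infty_quality_of_path}: decompose $P^*$ along Fact~\ref{fac:types_of_edges} into the four segment types, apply Lemmas~\ref{lem:0_to_0_with_obstacles}, \ref{lem:obs_to_obs}, \ref{lem:0_to_obs} and the Case~4 paragraph to snap each segment to sample points in the simplified environment with ratio $2\sin(\theta/2)/(\cos\theta\sin\theta)$, then fold in the $\Theta$-graph spanning ratio $r_\theta$ and the $\sec(\theta/2)$ simplified-to-original boundary factor, arriving at the triple product $\le 1/(1-\sin(2\theta)) \le 1+\e$. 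The only organizational difference is that you apply the $\Theta$-graph ratio per segment rather than once globally, which yields the same bound; and you flag the concatenation bookkeeping (consecutive segment endpoints landing near adjacent sample points on a shared region) as a loose end, which the paper likewise leaves implicit, resolving it through the anchor/boundary edges and the Case~4 argument exactly as you indicate.
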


\subsection{Finding a shortest path amidst $0$-regions and obstacles}
Given the data structure $\bhopper = \{\map(k) \mid \forall k \in [0, 2\pi/\theta), k \in \mathbb{Z} \} \cup \{\graph, \graph_\Theta\}$, a point $s$, and a point $t$, we query the approximate shortest path from $s$ to $t$ using Algorithm~\ref{alg:01infty_query}.

\begin{algorithm} 
\caption{Query $s$-$t$ weighted shortest path amidst 0-regions and obstacles} \label{alg:01infty_query}
\ \\
This algorithm takes as input a data structure $\bhopper = \{\map(k) \mid \forall k \in [0, 2\pi/\theta), k \in \mathbb{Z} \} \cup \{\graph, \graph_\Theta\}$ storing a set of 0-regions and a set of obstacles, a point $s$, and a point $t$. It outputs an $(1 + \e)$-approximated weighted shortest path from $s$ to $t$. In Step~2 and Step~3, this algorithm shows how to add $s$ to $\bhopper$, and the same operations are used to add $t$.

\begin{enumerate}
    \item Add $s$ and $t$ to $\vertices$. 
    \item For each trapezoidal map $\map(k)$, do the following for point $s$.
    \begin{enumerate}
        \item Query the face $F$ containing $s$. Let $F$ be adjacent to $A$ and $B$, $A \neq B$. 
        \item Add the propagated sample points $a \in A$ and $b \in B$---which are generated by $r(s, k)$ and $r(s, k\theta + \pi)$, respectively---to $\vertices$.
        \item If $A$ is a $0$-region, add $e = (s, \anchor(A))$ and set $\weight(e) = \distance(s, A)$. Do the same for $B$.
        \item If $A$ is an obstacle, add $e = (a, s)$, and set $\weight(e) = \abse{as}$. Compute the common tangents $T$ of $s$ and $A$. For each common tangent $t \in T$ touching $A$ at point $a'$, add $a'$ to $\vertices$. Do the same for $B$. 
    \end{enumerate}
    \item Add $s$ and the additional sample points generated in Step~2 to $\graph_\Theta$. For each newly added point $s'$, using $s'$ as the apex, we construct a set of disjoint cones with angle $\theta$. For each point $p$ closest to $s'$ in each cone, add $e = (s', p)$ to $\edges$, and set $\weight(e) = \abse{s'p}$. For every existing vertex $p \in \graph_\Theta$, and every existing edge $(p, q) \in \edges_\Theta$. If $s'$ is closer to $p$ than $q$ is, add $e = (s', p)$ to $\edges$, and set $\weight(e) = \abse{s'p}$. 
    \item Use Dijkstra's shortest path algorithm to compute a path $P'$ from $s$ to $t$ in $\graph$. Transform $P'$ into a path $P$ in the original environment and return~$P$. 
\end{enumerate}
\end{algorithm}

Using the query algorithm, we treat both $s$ and $t$ as convex obstacles with no interior. This preserves the properties of the trapezoidal maps and the $\Theta$-graph, and enables us to apply the earlier lemmas. 

\subsubsection{Analysis}
We now analyse the query time. In Step~2, we perform a set of operations for each of the $O(1/\e)$ trapezoidal maps. In Step~2a, it takes $O(\log (n/\e))$ time to find the face containing~$s$. In Step~2b, it takes $O(\log N)$ time to use a binary search to compute the intersection of a ray and a convex boundary with $O(N)$ complexity. In Step~2c, it takes $O(\log N)$ time to compute the distance between $s$ and a convex region using the algorithm by Edelsrunner~\cite{edelsbrunnerComputingExtremeDistances1985}. In Step~2d, it takes $O(\log N)$ time to compute the common tangent using the algorithms by Kirkpatrick and Snoeink~\cite{kirkpatrickComputingCommonTangents1995}, and Guibas et al.~\cite{guibasCompactIntervalTrees1991}. In total, Step~2 takes $O((\log (n/\e) + \log N)/\e)$ time. 

Inserting $s$ into $\bhopper$ generates a constant number of sample points. In Step~3, for each additional sample point $s'$, it takes $O(n/\e^3)$ time to find the closest point of $s'$ in each cone, and at the same time, check if $s'$ is closest to any point $p$. In Step~4, it takes $O(\abs{\edges} + \abs{\vertices} \log\abs{\vertices})$ to run Dijkstra's shortest path algorithm. There are $O(n/\e^2)$ vertices, and $O(n/\e^3)$ edges, therefore Dijkstra's algorithm takes $O(n/\e^3 + (n/\e^2) \log (n/\e))$ time to return a path $P'$ comprised of at most $O(n/\e^3)$ edges. It takes $O(n/\e^3 + N)$ time to transform $P'$ into a path $P$ in the environment by traversing the boundaries of the regions. 

In total, it takes $O(\bhopperQueryTimeObstacles)$ time to query the approximate $s$-$t$ shortest path. Combining the above analysis with Lemma~\ref{lem:bhopper_construction_with_obstacles}, \ref{lem:0_to_0_with_obstacles}, \ref{lem:obs_to_obs}, and \ref{lem:0_to_obs}, we have the following.

\zeroOneInftySummarised*

\bibliographystyle{plain}
\bibliography{references} 

\end{document}